\newtheorem{theorem}{Theorem}
\newtheorem{corollary}[theorem]{Corollary}
\newtheorem{observation}[theorem]{Observation}
\newtheorem{definition}[theorem]{Definition}
\newtheorem{lemma}[theorem]{Lemma}
\newtheorem{problem}[theorem]{Problem}
\newtheorem*{problem*}{Problem} 
\newtheorem*{fact*}{Fact}
\newtheorem{claim}[theorem]{Claim}
\newcounter{example}[section]
\title{Competitive Equilibrium with Chores:\\ Combinatorial Algorithm and Hardness}
	\author{Bhaskar Ray Chaudhury\thanks{University of Illinois at Urbana-Champaign}\\ \texttt{\small braycha@illinois.edu} \and Jugal Garg\thanks{University of Illinois at Urbana-Champaign. Supported by NSF Grant CCF-1942321}\\ \texttt{\small jugal@illinois.edu} \and Peter  McGlaughlin \thanks{University of Illinois at Urbana-Champaign}\\ \texttt{\small mcglghl2@illinois.edu} \and Ruta Mehta\thanks{University of Illinois at Urbana-Champaign. Supported by NSF Grant CCF-1750436}\\ \texttt{\small rutameht@illinois.edu}}
\newcommand{\emax}{e_{\mathit{max}}}
\newcommand{\emin}{e_{\mathit{min}}}
\newcommand{\temax}{\tilde{e}_{\mathit{max}}}
\newcommand{\temin}{\tilde{e}_{\mathit{min}}}
\newcommand{\iter}{\mathit{iter}}
\newcommand\addtag{\refstepcounter{equation}\tag{\theequation}}
\newcommand{\MPB}{\mathit{MPB}}
\newcommand{\MBB}{\mathit{MBB}}
\newcommand{\M}{\mathbf{M}}
\newcommand{\A}{\overline{a}}
\newcommand{\NSW}{{\mathit{NSW}}}
\newcommand{\of}{\mathit{outflow}} 
\newcommand{\eA}{{$(1-\epsilon)$-approximate }}
\newcommand{\DG}{{{\mathcal D}}}
\begin{document}
	
\maketitle 
 \thispagestyle{empty}
\begin{abstract}
We study the computational complexity of finding a competitive equilibrium (CE) with chores when agents have linear preferences. CE is one of the most preferred mechanisms for allocating a set of items among agents. 
CE with equal incomes (CEEI), Fisher, and Arrow-Debreu (exchange) are the fundamental economic models to study allocation problems, where CEEI is a special case of Fisher and Fisher is a special case of exchange. When the items are goods (giving utility), the CE set is convex even in the exchange model, facilitating several combinatorial polynomial-time algorithms {(starting with the seminal work of Devanur, Papadimitriou, Saberi and Vazirani~\cite{DevanurPSV08})} for all of these models. In sharp contrast, when the items are chores (giving disutility), the CE set is known to be non-convex and disconnected even in the CEEI model. Further, no combinatorial algorithms or hardness results are known for these models.

In this paper, we give two main results for CE with chores: 
\begin{itemize}
\item A combinatorial algorithm to compute a $(1-\varepsilon)$-approximate CEEI in time $\tilde{\mathcal{O}}(n^4m^2 / \varepsilon^2)$, where $n$ is the number of agents and $m$ is the number of chores. 
\item PPAD-hardness of finding a $(1-1/\textup{poly}(n))$-approximate CE in the exchange model under a sufficient condition. 
\end{itemize}
To the best of our knowledge, these results show the first separation between the CEEI and exchange models when agents have linear preferences, assuming PPAD $\neq $ P. Furthermore, this is also the first separation between the two economic models when the CE set is non-convex in both cases.

{Finally, we show that our new insight implies a straightforward proof of the existence of an allocation that is both envy-free up to one chore (EF1) and Pareto optimal (PO) in the discrete setting when agents have factored bivalued preferences. This result is recently obtained in~\cite{GargMQ22,EbadianPS22} using an involved analysis.}
\end{abstract}

\section{Introduction}
\emph{Competitive equilibrium (CE)} theory has been one of the most fundamental concepts in mathematical economics for more than a century. Problems in this domain, study the pricing and allocation of resources to agents based on the interaction of demand and supply. 

The existence and computation of CE has been  extensively studied in several economic models. The two most fundamental economic models are the \emph{Arrow-Debreu (exchange)} model and the \emph{Fisher model}. The exchange model is like a barter system, where each agent comes with some initial endowment of items and exchanges them with other agents to maximize their utility. The goal is to determine the prices for the items such that (i) each agent gets her most preferred  affordable\footnote{Affordable in exchange of her initial endowment of items.} bundle of items and (ii) all items are completely allocated. The Fisher model is a special case of the exchange model where every agent owns a fixed amount of each good. This has been shown to be equivalent to the setting where each agent is endowed with some fixed amount of money (instead of items). A prominent special case of the Fisher model, which is also of high interest to the fair division community, is \emph{Competitive Equilibrium with Equal Income (CEEI)}, where each agent is endowed with equal amount of money.  

The items to be divided can be either goods (giving utility) or chores (giving disutility). We assume that agents have \emph{linear} utility functions, which are also very commonly used in applications.\footnote{Utility from a bundle $x=\langle x_1, x_2, \dots, x_m \rangle$ of items is defined as: $u(x) = \sum_j u_{j}x_{j}$, where $u_j$ is the utility from one unit of item $j$. Spliddit~(\url{www.spliddit.org}) is a user friendly online platform for computing fair allocation in a variety of problems, which has drawn tens of thousands of visitors in the last few years~\cite{GoldmanP14}. Spliddit uses additive preferences which are the parallel to linear utilities in the context of dividing indivisible goods.} 

\paragraph{CE with goods.} CE with goods has been well-studied in all models since the 1950s: Earlier work settled the existence of equilibrium~\cite{Gale76} and derived various convex programming and linear complementarity problem (LCP) formulations that capture the set of CE~\cite{EisenbergG59,nenakov83,cornet89,Eaves76}. As a result, there have been continuous optimization based algorithms (interior point~\cite{Jain07}, ellipsoid~\cite{Ye08}) for determining CE. Later on, faster algorithms have been obtained using combinatorial methods~\cite{DevanurPSV08, Orlin10, DuanM15, DuanGM16, GargV19}. In contrast to most continuous optimization based methods, the combinatorial algorithms simulate intuitive dynamics: they start with arbitrary prices of the goods and gradually adjust them according to demand and supply. The challenge lies in showing fast (polynomial time) convergence. Moreover, most of the combinatorial algorithms are flow-based algorithms and therefore provide more room for faster algorithms, given the deep understanding of flow based subroutines since the last four decades. Indeed, the fastest weakly polynomial time  and strongly polynomial time algorithms for finding CE in all models are combinatorial algorithms~\cite{Orlin10, DuanGM16, ChaudhuryM18, GargV19}.

\paragraph{CE with chores.} CE with chores turns out to be significantly more challenging. In contrast to the case with goods, where the CE set is convex even in the (most general) exchange model, the set of CE  with chores can be non-convex containing many disconnected sets in the CEEI model~\cite{BogomolnaiaMSY17}. This ``disconnectedness" may bring up several computational bottlenecks. In fact, ~\cite{BogomolnaiaMSY17} mention: \emph{``we expect computational difficulties in problems with many agents and/or items.''}
However, quite recently, ~\cite{BoodaghiansCM22} presents a $\tilde{\mathcal{O}}(n^{6}m^{3}/ \varepsilon^2)$ time\footnote{$\tilde{\mathcal{O}}(\cdot )$ hides poly-logarithmic factors. The algorithm in~\cite{BoodaghiansCM22} runs in $\mathcal{O}(n^3/\varepsilon^2)$ iterations (Lemma 13) and each iteration solves a convex quadratic program in $nm$ variables, which takes $\tilde{\mathcal{O}}(n^3m^3)$ time~\cite{GoldfarbL91}.}  exterior point algorithm for determining an $(1-\varepsilon)$-approximate CE with chores in the CEEI model where $n$ and $m$ denote the number of agents and chores, respectively. We remark that there are instances where the set of $(1-\varepsilon)$-approximate CE with chores in the CEEI model are also disconnected. However, the algorithm in~\cite{BoodaghiansCM22} does not provide any insights into the dynamics of pricing and allocation under the influence of demand and supply. Therefore, a natural question that arises is whether there are combinatorial algorithms (which are also hopefully faster) to find an approximate CE with chores. Unfortunately, the combinatorial algorithms for CE with goods and their corresponding convergence analysis do not generalize to the chores setting as all of them exhibit the fundamental bottleneck of \emph{non-monotone surpluses} (elaborated in Section~\ref{techoverview}).

This brings us to the first main contribution of the paper. We overcome the barrier of \emph{non-monotone surpluses} and give the first combinatorial algorithm for determining a $(1-\varepsilon)$-approximate CE with chores in the CEEI model. Our algorithm is significantly faster than the algorithm in~\cite{BoodaghiansCM22} (by a factor of $\tilde{\Omega}(n^{2}m)$). Furthermore, when the disutility values are $\alpha$-\emph{rounded}, 
i.e., the disutility incurred by any agent from consuming one unit of any chore is a power of $(1+ \alpha)$ for an $\alpha > 0$, we can compute an exact CE in $\tilde{\mathcal{O}}(n^2m^2/ \alpha^2)$ time. We remark that algorithm in~\cite{BoodaghiansCM22} does not obtain an exact CE for rounded disutilities in polynomial time! 

\begin{theorem}
	\label{mainthm1intro}
	There exist combinatorial algorithms that 
	 \begin{itemize}
	 	\item compute  a $(1-\varepsilon)$-approximate CEEI with chores in time $\tilde{\mathcal{O}}(n^4m^2 / \varepsilon^2)$, and 
	 	\item compute an exact CEEI with chores in time $\tilde{\mathcal{O}}(n^2m^2 / \alpha^2)$ when the disutility values are $\alpha$-rounded.
	 \end{itemize} 
\end{theorem}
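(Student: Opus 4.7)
My plan is to establish the second bullet first (exact CEEI for $\alpha$-rounded disutilities in $\tilde{\mathcal{O}}(n^2m^2/\alpha^2)$ time) and then derive the first bullet via a standard rounding reduction: round each disutility $d_{ij}$ to the nearest power of $(1+\alpha)$ for $\alpha = \Theta(\varepsilon/n)$, compute an exact CEEI of the rounded instance, and use a short perturbation argument to show this is a $(1-\varepsilon)$-approximate CEEI of the original. Multiplying the exact-case runtime by the polynomial blow-up from $\alpha$ will give the target $\tilde{\mathcal{O}}(n^4m^2/\varepsilon^2)$ bound.

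For the exact algorithm, I would adopt a primal-dual combinatorial framework in the spirit of Devanur--Papadimitriou--Saberi--Vazirani but tailored to chores. The algorithm maintains a price vector $\mathbf{p}$ (normalized so $\sum_j p_j = n$) and a fractional allocation $\mathbf{x}$ that fully allocates every chore and is supported only on each agent's $\MPB$ set, i.e., the set of chores $j$ minimizing $d_{ij}/p_j$. The key quantities are the earnings $e_i = \sum_j p_j x_{ij}$; an equilibrium is reached exactly when $e_i = 1$ for every $i$. Given $\mathbf{p}$, the $\MPB$-consistent allocation that minimizes the maximum surplus $e_i - 1$ is computable by a bipartite max-flow on the $\MPB$ graph in $\tilde{\mathcal{O}}(nm)$ time, so the main design question is how to update prices between such allocation steps.

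The central obstacle is the \emph{non-monotone surpluses} phenomenon flagged in the introduction. Because the total earning $\sum_i e_i = \sum_j p_j$ is conserved under allocation changes at fixed prices, reducing prices on chores of an over-earning agent can, via $\MPB$ shifts, reroute load and \emph{increase} the surplus of some other agent who was previously balanced. Natural convex potentials such as $\sum_i (e_i - 1)^2$ therefore fail to decrease monotonically, which is precisely why the combinatorial machinery from the goods setting does not transfer. I would attack this by tracking a finer potential -- for instance a lex-ordered tuple of sorted surpluses, or a pair (maximum surplus, size of the $\MPB$-connected component containing a maximum-surplus agent) -- and designing price updates that only multiplicatively scale down prices inside the $\MPB$-component rooted at a current maximum-surplus agent, stopping at the first combinatorial event: a new chore entering the $\MPB$ set of some agent in the component, a chore leaving, or the maximum surplus dropping to equal that of an outside agent. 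The main lemma will be that each such event either makes strict progress against the chosen potential or strictly grows the identified component.

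The runtime analysis will rely crucially on $\alpha$-rounding to discretize the dynamics: all prices remain products of powers of $(1+\alpha)$ times a fixed scale, so the number of distinct $\MPB$ configurations -- and hence the number of events -- is polynomially bounded in $n$, $m$, and $1/\alpha$. Bounding the number of events by $\tilde{\mathcal{O}}(n^2/\alpha^2)$ and charging $\tilde{\mathcal{O}}(nm)$ per event (the max-flow subroutine plus a binary/parametric search for the next event threshold) will give $\tilde{\mathcal{O}}(n^2m^2/\alpha^2)$ for the exact case, and then $\tilde{\mathcal{O}}(n^4m^2/\varepsilon^2)$ for the approximate version via the reduction above. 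The hardest part will be proving strict potential decrease in tandem with a tight event count; this coupled combinatorial/quantitative argument is exactly what replaces the convex-programming analysis that was available in the goods setting.
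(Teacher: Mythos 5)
Your high-level architecture (maintain prices and an MPB-supported allocation, scale prices on one side of an MPB cut, discretize via $\alpha$-rounding) matches the paper's framework, and you correctly flag the non-monotone surplus phenomenon as the core obstacle. However, your proposed fix is precisely the kind of approach the paper argues cannot work, and the paper's actual key idea is absent. You suggest tracking a ``finer potential -- for instance a lex-ordered tuple of sorted surpluses, or a pair (maximum surplus, size of the $\MPB$-connected component).'' These are still surplus-balancing potentials, and the paper explicitly remarks that increasing the earning gap between the low and high earners after a price-scaling step ``is the main bottleneck in using any potential that `balances' the surpluses.'' A lex tuple of sorted surpluses suffers the same fate as $\ell_2$: after scaling down prices of chores demanded by low earners and scaling up the rest, the sorted surplus vector moves in the wrong direction before any combinatorial event occurs, so there is no strict progress to charge. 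The paper's insight is to use the Nash welfare $\prod_i D_i(x_i)=\prod_i \MPB_i\cdot\prod_i e_i$ as the potential: it is \emph{invariant} under the price-scaling step (since $x_{ij}=f_{ij}/p_j$ is unchanged), and it strictly increases in the allocation step when a new MPB edge lets you rebalance outflows between a low earner and a high earner. That decoupling of ``potential unchanged during price move, strictly improved during flow move'' is what replaces the DPSV-style $\ell_1/\ell_2$ analysis, and it is the ingredient your proposal is missing.

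Two further gaps. First, your runtime bound appeals to ``the number of distinct $\MPB$ configurations'' being polynomially bounded for $\alpha$-rounded data, but this is not true per se -- the number of distinct MPB supports is exponential in general. The paper instead bounds iterations by showing the Nash welfare improves by a factor $(1+\alpha^2/16)$ on each Balance-allocation call, and is globally bounded above by $(nD_{\max})^n$; without that potential one has no event count. (Between Balance-allocation calls only $O(m)$ non-balancing iterations can occur, since $|\Gamma(S)|$ grows monotonically -- your combinatorial ``component growth'' point is in the right spirit here, but it only controls the inner loop, not the outer one.) Second, your reduction from the approximate to the exact case by rounding $d_{ij}$ to powers of $(1+\alpha)$, $\alpha=\Theta(\varepsilon/n)$, does not produce a valid $(1-\varepsilon)$-approximate CE under the paper's definition: that definition requires agents to consume \emph{exact} MPB chores at the returned prices, and the MPB set is not stable under perturbing $d$. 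The paper avoids this by running essentially the same combinatorial algorithm on the unrounded instance with a modified stopping rule ($\max_i e_i/\min_i e_i < 1+\varepsilon$) and a modified choice of $S$ (split at the largest multiplicative gap in sorted earnings), establishing a per-call improvement of $1+\Omega(\varepsilon^2/n^2)$ in the product of earnings; the extra $n^2$ in the runtime comes from that weakened improvement factor, not from a data-rounding blow-up.
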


We mention some additional insights that our algorithm provides. Firstly, the convergence analysis of our combinatorial algorithm also works for the setting with goods, which gives a new form of convergence analysis in that case as well. Secondly, our algorithm also shows that finding an exact CE in the Fisher model is in PLS. Since ~\cite{ChaudhuryGMM21} shows that this problem is in PPAD, which implies that the problem lies at PPAD $\cap$ PLS $= $ CLS~\cite{FearnleyGHS21}. 

We now address the computational complexity of finding a CE in the exchange model. Similar to the case with goods~\cite{Gale76}, a CE may not always exist in the exchange model for arbitrary instances. ~\cite{ChaudhuryGMM22} gives a polynomial-time verifiable sufficient condition\footnote{The sufficient condition is automatically satisfied by the Fisher model.} under which a CE is guaranteed to exist. In our second main result, we show that given an instance that satisfies the sufficient condition, determining an approximate-CE in the exchange model is PPAD-hard.

\begin{theorem}
	\label{mainthm2intro}
	Finding a $(1-1/\textup{poly}(n))$-approximate CE in the exchange model under the sufficient conditions is PPAD-hard.  
\end{theorem}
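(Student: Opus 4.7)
The plan is to reduce a canonical PPAD-complete problem to approximate CE in the chores exchange model \emph{restricted to instances satisfying the CGMM22 sufficient condition}. The natural starting point is the $\varepsilon$-approximate generalized-circuit problem (Gen-Circuit) of Rubinstein, which is PPAD-hard at $\varepsilon = 1/\mathrm{poly}$ and whose gate set (assignment, addition/subtraction of signals, multiplication by a constant, comparison, and Boolean-valued logic) maps cleanly onto market gadgets. Polymatrix Nash is an equally valid alternative starting point; Gen-Circuit is usually cleaner since its signals already live in $[0,1]$.

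First, I would encode each circuit signal $x_v$ as the price of a designated chore (or as a ratio of two prices, to keep the encoding scale-invariant). The crucial leverage is that, unlike the goods case, the chores CE set is non-convex and disconnected even in the CEEI model; this allows small ``switch'' sub-markets that force a chosen price to lie near one of a small discrete set of values at every equilibrium. I would then build the standard suite of arithmetic and Boolean gadgets as constant-size exchange sub-markets whose approximate-equilibrium behaviour implements the corresponding gate, with the output price of one gadget becoming the input price of the next by sharing a chore across the two sub-markets.

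Second, I would wire the gadgets along the topology of the input Gen-Circuit into a single exchange market of size polynomial in the circuit. The care point here is approximation preservation: a $(1-1/\mathrm{poly}(n))$-approximate CE of the composed market must correspond to an $\varepsilon$-approximate satisfying assignment of the original Gen-Circuit. I would scale endowments and disutility parameters so that local slack at each gadget propagates to at most polynomially small error in its output signal, and then pick $\varepsilon$ correspondingly smaller than the total depth-accumulated error -- standard but fiddly bookkeeping.

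The main obstacle, and where most of the technical work will concentrate, is guaranteeing that the constructed instance actually satisfies the CGMM22 sufficient condition, so that the hardness is for the genuinely existence-guaranteed class rather than for an artificial restriction. I would introduce a layer of auxiliary agents, endowments, and chores whose sole role is to repair the strong-connectedness/positivity-type requirements of the sufficient condition, designed so that at every CE the auxiliaries trade among themselves in a prescribed way and perturb the gadget prices by at most the allowed approximation slack. Verifying the sufficient condition on the composed instance is then a direct if tedious check, and combining it with gadget correctness and the error bookkeeping above yields PPAD-hardness at the $(1 - 1/\mathrm{poly}(n))$ level claimed in Theorem~\ref{mainthm2intro}.
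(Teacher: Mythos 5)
Your plan is a genuine alternative route, but it is missing the specific technical obstruction that makes this problem hard, and it relies on a step that likely fails for this model. You propose to reduce from Gen-Circuit and implement its full gate set (addition, subtraction, multiplication by a constant, comparison, Boolean logic) as constant-size exchange sub-markets. The difficulty is that here every agent has a \emph{linear} disutility function. Linear demand in an exchange market has an extremely rigid local structure: each agent at equilibrium spends only on chores achieving her minimum pain-per-buck ratio, and the set of such chores is determined by a finite collection of linear inequalities among prices. It is not at all clear -- and the paper explicitly disclaims -- that arbitrary Gen-Circuit gates (especially the comparison gate and Boolean gates, which require a sharp threshold-and-snap behaviour) can be realized by constant-size linear-disutility sub-markets. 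The prior market hardness results you are implicitly invoking as a template (Chen--Teng for Fisher SPLC, Chen--Paparas--Yannakakis for non-monotone markets) all reduce from \emph{polymatrix games} rather than Gen-Circuit precisely because polymatrix only needs bilinear price-tests, not a gate library. Your proposal to ``build the standard suite of arithmetic and Boolean gadgets'' is therefore an assertion at the exact point where the work has to be done.

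The paper also reduces from polymatrix games, and its key new ingredient -- the one you do not mention -- is the \emph{reverse ratio amplification} gadget: a chain of $O(\log n)$ chore pairs $(b^k_{2i-1}, b^k_{2i})$ with a small auxiliary agent at each level, arranged so that if the ratio $p(b^k_{2i-1})/p(b^k_{2i})$ is pushed to one extreme of its regulation band $[\tfrac{1-\alpha_k}{1+\alpha_k},\tfrac{1+\alpha_k}{1-\alpha_k}]$, the next ratio is forced to the opposite extreme of a wider band $[\tfrac{1-\alpha_{k+1}}{1+\alpha_{k+1}},\tfrac{1+\alpha_{k+1}}{1-\alpha_{k+1}}]$ with $\alpha_{k+1}=\tfrac{3}{2}\alpha_k$. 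This amplification exploits a property specific to chores: raising a chore's price makes it \emph{more} attractive to the agents who would earn from it, which is exactly backwards from goods and cannot be mimicked in the linear goods exchange setting. You instead attribute the leverage to non-convexity/disconnectedness of the CE set, which is a symptom rather than the mechanism; it does not tell you how to build a gadget. Finally, your last paragraph's concern -- enforcing the sufficient existence condition -- is real but secondary; in the paper's construction the disutility graph is a disjoint union of $2 \times O(n)$ bicliques and the economy graph of components is cyclic by design, so the conditions fall out of the construction rather than being patched afterwards. In short, as written your proposal does not close the central gap (realizing the required gadgets with only linear chores disutilities), and the specific Gen-Circuit route you chose makes that gap harder, not easier.
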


Theorems~\ref{mainthm1intro} and~\ref{mainthm2intro} show the first separation between the CEEI model and the exchange model when agents have linear preferences, assuming PPAD $\neq $P. To the best of our knowledge, these results also give the first separation between the two economic models where the CE set is non-convex in both cases!~\footnote{Such a separation is also known for constant elasticity of substitution (CES) and Leontief preferences, where equilibrium computation in the CEEI model is in P using convex formulation~\cite{Eisenberg61}, and it is PPAD-hard in the exchange model~\cite{CodenottiSVY06,chen2017complexity,GargMVY17}. However, the set of CE is convex in the CEEI model and non-convex in the exchange model.} 
\medskip

\noindent {{\bf Organization of the rest of the paper.} Section~\ref{sec:model} defines the problem with all studied models. Sections~\ref{sec:alg} and ~\ref{mainres3} present a sketch of main techniques and ideas used in designing combinatorial algorithm and showing PPAD-hardness, respectively.  Section~\ref{sec:bi} presents a sketch of how our new insight implies the existence  of an allocation that is both envy-free up to one chore (EF1) and Pareto optimal (PO) in the discrete setting under bivalued preferences. Section~\ref{sec:rw} presents applications and further related work. Appendices~\ref{algorithm} and ~\ref{ppadhardness} contain the full details of our two main results.} 

\section{Model and Technical Overview}\label{sec:model}
A chore division problem consists of a set $[m]$ of divisible chores (bads), and a set $[n]$ of agents, where by $[a]$ we mean set $\{1,\dots,a\}$. We assume that each agent $i$ has a disutility of $d_{ij} \in (0,\infty]$ for one unit of chore $j$.\footnote{If $d_{ij}=0$, we can simply assign chore $j$ entirely to agent $i$ and remove it from the instance.} $d_{ij} = \infty$ implies that chore $j$ cannot be allocated to agent $i$ as she may not have the required skills. If agent $i$ is assigned a bundle $x_i = \langle x_{i1}, x_{i2} , \dots , x_{im} \rangle$, then her total disutility $D_i(x_i) = \sum_{j \in [m]} d_{ij} x_{ij}$. We first define CE in the exchange model as it is the more general of the two models under consideration in this paper.
\medskip

\noindent{\bf Exchange model.} In the exchange model, each agent $i$ brings $w_{ij}$ units of chore $j$ to be done (by herself or other agents). Given prices $p = \langle p_1, p_2, \dots , p_m \rangle \in \mathbb{R}^m_{\geq 0}$ for chores, where $p_j$ denotes the payment for doing unit amount of chore $j$, agent $i$ needs to earn $\sum_{j \in [m]} w_{ij} p_j$ in order to pay to get her own chores done. 
Clearly, $i$ would like to choose a {\em feasible bundle} that minimizes her disutility -- this defines her {\em optimal bundle} (or optimal chore set). 

\begin{equation}\label{eq:ob}
\mathit{OB}_i(p)=\mathit{argmin}_{x_i \in \mathbb{R}^m_{\geq 0}: \langle x_i,p\rangle \ge \langle w_i,p\rangle}{~~D_i(x_i)}.
\end{equation}

Due to linearity of the disutility functions, in an optimal bundle,  each agent $i$ is assigned only those chores that minimize her disutility per dollar earned and agent $i$ earns money exactly equal to the total price of her endowments. Let $\MPB_i$ denote the minimum pain per buck of agent $i$ at prices $p$, i.e., $\MPB_i = \mathit{min}_{j \in [m]} d_{ij}/p_j$. Formally, if $x_i \in \mathit{OB}_i(p)$, then, 
\[
\forall j\in [m],\  \ x_{ij}>0 \ \  \Rightarrow \  \ \frac{d_{ij}}{p_j} =\MPB_i,
\ \ 
\mbox{ and }\ \ 
\sum_{j \in [m]} x_{ij} \cdot p_j = \sum_{j \in [m]} w_{ij} \cdot p_j.
\]
In the above ratios, to deal with zero prices and infinite disutilities we assume that $\infty/a > b/0$ for any $a,b\ge 0$. Clearly, an optimal bundle of an agent contains only those chores for which she has finite disutility.  The price vector $p$ is said to be at a CE if all chores are completely assigned when every agent gets one of her optimal bundles, {\em i.e.,} $x_i \in \mathit{OB}_i(p)$ ~$\forall i \in [n]$, and $\sum_{i\in[n]} x_{ij} =\sum_{i\in[n]} w_{ij},\ \forall j\in [m]$. It is without loss of generality to assume that each chore is available in one unit total, i.e. $ \sum_{i \in [n]} w_{ij}=1$ (through appropriate scaling of the disutility values). 

At approximate competitive equilibrium, we require that agents approximately earn enough to pay for their chores. 
Next we define $(1-\epsilon)$-approximate CE (as in \cite{BoodaghiansCM22}), where setting $\epsilon=0$ gives the exact CE.  Now on, by $x=(1\pm \epsilon) y$ we mean $(1-\epsilon) y \le x \le (1+\epsilon) y$. 

\begin{definition}\label{def:ce}\cite{BoodaghiansCM22}
We say that price vector $p = \langle p_1, p_2, \dots , p_m \rangle \in \mathbb{R}^m_{\geq 0}$ and allocation $x = \langle x_1, x_2, \dots, x_n \rangle$ are at $(1-\epsilon)$-approximate competitive equilibrium (CE), if
\begin{enumerate}
\setlength\itemsep{0em}
\item {\bf Complete allocation.} For each item/chore $j\in [m]$,  $\sum_{i \in [n]} x_{ij} = 1 (=\sum_{i\in[n]} w_{ij})$.
\item {\bf Consume best chores.} For each agent $i\in [n]$, $x_{ij}> 0 \Rightarrow \frac{d_{ij}}{p_j} =\MPB_i$. 
\item {\bf Approximate earning.} For each agent $i\in [n]$, $\sum_{j\in [m]} x_{ij} p_j = (1\pm \epsilon) \cdot \sum_{j \in [m]} w_{ij} p_j$. 
\end{enumerate}
\end{definition}

In general, a CE may not always exist in the exchange model (with goods or bads)~\cite{ChaudhuryGMM22}. In the setting with goods, there exists a polynomial time verifiable necessary and sufficient condition, under which an instance will always admit a CE. However, in the setting with chores, determining whether an arbitrary instance admits a CE is NP-complete~\cite{ChaudhuryGMM22}, even in the CEEI model. This rules out the possibility of obtaining a polynomial time verifiable necessary and sufficient condition unless P = NP. Therefore, the best one can hope for is a polynomial time verifiable sufficient condition that captures interesting instances. Chaudhury et al.~\cite{ChaudhuryGMM22} show such a sufficient condition. We now elaborate their sufficient condition. To this end, we first define the \emph{economy graph} and the \emph{disutility graph} of an instance $I$. The {economy graph} $E(I)$ is a directed graph with vertices corresponding to the agents $[n]$ and there is an edge from $i$ to $i'$ in $E(I)$ if there exists a chore $j$ such that $w_{ij} > 0$ and $d_{i'j} \neq \infty$. The {disutility graph} $G(I)$ is a bipartite graph with the set of agents $[n]$ and the set of chores $[m]$ as the two parts and there is an edge from $i \in [n]$ to $j \in [m]$ if $d_{ij} \neq \infty$.~\cite{ChaudhuryGMM22} show the following theorem.

\begin{theorem}
	\label{sufficientcondition}
	An instance $I$ of chore division in the exchange model, admits a CE, if it satisfies the following two conditions:
	\begin{itemize}
	\setlength\itemsep{0em}
		\item[$SC_1.$] $E(I)$ is strongly connected, and 
		\item[$SC_2.$] $G(I)$ is a disjoint union of complete bipartite-graphs (bicliques).
	\end{itemize}
\end{theorem}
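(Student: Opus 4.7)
The plan is to establish existence of a competitive equilibrium by a perturbation-and-limit argument, using $SC_2$ to control which chores get assigned to which agents and $SC_1$ to prevent price degeneracy. First, I would replace every infinite disutility $d_{ij}=\infty$ by a large finite value $M$, obtaining an instance $I_M$ in which every agent has finite disutility for every chore. For $I_M$ the classical existence machinery for exchange equilibria with chores applies: one can set up Kakutani's fixed-point theorem on the normalized price simplex using the upper-hemicontinuous, convex-valued demand correspondence derived from (\ref{eq:ob}), or equivalently use an Eaves-style linear complementarity formulation. The strong connectivity hypothesis $SC_1$ on the economy graph plays the standard irreducibility role that rules out a proper subset of agents whose endowments become worthless at equilibrium. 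Thus $I_M$ admits a CE $(p^M, x^M)$ for every sufficiently large $M$.

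Next, normalize prices so that $\sum_j p_j^M = 1$ and extract a convergent subsequence $(p^M, x^M) \to (p^*, x^*)$ using compactness. Market clearing and budget balance pass to the limit by continuity. The subtle ingredient is the MPB optimality condition: I must show that $x_{ij}^* > 0$ implies $d_{ij} < \infty$, so that $(p^*, x^*)$ is actually a CE of the original instance $I$ rather than the perturbed $I_M$. This is where $SC_2$ enters. Suppose for contradiction that $x_{ij}^* > 0$ but $d_{ij} = \infty$, so the biclique of agent $i$ does not contain chore $j$. In $I_M$, agent $i$'s MPB is at most $d_{ij'}/p_{j'}^M$ for any $j'$ in her own biclique that she consumes positively; this ratio remains bounded as $M\to\infty$ provided total prices within her component are bounded away from zero. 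Her ratio for $j$ is however $M/p_j^M$, so MPB-optimality would force $p_j^M \to \infty$, contradicting the price normalization.

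The main obstacle, and the genuinely delicate step, is ruling out degenerate limits where a price $p_j^*$ vanishes within some biclique. A zero chore-price is especially pathological in the chores setting because the ratio $d_{ij}/0$ is infinite, so no agent is willing to do the chore, yet it must still be fully allocated. The fix is to use the biclique decomposition from $SC_2$ together with $SC_1$ to propagate a strictly positive lower bound on component-level total prices. Summing individual budget identities inside a biclique $B_\ell = (A_\ell, C_\ell)$ yields the flow balance $\sum_{j \in C_\ell} p_j (1 - \sum_{i \in A_\ell} w_{ij}) = \sum_{i \in A_\ell} \sum_{\ell' \ne \ell} \sum_{j \in C_{\ell'}} w_{ij} p_j$, tying prices inside one biclique to those in others through the cross-component endowments. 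An edge $i \to i'$ of $E(I)$ corresponds exactly to such a cross-component endowment flow (since $w_{ij}>0$ and $d_{i'j}<\infty$ forces $j\in C_{\ell'}$ with $i'\in A_{\ell'}$), so strong connectivity of $E(I)$ guarantees these flows form a globally connected system. Chasing the implications along a directed cycle through all bicliques, one concludes that no biclique's total price can collapse to zero without all bicliques collapsing simultaneously, contradicting $\sum_j p_j^* = 1$. Combining the three steps, $(p^*, x^*)$ satisfies Definition~\ref{def:ce} with $\epsilon = 0$ for $I$, which proves the theorem.
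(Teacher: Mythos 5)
The paper itself does not prove Theorem~\ref{sufficientcondition}; it states it as a cited result from~\cite{ChaudhuryGMM22}, so there is no in-paper proof to compare against. Evaluating your proposal on its own terms, the decisive gap is at the very first step. You assert that after replacing every $d_{ij}=\infty$ by a large finite $M$, the resulting instance $I_M$ has a CE by ``classical existence machinery.'' There is no such off-the-shelf theorem for the linear exchange model with chores. The Kakutani/Gale--Nikaido argument for goods relies on a boundary condition (excess demand for a good blows up as its price goes to zero) that reverses for chores: a zero-priced chore is refused by every agent, so the excess-demand sign is wrong on the boundary of the simplex, and the minimum-income constraint $\langle x_i,p\rangle \ge \langle w_i,p\rangle$ does not fit the standard budget-set hypotheses. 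The Eaves-style LCP is likewise not known to always terminate at a CE for chore exchange without further structure --- that is precisely why~\cite{ChaudhuryGMM22} had to isolate $SC_1$ and $SC_2$, and why deciding existence for arbitrary chore instances is NP-complete. Worse, when all disutilities are finite, $G(I_M)$ is a single biclique and (with nontrivial endowments) $E(I_M)$ is a complete digraph, so $SC_1$ and $SC_2$ hold automatically for $I_M$; existence for $I_M$ is therefore exactly a special case of the theorem you are trying to prove, and invoking it without an independent argument is circular. Your appeal to $SC_1$ for ``irreducibility'' at this stage is also idle, since for $I_M$ that condition is trivially true and hence cannot be doing any work.

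Your steps 3--5 are in the right spirit: compactness to extract a limit, $SC_2$ to push MPB-optimality through the limit by bounding per-biclique price ratios, and the biclique-level budget-balance identity
\[
\sum_{j \in C_\ell} p_j \Bigl(1 - \sum_{i \in A_\ell} w_{ij}\Bigr) = \sum_{i \in A_\ell} \sum_{\ell' \ne \ell} \sum_{j \in C_{\ell'}} w_{ij} p_j
\]
(which is algebraically correct) combined with $SC_1$ to rule out component-level price collapse. However, these limit arguments cannot rescue the proof without a non-circular base case establishing existence for $I_M$; as written, the proposal reduces the theorem to itself.
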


\paragraph{CEEI model.} The CEEI model is a special case of the exchange model, where instead of the initial endowment of chores, all the agents have a fixed and the same earning requirement, say of one unit (dollar). In particular, the only change is in the definition of the feasible set of chores that can be allocated to an agent at prices $p$, namely $\{ x_i \in \mathbb{R}^m_{\geq 0} \mid \sum_{j \in [m]} x_{ij}p_j \geq 1 \}$. Under this feasibility set, $\mathit{OB}_i(p)$ is well-defined. Accordingly, in the Definition \ref{def:ce} of $(1-\epsilon)$-approximate competitive equilibrium, the only change is in the last condition, 
\[
\text{{\bf Approximate earning.} For each agent }i\in [n], \sum_{j\in [m]} x_{ij} p_j = (1\pm \epsilon)
\]


To understand why CEEI is a special case of exchange model, note that we can convert an instance of CEEI to an instance of exchange by setting all $w_{ij} = 1/n$ for each $i \in [n]$ and $j \in [m]$ and keeping the disutility values as is. Since the CE prices are scale invariant in the exchange model, wlog we can assume that they sum to $n$. This translates to earning requirement of $1$ for each agent under this reduction.

Recall that determining whether an arbitrary instance of the CEEI model admits a CE is NP-complete~\cite{ChaudhuryGMM22}. It is straightforward to see that if we model an instance in the CEEI model as an instance in the exchange model, then the economy graph $E(I)$ is a clique and thus strongly connected. Therefore, the only sufficient condition that the instance needs to satisfy to admit a CE is that the disutility graph is a disjoint union of bicliques. 

\begin{corollary}
	\label{fisherexistence}
    An instance $I$ of chore division in the CEEI model, admits a CE, if $G(I)$ is a disjoint union of bicliques.
\end{corollary}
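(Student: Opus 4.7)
The plan is to derive Corollary~\ref{fisherexistence} directly from Theorem~\ref{sufficientcondition} by exhibiting a reduction from the CEEI model to the exchange model that preserves the disutility graph and automatically satisfies the strong-connectivity condition $SC_1$. The excerpt already sketches this reduction in the paragraph preceding the corollary, so the task is essentially to verify that both sufficient conditions of Theorem~\ref{sufficientcondition} hold for the reduced instance, and that a CE of the reduced exchange instance yields a CE of the original CEEI instance.

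Concretely, given a CEEI instance $I$ with $n$ agents, $m$ chores, and disutilities $(d_{ij})$, I would form an exchange instance $I'$ with the same agents, chores, and disutilities, and with endowments $w_{ij} = 1/n$ for all $i\in[n], j\in[m]$. First I would check $SC_2$: since the disutility values are unchanged, $G(I') = G(I)$ is a disjoint union of bicliques by hypothesis. For $SC_1$, I would note that the economy graph $E(I')$ contains the edge $(i,i')$ whenever there exists a chore $j$ with $w_{ij} > 0$ and $d_{i'j} \neq \infty$; since $w_{ij} = 1/n > 0$ for every pair $(i,j)$, this reduces to requiring that every agent $i'$ has at least one chore with finite disutility, which is automatic once we discard any agent that is an isolated vertex of $G(I)$ (such an agent can be ignored, as she cannot earn anything and her presence is incompatible with any CE). Hence $E(I')$ is a complete digraph and trivially strongly connected, so Theorem~\ref{sufficientcondition} applies and $I'$ admits a CE $(p, x)$.

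It remains to convert this CE of $I'$ back into a CE of the CEEI instance $I$. In $I'$ each agent $i$ must earn $\sum_{j} w_{ij} p_j = \tfrac{1}{n}\sum_{j} p_j$, so rescaling the price vector by $n / \sum_{j} p_j$ (which is without loss of generality because CE in the exchange model is scale-invariant in prices, and optimal bundles depend only on the ratios $d_{ij}/p_j$) yields a new price vector $\tilde p$ with $\sum_j \tilde p_j = n$, under which every agent's earning requirement becomes $1$. The allocation $x$ is unchanged, every agent still receives only chores minimizing $d_{ij}/\tilde p_j$, and all chores remain fully allocated, so $(\tilde p, x)$ satisfies the CEEI definition of CE obtained from Definition~\ref{def:ce} with the modified earning condition $\sum_j x_{ij}\tilde p_j = 1$.

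The argument is essentially bookkeeping; I do not foresee a real obstacle. The only mildly subtle point is the handling of agents (or chores) that are isolated in $G(I)$: an agent with $d_{ij} = \infty$ for all $j$ cannot participate in any CE, and the corollary is implicitly stated for instances without such pathological agents, or equivalently after removing them. Making this convention explicit, the reduction and invocation of Theorem~\ref{sufficientcondition} yield the corollary.
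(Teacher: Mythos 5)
Your proposal follows the same route the paper takes: reduce CEEI to exchange by setting $w_{ij}=1/n$, observe that the economy graph becomes a clique (so $SC_1$ holds trivially), note that $G(I)$ is unchanged (so $SC_2$ is the hypothesis), invoke Theorem~\ref{sufficientcondition}, and rescale prices so that $\sum_j p_j = n$ to recover the CEEI earning requirement. Your explicit handling of agents that are isolated in $G(I)$ is a reasonable clarification of an edge case the paper leaves implicit, but it does not change the argument.
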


Note that Corollary~\ref{fisherexistence} contains all instances where all disutility values are finite. Furthermore, given any arbitrary instance $I$ that satisfies the condition in Corollary~\ref{fisherexistence}, we can decompose the instances into sub-instances, where each sub-instance contains the agents and chores in a component in the disutility graph and solve each sub-instance separately. Thus, if there exists any polynomial time (approximation) algorithm for instances where all disutility values are finite, we get a polynomial time (approximation) algorithm for all instances that satisfy the sufficient condition in Corollary~\ref{fisherexistence}.	
%
%
\label{techoverview}
Next, for each of our two main results, we elaborate the fundamental roadblocks and the main technical ideas to overcome them.

\subsection{Combinatorial Algorithm for the CEEI Model}\label{sec:alg} 
In this section, we highlight the issue of \emph{non-monotone surpluses} that arises when we try to generalize combinatorial algorithms from the setting with goods to the setting with chores. Thereafter, we discuss a possible fix which will give us the crucial ideas needed for our combinatorial algorithm.

\paragraph{Combinatorial algorithms for goods.} For the case of goods, think of $d_{ij}$ as the utility (happiness) agent $i$ gets per unit of {\em good (item)} $j$, and has \$1 of budget {\em to spend}. Clearly, at CE the agents would want to consume only the {\em goods} that give {\em maximum utility per dollar spent}, a.k.a. Max-Bang-Per-Buck (MBB). Based on this, in most of the known combinatorial algorithms, finding a CE is formulated as a flow problem as follows: Given prices of {\em goods} $p = \langle p_1, p_2, \dots , p_m \rangle$ such that $\sum_{j \in [m]} p_j = n=$ total-budget, construct an MBB network $N_p = \{ \{s\} \cup [n] \cup [m] \cup \{t\} , E \}$. It is a flow network comprising of a source vertex $s$, the set of agents $[n]$, the set of goods $[m]$ and a sink $t$. For each agent $i$, let $\MBB_i = \mathit{max}_{j \in [m]} d_{ij}/p_j$. The edge set $E$ of $N_p$ comprises of 
\begin{itemize}
\setlength\itemsep{0em}
	\item edge $(s,i)$ from the source $s$ to agent $i$ of capacity $1$ (budget), for all $i \in [n]$,
	\item edge $(i,j)$ with $\infty$ capacity, $\forall i\in [n], \forall j\in [m],$ where $d_{ij}/p_j = \MBB_i$, and 
	\item edge $(j,t)$ from good $j$ to sink $t$ of capacity $p_j$, for all $j \in [m]$. 
\end{itemize} 

Observe that, determining a CE reduces to finding a set of prices $p$ such that the maximum flow in $N_p$ is $\sum_{j \in [m]} p_j=n$: 
Let $f$ be a max-flow. Set the allocation $x$ as $x_{ij} = f_{ij} / p_j$ for all $i \in [n]$ and $j \in [m]$.  Note that $f$ saturates all the edges from $[m]$ to $t$ implying {\em complete allocation}. Each agent $i$ is only allocated her MBB goods at prices $p$, and since $f$ saturates $(s,i)$ for all $i$, each agent $i$ sends all of her budget, in turn implying that she gets her {\em optimal bundle}. These conditions imply a CE (see Definition \ref{def:ce} with $\epsilon=0$). 

Given any flow $f$ in $N_p$, let \emph{surplus} $r_f(i)$ of an agent $i$ be the residual capacity remaining on the edge $(s,i)$, i.e.,  $r_f(i) = 1 - f_{si} = 1 -\sum_{j \in [m]} f_{ij}$ . If the prices are at a CE, the surplus of all agents will be zero. At a high-level, all combinatorial algorithms start with arbitrary prices and allocation and then adjust the prices and allocation such that the $\ell_2$-norm of the surpluses of the agents decreases. Faster algorithms are obtained by achieving faster convergence rates with more sophisticated adjustment rules. We briefly sketch the intuitive price and flow adjustment:
\medskip

	\noindent \emph{Flow adjustment:} During this phase, we make no changes to the prices. At a given price vector $p$, the flow is set to the one that minimizes the $\ell_2$-norm of the surpluses of the agents.
\medskip
	
	\noindent \emph{Price adjustment:} During this phase, we make no changes in the allocation. The price adjustment is done in a very intuitive way: \emph{increase the prices of the goods {\em in-demand} and decrease the prices of the rest}. That is, let $S$ be the set of agents with highest surplus, then the goods adjacent to them, say $\Gamma(S)$ in the MBB graph, are completely sold, as otherwise one can reduce surpluses of some of the agents in $S$ and thereby reduce the $\ell_2$-norm. 
	Thus, the goods in $\Gamma(S)$ are {\em in-demand} as the demand exceeds supply for these goods. 
	
	We increase the prices of the goods in $\Gamma(S)$ and decrease the prices of the goods outside $\Gamma(S)$ continuously such that the sum of prices of the goods equals the sum of budgets of the agents, i.e., we scale the prices of goods in $\Gamma(S)$ by $x>1$ and  the prices of the goods in $[m] \setminus \Gamma(S)$ by $y <1 $ such that $x \cdot \sum_{j \in \Gamma(S)}p_j + y \cdot \sum_{j \in [m] \setminus \Gamma(S)} p_j = n$. See Figure~\ref{price-update} for an illustration. 
    Note that, during this price change a feasible flow can be easily maintained by scaling the edge-flows appropriately. Note that the allocation still remains the same as $f_{ij}/p_j$ remains the same.
	
	\[
	 p'_{j} =  \begin{cases} 
					xp_{j} & j \in \Gamma(S) \\
					yp_{j} & j \notin \Gamma(S) 
				\end{cases}   \addtag \label{priceeq}
	\]

	\[
	\begin{array}{ccc}
	f'_{ij} =  \begin{cases} 
					xf_{ij}, & \ \mbox{if }j \in \Gamma(S) \\
					yf_{ij}, & \ \mbox{if } j \notin \Gamma(S) 
				 \end{cases} 
	\quad & \quad
	
	f'_{jt} =  \begin{cases} 
						xf_{jt}, & \ \mbox{if } j \in \Gamma(S) \\
						yf_{jt}, & \ \mbox{if } j \notin \Gamma(S) 
					\end{cases} 
	
	\quad &\quad
	f'_{si} =  \begin{cases} 
					xf_{si}, & \ \mbox{if } i \in S \\
					yf_{si}, & \ \mbox{if } i \notin S 
				\end{cases} \addtag  \label{floweq3}
\end{array}
\]

Note that, in this process the surplus of agents in $S$ is decreasing. 
We keep updating the prices continuously until either a new MBB edge appears in the MBB graph (from some agent in $S$ to a good outside $\Gamma(S)$) or the surplus of one of the agents in $S$ becomes equal to the surplus of an agent outside $S$.  
\medskip

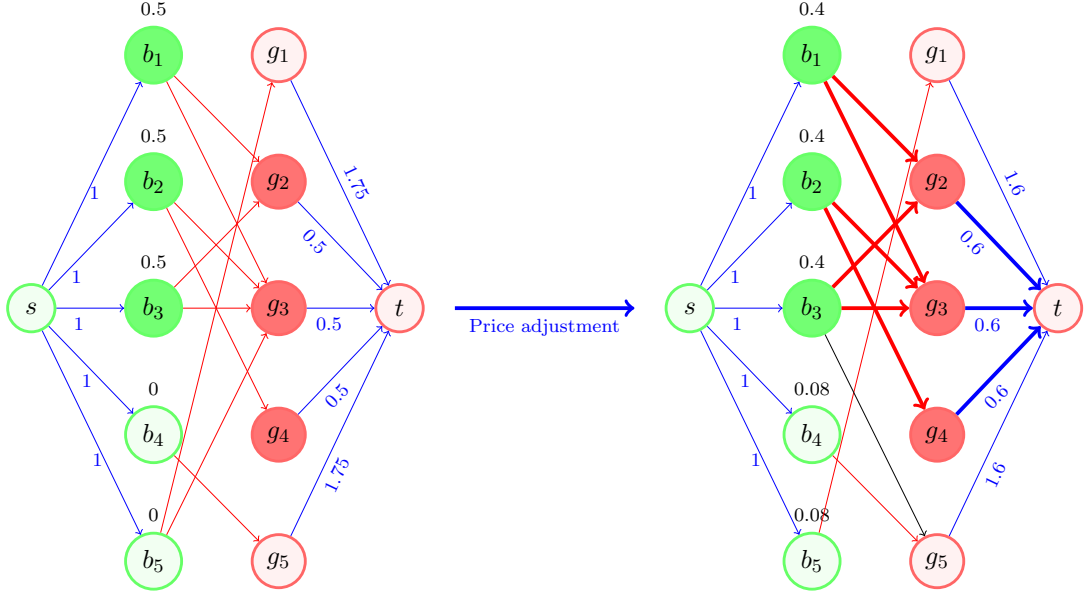
\begin{figure}[htb]
	\centering 
	\label{price-update}
	\scalebox{0.90}{\begin{tikzpicture}[
roundnode/.style={circle, draw=green!60, fill=green!5, very thick, minimum size=7mm},
roundnode2/.style={circle, draw=red!60, fill=red!5, very thick, minimum size=7mm},
roundnode3/.style={circle, draw=green!60, fill=green!55, very thick, minimum size=7mm},
roundnode4/.style={circle, draw=red!60, fill=red!55, very thick, minimum size=7mm},
liner/.style={red,->},
lineb/.style={blue,->},
tliner/.style={red,->,ultra thick},
tlineb/.style={blue,->,ultra thick},
tlinebl/.style={black,->,ultra thick},
]


    \node (b1) [roundnode3] {$b_1$};
    \node (b2) [roundnode3] [below=of b1]  {$b_2$};
    \node (b3) [roundnode3] [below=of b2]  {$b_3$};
    \node (b4) [roundnode] [below=of b3]  {$b_4$};
    \node (b5) [roundnode] [below=of b4]  {$b_5$};

    \node [above=0.5pt of b1] {$\scriptstyle{0.5}$};
    \node [above=0.5pt of b2] {$\scriptstyle{0.5}$};
    \node [above=0.5pt of b3] {$\scriptstyle{0.5}$};
    \node [above=0.5pt of b4] {$\scriptstyle{0}$};
    \node [above=0.5pt of b5] {$\scriptstyle{0}$};

    \node (s) [roundnode] [left=of b3] {$s$};

    \node (g1) [roundnode2] [right=of b1]  {$g_1$};
    \node (g2) [roundnode4] [right=of b2]  {$g_2$};
    \node (g3) [roundnode4] [right=of b3]  {$g_3$};
    \node (g4) [roundnode4] [right=of b4]  {$g_4$};
    \node (g5) [roundnode2] [right=of b5]  {$g_5$};

    \node (t) [roundnode2] [right=of g3] {$t$};


    \draw[lineb] (s)--node[pos=0.53,below] {$\scriptstyle{1}$} (b1);  \draw[lineb] (g1)--node[pos=0.53,sloped,above] {$\scriptstyle{1.75}$}(t);
    \draw[lineb] (s)--node[pos=0.33,below] {$\scriptstyle{1}$}(b2);  \draw[lineb] (g2)--node[pos=0.33,sloped,below] {$\scriptstyle{0.5}$}(t);
    \draw[lineb] (s)--node[pos=0.33,below] {$\scriptstyle{1}$}(b3);  \draw[lineb] (g3)--node[pos=0.33,sloped,below] {$\scriptstyle{0.5}$}(t);
    \draw[lineb] (s)--node[pos=0.45,below] {$\scriptstyle{1}$}(b4);  \draw[lineb] (g4)--node[pos=0.33,sloped,below] {$\scriptstyle{0.5}$}(t);
    \draw[lineb] (s)--node[pos=0.56, below] {$\scriptstyle{1}$}(b5);  \draw[lineb] (g5)--node[pos=0.33,sloped,below] {$\scriptstyle{1.75}$}(t);

    \draw[liner] (b1)--(g2);
    \draw[liner] (b1)--(g3);
    \draw[liner] (b2)--(g3);
    \draw[liner] (b2)--(g4);
    \draw[liner] (b3)--(g2);
    \draw[liner] (b3)--(g3);
    \draw[liner] (b4)--(g5);
    \draw[liner] (b5)--(g1);
    \draw[liner] (b5)--(g3);


\node (c1) [roundnode3] [right=200pt of g1]{$b_1$};
\node (c2) [roundnode3] [below=of c1]  {$b_2$};
\node (c3) [roundnode3] [below=of c2]  {$b_3$};
\node (c4) [roundnode] [below=of c3]  {$b_4$};
\node (c5) [roundnode] [below=of c4]  {$b_5$};

\node (s') [roundnode] [left=of c3] {$s$};

\node (h1) [roundnode2] [right=of c1]  {$g_1$};
\node (h2) [roundnode4] [right=of c2]  {$g_2$};
\node (h3) [roundnode4] [right=of c3]  {$g_3$};
\node (h4) [roundnode4] [right=of c4]  {$g_4$};
\node (h5) [roundnode2] [right=of c5]  {$g_5$};

\node [above=0.5pt of c1] {$\scriptstyle{0.4}$};
\node [above=0.5pt of c2] {$\scriptstyle{0.4}$};
\node [above=0.5pt of c3] {$\scriptstyle{0.4}$};
\node [above=0.5pt of c4] {$\scriptstyle{0.08}$};
\node [above=0.5pt of c5] {$\scriptstyle{0.08}$};

\node (t') [roundnode2] [right=of h3] {$t$};


\draw[lineb] (s')--node[pos=0.53,below] {$\scriptstyle{1}$} (c1);  \draw[lineb] (h1)--node[pos=0.53,sloped,above] {$\scriptstyle{1.6}$}(t');
\draw[lineb] (s')--node[pos=0.33,below] {$\scriptstyle{1}$}(c2);  \draw[tlineb] (h2)--node[pos=0.33,sloped,below] {$\scriptstyle{0.6}$}(t');
\draw[lineb] (s')--node[pos=0.33,below] {$\scriptstyle{1}$}(c3);  \draw[tlineb] (h3)--node[pos=0.33,sloped,below] {$\scriptstyle{0.6}$}(t');
\draw[lineb] (s')--node[pos=0.45,below] {$\scriptstyle{1}$}(c4);  \draw[tlineb] (h4)--node[pos=0.33,sloped,below] {$\scriptstyle{0.6}$}(t');
\draw[lineb] (s')--node[pos=0.56, below] {$\scriptstyle{1}$}(c5);  \draw[lineb] (h5)--node[pos=0.33,sloped,below] {$\scriptstyle{1.6}$}(t');

\draw[tliner] (c1)--(h2);
\draw[tliner] (c1)--(h3);
\draw[tliner] (c2)--(h3);
\draw[tliner] (c2)--(h4);
\draw[tliner] (c3)--(h2);
\draw[tliner] (c3)--(h3);
\draw[liner] (c4)--(h5);
\draw[liner] (c5)--(h1);
\draw[->,black] (c3)--(h5);

\node (x) [right=5pt of t] {};
\node (y) [left=5pt of s'] {};
\draw[blue, ->, ultra thick] (x)-- node[below] {$\scriptstyle{\text{Price adjustment}}$} (y);

\end{tikzpicture}}
	\caption{\small{Illustration of the price adjustment phase. The MBB flow network $N_p$ before the price adjustment phase is shown on the left. The numbers below the edges from $s$ to $[n]$ and $[m]$ to $t$ are the capacities of the edges (earning limit and prices respectively). The numbers on top of the agents show their respective surpluses w.r.t. a flow in $N_p$ that minimizes the $\ell_2$-norm of the surpluses. The set $S= \{b_1, b_2, b_3 \}$ and $\Gamma(S) = \{g_2, g_3, g_4 \}$. The price adjustment phase increases the prices of the goods in $\Gamma(S)$ and the inflow to $\Gamma(S)$ until either a new MBB edge appears in $N_p$ or one of the surpluses of an agent in $S$ equals that of some agent outside $S$. The figure on the right shows the MBB network after the price adjustment phase -- a new MBB edge from $b_3$ to $g_5$ appears and the $\ell_1$-norm of the surpluses decreases.}}
\end{figure}

\noindent{\em Correctness.}
We need to first argue that $f'$ is a {\em valid-flow} in $N_{p'}$. Note that if $f'$ augments flow on valid edges in $N_{p'}$, it is easy to see that it satisfies flow conservation and capacity constraints. So we focus only on showing that $f'_{ij} > 0$ if and only if $(i,j) \in N_{p'}$. Note that, since flow on edges are scaled multiplicatively, $f'_{ij}>0$ if and only if $f_{ij}>0$. Therefore, it suffices to prove that no edge with positive flow under $f$ can disappear during the price update. Since the prices of goods in $\Gamma(S)$ are increased by factor $x$ and those in $[m]\setminus \Gamma(S)$ are decreased by factor $y$, only the edges to $\Gamma(S)$ can disappear. However, all agents in $S$ have MBB edges only to goods in $\Gamma(S)$ and since the price of all these goods increases by the same factor, none of the edges from $S$ to $\Gamma(S)$ disappear. The only edges that may disappear are the ones from agents outside $S$ to goods in $\Gamma(S)$. We next argue that these edges must have zero flow. 


To this end, note that $f$ was the flow that minimized the $\ell_2$-norm of the surpluses of the agents in $N_p$. Suppose $i' \notin S$ has a positive flow to some good $j \in \Gamma(S)$, i.e., $f_{i'j} \geq \varepsilon > 0$. Since $j\in \Gamma (S)$, there must be an $i\in S$ with an MBB edge to $j$. Since $S$ is the set of agents with highest surplus, we can conclude that $r_f(i) > r_f(i')$. Then, we can push more flow from $i$ to $j$ and push back the same amount of flow from $j$ to $i'$ continuously. Note that in this process the $\ell_1$-norm of the surplus remains the same while the $\ell_2$-norm decreases (as two surpluses $r_f(i)$ and $r_f(i')$ are moving closer to each other), which is a contradiction. Thus $f_{i'j} = 0$.

Now, we are ready to argue why the above procedure should converge. The price and flow update influences the surpluses of the agents. We show that the $\ell_1$-norm of the surplus monotonically decreases. Note that, after a price adjustment phase,

{\small
\begin{align*}
 \sum_{i \in [n]} r_{f'}(i) &= \sum_{i \in [n]} (1 - \sum_{j \in [m]} f'_{ij})= \sum_{ j \in [m]} (p'_j - \sum_{i \in [n]} f'_{ij} ) \quad \quad \quad \quad \quad \quad \quad \quad \quad \quad \quad \quad \quad \quad (\sum_{j \in [m]} p'_j = n)\\
                            &= \sum_{ j \in \Gamma(S)} (p'_j - \sum_{i \in [n]} f'_{ij} ) +  \sum_{ j \notin \Gamma(S)} (p'_j - \sum_{i \in [n]} f'_{ij} ) = x\sum_{ j \in \Gamma(S)} (p_j - \sum_{i \in [n]} f_{ij} ) +  y\sum_{ j \notin \Gamma(S)} (p_j - \sum_{i \in [n]} f_{ij} )
\end{align*}
\begin{align*}
                            &= y\sum_{ j \notin \Gamma(S)} (p_j - \sum_{i \in [n]} f_{ij} ) \quad \quad \quad \quad \quad \quad\quad \quad \quad \quad \quad \quad \quad \quad \quad \text{(goods in $\Gamma(S)$ are fully sold)}\\
                            &< \sum_{ j \in [m]} (p_j - \sum_{i \in [n]} f_{ij} ) = \sum_{i \in [n]} (1 - \sum_{j \in [m]} f_{ij})=\sum_{i \in [n]} r_f(i). \quad \quad\quad\quad\quad\quad (y < 1, \sum_{j \in [m]} p_j = n)
\end{align*}
}

Furthermore, every flow that minimizes the $\ell_2$-norm of the surpluses is also a maximum flow~\cite{DevanurPSV08}. Thus, after the flow update phase,  the $\ell_1$-norm of the surpluses does not increase, implying that after each iteration the $\ell_1$-norm of the surpluses of the agents decreases.

To get faster convergence, many algorithms show that the $\ell_2$-norm of the surpluses decreases significantly. Intuitively it is equivalent to arguing that surpluses of the agents are getting more balanced, and this is straightforward to see. Observe that
\begin{itemize}
\setlength\itemsep{0em}
	\item the $\ell_1$-norm decreases,
	\item the surplus of the agents in $S$ (the high surplus agents) decreases, and 
	\item the surplus of the agents in $[n] \setminus S$ (the low surplus agents) increases. 
\end{itemize} 
A proper analysis can show that the $\ell_2$-norm decreases as well. The decrease in the $\ell_2$-norm is used to prove polynomial convergence in the exchange model too ~\cite{DuanM15, DuanGM16, ChaudhuryM18}.

\paragraph{Non-monotone surpluses for CE with chores.} We now elaborate the fundamental problem in generalizing 
the overall structure of the above approach to the setting with {\em chores}. 

At CE, since every agent consumes only her Min-Pain-per-Buck (MPB) chores, we can define the MPB-network analogously to the MBB-network for goods. That is, network $N_p$ at prices $p$ where only difference is that edge $(i,j)$ is present if $\tfrac{d_{ij}}{p_j} =min_k \tfrac{d_{ik}}{p_k}=\MPB_i$.
Like earlier, then the goal is to find a set of prices $p$ at which the maximum flow in $N_p$ is $n = \sum_{j \in [m]} p_j$. Given a flow $f$ in $N_p$, we define the surplus of an agent $r_f(i) = 1 - f_{si} = 1 - \sum_{j \in [m]}f_{ij}$. Now, we wish to come up with an allocation and price adjustment rules that help decrease the $\ell_1$ or $\ell_2$-norm of the surpluses. At a fixed price, we always set the allocation to the one that minimizes the $\ell_2$-norm of the surpluses. 
Now, we show the crucial problem during a price adjustment phase. Let $S$ be the set of agents with highest surplus and $\Gamma(S)$ be the set of chores that are their neighbors in $N_p$. Note that, in case of chores, agents prefer {\em higher prices} since they fetch better earning. 
\begin{itemize}
	\item \textbf{Increasing prices of chores in $\Gamma(S)$ and decreasing prices of chores outside $\Gamma(S)$:} 
	Increasing prices of chores $\Gamma(S)$ ({\em in-demand} chores) is counter-intuitive, since it will further increase their demand. 
	Formally, after the flow-adjustment step, each agent $i\notin S$ has flow only to the chores outside of $\Gamma(S)$, but may have an MPB edge to a $j\in \Gamma(S)$. Now if prices of chores in $\Gamma(S)$ are increased, and of those outside $\Gamma(S)$ are decreased, then $i$ would strictly prefer $j$ over any chore outside of $\Gamma(S)$. Hence all of her flow-carrying edges will disappear, and to maintain a valid flow we will have to reduce flow on $(s,i)$ edge accordingly which will significantly increase the surplus of agent $i$ and the total surplus. 
	 
	\item \textbf{Decreasing prices of chores in $\Gamma(S)$ and increasing prices of chores outside $\Gamma(S)$:} While this definitely seems more intuitive, the $\ell_1$ and the $\ell_2$-norm of the surpluses can still increase. So we update the prices and flows as in equations~\eqref{priceeq} to~\eqref{floweq3} with $x < 1$ and $y > 1$ this time. Then using the same derivation as in the goods case and the fact that $y>1$ we get that $\sum_{i \in [n]} r_{f'}(i) > \sum_{i \in [n]} r_f(i)$. That is, the $\ell_1$-norm of the surpluses increases (diverges).
\end{itemize}

	A possible workaround for the second bullet above is to allow negative surpluses (or equivalently relaxing the capacity constraints on the flow on the edges from $s$ to $[n]$). This will always ensure that $f$ will saturate all edges from $t$ to $[m]$ or equivalently $f$ has total flow-value of $\sum_{j \in [m]}p_j$. This way, we have sum of the surpluses is always zero. However, our goal now becomes to get the $\ell_2$-norm of the surpluses to be zero (as this will ensure that each agent $i$ earns exactly $1$ through her MPB chores). But observe that the $\ell_2$-norm may increase: Since we are decreasing the outflow from agents in $S$ (who are the high-surplus agents), the surplus of each agent in $S$ increases. Similarly, since we are increasing the outflow from each agent outside $S$ (the low surplus agents), the surplus of the agents outside $S$ decreases. As such, the gap between the high surplus and the low surplus increases, and in turn the $\ell_2$-norm increases. 
	In fact, this is the main bottleneck in using any potential that ``balances" the surpluses.
	\medskip

\paragraph{Overcoming non-monotone surpluses: Nash-welfare.} We show how to overcome the problem of non-monotone surpluses. Decreasing prices of the high-demand chores and increasing that of the low-demand chores is a natural price adjustment scheme. Our solution lies in identifying the correct convergence analysis.

Recall that when we decrease the prices of the high-demand chores and increase the price of the low demand chores, the $\ell_2$-norm of the surplus increases as this process increases the gap between the surpluses of the agents in $S$ and the ones outside $S$ while keeping sum of surpluses the same. However, such a price adjustment can also bring new MPB edges (say $(i,j)$) in the graph from agents in $S$ to chores outside of $\Gamma(S)$. Let $i'$ be an agent outside $S$ that has positive outflow to $j$. Note that at this point we can definitely balance the surplus by pushing some flow from $i$ to $j$ and pushing back some flow from $j$ to $i'$. What is unclear is whether we can balance the flow sufficient enough to compensate for the increase in $\ell_2$-norm prior to the appearance of the MPB edge! This observation leads us to our new potential function. 

\begin{center}
	\emph{We want a potential function which does not change during the price adjustment, but improves during the allocation update with the help of the new MPB edge.}
\end{center}

We define our potential function as the product of the disutilities: $\prod_{i \in [n]} D_i(x_i)$. Crucially observe that the allocation $x_{ij} = f_{ij}/p_j$ does not change during the price adjustment as $f_{ij}$ and $p_j$ are scaled by the same scalar. Therefore, our potential does not change during price update. We now make subtle changes in the price and the allocation adjustment process and then argue about the convergence of our algorithm.

Given a set of prices $p$, let $N_p$ be the MPB flow network. We make a subtle change in the capacities of the edges. We make the capacity of all edges from $s$ to $[n]$ infinite, implying that any max-flow will push a total flow of $\sum_{j \in [m]} p_j$ saturating all edges from $[m]$ to $t$. Given a flow $f$ in $N_p$, let $\of_f(i)$ denote the total outflow from agent $i$, i.e., $\of_f(i) = \sum_{j \in [m]} f_{ij}$.  Our goal now is to determine prices $p$ and a max flow such that $\of_f(i) = 1$ for all $i\in [n]$ and $\sum_{j \in [m]} p_j =n$. Note that we can write $D_i(x_i) = \MPB_i \cdot \of_f(i)$. Therefore, we can re-write our potential $\prod_{i \in [n]} D_i(x_i) = \prod_{i \in [n]} (\MPB_i \cdot \of_f(i)) = \prod_{i \in [n]} \MPB_i \cdot \prod_{i \in [n]} \of_f(i)$. 

Let $S$ be the set of agents with smallest value of $\of_f(i)$ and let $\Gamma(S)$ be the set of chores adjacent to $S$ in $N_p$. We adjust the allocation and prices as follows,
	\medskip

	\noindent\emph{Flow adjustment.} We do not change prices and set flow to the one that maximizes $\prod_{i \in [n]} \of_f(i)$.  
	\smallskip

	\noindent\emph{Price adjustment.} We decrease the prices of the chores in $\Gamma(S)$ by factor of $x<1$ until a new MPB edge appears from some agent in $S$ to a chore outside of $\Gamma(S)$. And we keep a valid flow. 
	
	\[p'_{j} =  \begin{cases} 
	              xp_{j} & j \in \Gamma(S) \\
	              p_{j} & j \notin \Gamma(S) 
	\end{cases} \addtag \label{cpriceeq}
	\]

	\[
	\begin{array}{ccc}
	f'_{ij} =  \begin{cases} 
				   xf_{ij} & j \in \Gamma(S) \\
				   f_{ij} & j \notin \Gamma(S) 
				\end{cases} 
& 	
	f'_{jt} =  \begin{cases} 
					xf_{jt} & j \in \Gamma(S) \\
					f_{jt} & j \notin \Gamma(S) 
				\end{cases} 		
&	
	
	f'_{si} =  \begin{cases} 
					xf_{si} & i \in S \\
					f_{si} & i \notin S 
				\end{cases} \addtag  \label{cfloweq3}
	\end{array}
	\]

\noindent\emph{Correctness.} Crucially, we can still argue that $f'$ is a valid flow in $N_{p'}$ like earlier.\footnote{The most crucial argument is to show that all positive flow carrying edges do not disappear during the price adjustment, i.e., all $(i,j)$ such that $f_{ij} > 0$ are present in $N_{p'}$ (equations of \eqref{cfloweq3} would then automatically imply that $f'_{ij} > 0$ only if $(i,j) \in N_{p'}$). Since we only decrease the prices of the chores in $\Gamma(S)$, the only edges that can disappear are the ones that are incoming to $\Gamma(S)$ from agents outside $S$. Post flow-adjustment step, these edges will carry zero flow. Otherwise we have a chore $j \in \Gamma(S)$, an agent $i \in S$ and an agent $i' \notin S$ such that $f_{ij} > \varepsilon >0$ and $f_{i'j} > \varepsilon >0$. Note that we can push more flow from $i$ to $j$ and push back some flow from $j$ to $i'$ such that $\of_f(i)$ and $\of_f(i')$ move closer to each other while keeping the sum of outflows the same, i.e., $\sum_{i \in [n]} \of_f(i)$ remains same. This improves $\prod_{i \in [n]} \of_f(i)$, which is a contradiction. Thereafter, it is straightforward to see that the $f'$ satisfies flow conservation and capacity constraints in $N_{p'}$.} To show convergence, we show that our potential improves after every price and flow adjustment. First note that the potential does not change during the price adjustment: Let $x$ and $x'$ be the allocations corresponding to the flows $f$ and $f'$ respectively. Note that $x_{ij} = f_{ij}/p_j$ and $x'_{ij} = f'_{ij}/p'_j$ for all $i \in [n]$, $j \in [m]$. From equations~\eqref{cpriceeq} and~\eqref{cfloweq3}, we have $f'_{ij}/p'_j = f_{ij}/p_j$ for all $i \in [n]$, $j \in [m]$, implying that $x = x'$. Therefore $D_i(x'_i) = D_i(x_i)$ and the potential does not change during price adjustment.

Now we show that the potential improves during flow adjustment. Note that, since our potential is $\prod_{i \in [n]} \MPB_i \cdot \prod_{i \in [n]} \of_{f'}(i)$, where  $\prod_{i \in [n]} \MPB_i$ does not change during the flow adjustment, it suffices to show that there is a flow $f''$ in $N_{p'}$ such that $\prod_{i \in [n]} \of_{f''}(i) > \prod_{i \in [n]} \of_{f'}(i)$.  Let $(i,j)$ with $i \in S$ and $j \notin \Gamma(S)$ be the new MPB edge that appears in $N_{p'}$ when we update the flow from $f$ to $f'$. Let $i' \notin S$ be an agent that had positive outflow to $j$. Note that $\of_f(i) < \of_f(i')$ and the flow and price adjustments outlined in equations~\eqref{cpriceeq} and~\eqref{cfloweq3} show that the outflow of agent $i$ decreases and that of $i'$ remains the same. This implies that $\of_{f'} (i)<  \of_f(i) < \of_f(i') = \of_{f'}(i')$. Also, $f'_{i'j} > 0$ as $f_{i'j} > 0$. Let $f''$ be the flow obtained from $f'$ by pushing more flow from $i$ to $j$ and  pushing back the same amount of flow from $j$ to $i'$, while still maintaining $\of_{f''}(i) \leq \of_{f''}(i')$. Note that $\of_{f'}(i)< \of_{f''} (i)\leq \of_{f''} (i') < \of_{f'} (i')$ and $ \of_{f''} (i) + \of_{f''} (i') = \of_{f'}(i) + \of_{f'} (i')$. Therefore, we have $\prod_{i \in [n]} \of_{f''}(i) > \prod_{i \in [n]} \of_{f'}(i)$ (as the outflows are getting more balanced).

We refer the reader to Section~\ref{algorithm} for a more detailed and rigorous explanation of the combinatorial algorithm. The remaining bulk of the effort is put into proving fast (polynomial time) convergence when either $(i)$ the disutility values are $\alpha$-rounded or $(ii)$ we are only aiming for $(1-\varepsilon)$-CEEI. Lastly, note that our algorithm moves from a Nash welfare maximizing allocation at a particular MPB configuration to a Nash welfare maximizing allocation at another MPB configuration. In this process, it also strictly improves the Nash welfare. This shows that finding a CEEI (exactly) is in PLS. Since~\cite{ChaudhuryGMM21} show that this problem is in PPAD, we can conclude that this problem lies in PPAD $\cap$ PLS $=$ CLS.

\subsection{Discrete Setting with Bivalued Preferences}\label{sec:bi}
In the discrete chore division problem, a set of \emph{indivisible} chores needs to be divided among agents in a fair and efficient manner. In this setting, envy-freeness up to one chore (EF1) and Pareto optimality (PO) are popular fairness and efficiency notions, respectively. 
\medskip

\noindent{\bf EF1 and PO.} An allocation $x$ is said to be EF1 if for every pair of agents $(i, i')$, there exists a chore $c(i,i')\in x_i$ such that $D_i(x_i \setminus c(i,i')) \le D_i(x_{i'})$, i.e., the disutility of agent $i$ from $x_i$ after the removal of a chore $c(i,i')$ is at most the disutility of $i$ for $x_{i'}$.\footnote{In the discrete setting, we assume $x$ to be integral, i.e., $x_{ij} \in \{0, 1\}, \forall (i,j)$.}  

An allocation $x$ is said to Pareto dominates another allocation $x'$ if $D_i(x_i) \le D_i(x_i'), \forall i$ with a strict inequality for at least one agent. We say that $x$ is PO if no other allocation $x'$ dominates it. 
\medskip

A major open question in the discrete setting is whether there always exists an allocation that is both EF1 and PO. Very recently,~\cite{GargMQ22,EbadianPS22} answered this question affirmatively when agents have bivalued preferences, i.e., $d_{ij} \in \{1, \beta\}$, for some $\beta >1$. However, their analysis is quite involved, especially due to lack of a potential function for the convergence. Next, we briefly sketch how our new insight on the potential function of the product of disutility values implies a straightforward argument for the convergence. 

First, we briefly sketch the approaches of~\cite{GargMQ22,EbadianPS22}. They both utilize the notions of \emph{integral} competitive equilibrium (CE) and \emph{price envy-freeness up to one chore} (pEF1)~\cite{BarmanKV18}. Since pEF1 implies both EF1 and PO, we only need to find an allocation that is pEF1. For that, they start with a trivial CE $(x,p)$ and achieve pEF1 by \emph{directly} transferring chores (on MPB edges) from a \emph{big earner} to a \emph{least earner}.\footnote{We note that this direct transfer is always possible in the case of bivalued preferences, which may not be the case for general additive preferences, and hence the problem for general additive preferences is still open.}  The main challenge in both the papers is to show that this process converges. Let $p(x_i) = \sum_{j\in x_i} p_j$ is the total payment of doing chores in $x_i$. Big earner, say $b$, is an agent who earns the maximum after the removal of a chore, i.e., $b = \arg\max_{i} \min_{c\in x_i} p(x_i\setminus c)$. Similarly, least earner, say $l$, is an agent who earns the least, i.e., $l = \arg\min_i p(x_i)$. If $p(x_b) - p_c \le p(x_l)$, where $c$ is a chore in $x_b$, then $x$ is a pEF1 allocation. 

Using our new insight on the potential function of the product of disutilities, we next show that it strictly increases in the above setting as well, thereby giving a straightforward proof of the convergence. Since any form of price-update doesn't change the potential function (because the allocation remains the same), we only need to show that it strictly increases during an allocation update. Further, the above discussion implies that we only need to show that if $p(x_b) - p_c > p(x_l)$ then transferring $c$ from $b$ to $l$ results in an increased potential function. Let $x'$ be the new allocation where $x'_b = x_b \setminus c$ and $x'_l = x_l \cup c$. Recall from Section~\ref{sec:alg} that we can write $D_i(x_i) = \prod_i MPB_i \cdot \prod_i p(x_i)$. Since $MPB_i$'s doesn't change during allocation change, we need to show that 
\[ \prod_i p(x_i) < \prod_i p(x'_i) ,\] 
which is
\[ p(x_b)p(x_l) < p(x_b')p(x_l').\] 
Expanding it, we get
\[ p(x_b) p(x_l) < (p(x_b) - p_c) (p(x_l) + p_c).\]
Simplifying it, we get
\[ p_c < p(x_b) - p(x_l), \]
which is clearly true if $x$ is not an pEF1 allocation. Since bivalued preferences are already $(\beta-1)$-rounded utilities, our running time analysis in Appendix~\ref{algorithm} can be adapted to show convergence polynomial in $n, m$ and $1/ (\beta -1)$.

\subsection{PPAD-Hardness for the Exchange Model}\label{mainres3}
In this section, we sketch the proof of Theorem~\ref{mainthm2intro}, i.e., computing a $(1-1/poly(n))$-approximate CE of an exchange instance is PPAD-hard, even if the instance satisfies sufficiency conditions ($SC_1$ and $SC_2$) of Theorem~\ref{sufficientcondition} (see Section \ref{ppadhardness} for the formal and detailed proof). We give a polynomial time reduction from the problem of computing a {\em Nash equilibrium} of a {\em normalized polymatrix game}, a known PPAD-hard problem~\cite{chen2017complexity}.

A polymatrix game is represented by a game graph where each node is a player who plays a two-player game with each of her neighbors. She has to play the same strategy with each of her neighbors and her payoff is the sum of the payoffs on each of her incident edges. If there are $n$ players and each of them has exactly two strategies to choose from, then such a game can be represented by $2n \times 2n$ matrix. When thought of it as $n \times n$ block matrix, where each block is $2 \times 2$ matrix, then $(i,j)^{\mathit{th}}$ block is the payoff matrix of player $j$ for the game on edge $(i,j)$. 
\begin{problem}\textbf{(Normalized Polymatrix Game)~\cite{chen2017complexity}}\\
	\label{prob:ne}
	\textbf{Given}: A $2n \times 2n$ rational matrix $\M$ with every entry in $[0,1]$ and $\M_{i,2j-1} + \M_{i,2j} = 1$ for all $i \in [2n]$ and $j \in [n]$ .\\
	\textbf{Find}: $1/n$-approximate Nash equilibrium: Strategy vector $x \in \mathbb{R}^{2n}_{\geq 0}$ such that $x_{2i-1} + x_{2i} = 1$ and 
	\begin{align*}
	& x^T \cdot \M_{*,{2i-1}} > x^T \cdot \M_{*,2i} + \tfrac{1}{n} \implies x_{2i} = 0.\\
	& x^T \cdot \M_{*,{2i}} > x^T \cdot \M_{*,2i-1} + \tfrac{1}{n} \implies x_{2i-1} = 0.
	\end{align*}
	where $M_{*,k}$ represents the $k^{\mathit{th}}$ column of the matrix $\M$.	
\end{problem}

Given an instance $I = \langle \M \rangle $ of the polymatrix game, we create an instance of chore division $E(I) = \langle A \cup  B, d( \cdot, \cdot) , w(\cdot , \cdot ) \rangle$, such that given any  CE in $E(I)$, we can determine in polynomial time an equilibrium strategy vector $x$ for $I$. Since we will have to create many different {\em types} of agents and chores, for ease of notation, we use $d(i,j), w(i,j)$ and $p(j)$ instead of $d_{ij}, w_{ij}$ and $p_j$ respectively in this section and in Section \ref{ppadhardness}. 

\paragraph{Sketch of the Reduction.} \label{reduction-sketch} The entries of matrix $M$ are encoded into endowments $w(\cdot, \cdot)$ of agents, and the equilibrium strategy vector $x$ of the polymatrix game $I$ can be {\em extracted} from the price vector at a  CE of $E(I).$ The key properties that our hard instance $E(I)$ exhibits are  \emph{pairwise equal endowments}, \emph{fixed earning}, \emph{price equality}, \emph{price regulation} and \emph{reverse ratio amplification} (we will give a precise definition of these properties shortly). These techniques (constructing hard instances exhibiting these properties) have been used earlier to prove PPAD-hardness for the exchange model with only goods when agents have constant elasticity of substitution (CES) utilities~\cite{chen2017complexity} and even for the Fisher model when agents have separable piecewise linear concave (SPLC) utilities~\cite{ChenT09}. However, the challenge is to construct these gadgets, in particular the \emph{reverse ratio amplification gadget} (a brief description will be given towards the end of this section) and make them work together only using linear disutility functions; as clearly this is not possible in case of goods, when agents have linear utility functions.

We now describe our instance $E(I)$. To encode the entries of the matrix $\M$ of the polymatrix game, we introduce two sets of $2n$ chores in $E(I)$, namely $B = \left\{ b_1,b_2, \dots ,b_{2n} \right\}$ and $B' = \left\{ b'_1,b'_2, \dots ,b'_{2n} \right\}$. 
For each $i,j\in [2n]$, we add an agent $a_{ij}$ who brings $\M_{i,j}$ units of chore $b'_i$. The disutility values of these agents are as follows:
\begin{align*}
\forall i\in[2n], j\in [n],\ 
d(a_{i,2j-1},b_{2j-1}) &= (1- \alpha) &\text{and}&    & d(a_{i,2j-1},b_{2j}) &= (1+ \alpha)\\
d(a_{i,2j},b_{2j-1}) &= (1+ \alpha)   &\text{and}&    & d(a_{i,2j},b_{2j}) &= (1 - \alpha),
\end{align*}
for some infinitesimally small $\alpha > 0$. The disutility values that are not specified are all infinity. Observe that the agents $a_{i,j}$s own chores of $B'$, but can only do chores of $B$. Now, to complete the loop, we introduce another set of $2n$ agents, namely $a'_1,a'_2, \dots a'_{2n}$, where for each  $i \in [2n]$ agent $a'_i$ brings $n$ units of chore $b_i$, and can do only the chores in set $B'$. 
\begin{align*}
\forall i\in[n],\ 
d(a'_{2i-1},b'_{2i-1}) &= (1- \alpha') &\text{and}&    & d(a'_{2i-1},b'_{2i}) &= (1+ \alpha')\\
d(a'_{2i},b'_{2i-1}) &= (1+ \alpha')   &\text{and}&    & d(a'_{2i},b'_{2i}) &= (1 - \alpha'),
\end{align*}
for some small $\alpha'\gg {n^2 \cdot \alpha}$. Next, we add a set of $2n$ agents for whom we will ensure certain {\em fixed earnings}, namely $a^f_1,\dots, a^f_{2n}$, and they can do only chores in set $B$. 
\begin{align*}
d(a^f_{2i-1},b_{2i-1}) &= (1- \alpha) &\text{and}&    & d(a^f_{2i-1},b_{2i}) &= (1+ \alpha)\\
d(a^f_{2i},b_{2i-1}) &= (1+ \alpha)   &\text{and}&    & d(a^f_{2i},b_{2i}) &= (1 - \alpha),
\end{align*}

The instance $E(I)$ has additional agents and chores, and none of them has finite disutility towards the chores in set $B$. At any  CE, our instance satisfies the following five key properties. 

\begin{itemize}
	\item \emph{Pairwise equal endowments:} In $E(I)$, the total endowment of chore $b_{2i-1}$ equals the total endowment of the chore $b_{2i}$, and similarly total endowment of $b'_{2i-1}$ equals total endowment of $b'_{2i}$. Also the total endowments of each of these chores is $\mathcal{O}(n)$; from the above construction it is easy to see that this holds for chores in set $B$ and $B'$. 
	\item \emph{Fixed earning:} In any  CE, for each $i \in [2n]$, we have the total earning of agent $a^f_i$ is $(1- \alpha') \cdot (2n - \sum_{j \in [2n]} \M_{j,i})$.  
	
	\item \emph{Price equality:} In any  CE, the sum of prices of the chores $b_{2i-1}$ and $b_{2i}$ is the same for all $i \in [n]$, and it is equal to the sum of prices of the chores $b'_{2i-1}$ and $b'_{2i}$ for all $i \in [n]$. Since the prices of the chores at a  CE are scale invariant, we can assume, without loss of generality, that for all $i \in [n]$, we have $p(b_{2i-1}) + p(b_{2i}) = p(b'_{2i-1}) + p(b'_{2i}) = 2$.  
	\item \emph{Price regulation: } In any  CE, for all $i \in [2n]$ we have 
	\begin{align*} 
	\frac{1 - \alpha}{1 + \alpha} \leq \frac{p(b_{2i-1})}{p(b_{2i})} \leq \frac{1 + \alpha}{1 - \alpha}& &\text{and}& &\frac{1 - \alpha'}{1 + \alpha'} \leq \frac{p(b'_{2i-1})}{p(b'_{2i})} \leq \frac{1 + \alpha'}{1 - \alpha'}.
	\end{align*}
	\item \emph{Reverse ratio amplification:} In any  CE, for all $i \in [2n]$, if $\tfrac{p(b_{2i-1})}{p(b_{2i})} = \tfrac{1 + \alpha}{1 - \alpha}$, then we have $\tfrac{p(b'_{2i-1})}{p(b'_{2i})} = \tfrac{1 - \alpha'}{1 + \alpha'}$ and similarly when  $\tfrac{p(b_{2i-1})}{p(b_{2i})} = \tfrac{1 - \alpha}{1 + \alpha}$,  we have $\tfrac{p(b'_{2i-1})}{p(b'_{2i})} = \tfrac{1 + \alpha'}{1 - \alpha'}$.									
\end{itemize}

We show that our instance $E(I)$ satisfies both $SC_1$ and $SC_2$ conditions of Theorem~\ref{sufficientcondition}, and thereby it admits a CE. Using the above five properties satisfied by a CE, we next describe how the prices at CE give the equilibrium strategy vector of the polymatrix game $I$. Given a  CE price vector $p$ of $E(I)$, construct a vector $x$ as follows. 
\begin{equation}\label{above-eq}
\forall i\in [2n],\ \ \ x_i = \frac{p(b'_{i}) - (1-\alpha')}{2 \cdot \alpha'}
\end{equation}

We will show that $x$ satisfies the Nash equilibrium conditions described in Problem \ref{prob:ne} for instance $I$, thereby completing the reduction. First, let us argue why $x$ is a valid strategy profile for the game, i.e., for each $i \in [n]$, we have $x_{2i-1} \geq 0$, $x_{2i} \geq 0$,  and $x_{2i-1}+x_{2i}=1$. Using the {\em price equality} and {\em price regulation} properties, it follows that for each $i\in [2n]$, $(1-\alpha') \le p(b'_i) \le (1+\alpha')$. This immediately implies $x_i\ge 0,\ \forall i \in [2n]$. Furthermore, for each $i\in [n]$, we have $x_{2i-1}+x_{2i} = \tfrac{p(b'_{2i-1})  + p(b'_{2i}) - 2(1-\alpha')}{2 \cdot \alpha'} = \tfrac{2\alpha'}{2\alpha'} = 1$.

Next, we will show that $x$ satisfies the equilibrium conditions, namely, if $x^T \cdot \M_{*,{2i}} > x^T \cdot \M_{*,2i-1} + \tfrac{1}{n}$, then $ x_{2i-1} = 0$; the proof for the symmetric condition is similar. To show that $x_{2i-1} = 0$, it suffices to show that $p(b'_{2i-1})=(1-\alpha')$ (by \eqref{above-eq}). So let us assume that $x^T \cdot \M_{*,{2i}} > x^T \cdot \M_{*,2i-1} + \tfrac{1}{n}$.  The positive correlation between $x$ and $p$ imply that the total endowment money of the agents $\{a_{j,2i}\ |\ j \in [2n]\}$ is larger than that of the agents $\{a_{j,2i-1}\ |\ j \in [2n]\}$. Since the agents $\{a_{j,2i}\ |\ j \in [2n]\}$ prefer chore $b_{2i}$ to $b_{2i-1}$ and the agents $\{a_{j,2i-1}\ |\ j \in [2n]\}$ prefer chore $b_{2i-1}$ to $b_{2i}$, the difference in their endowment money would further enforce that chore $b_{2i}$ is higher priced than chore $b_{2i-1}$. Next we formalize this intuition, and how that leads to $p(b'_{2i-1})=(1-\alpha')$. 


The only agents who can consume chores $b_{2i-1}$ and $b_{2i}$ are the ones in $\{a_{j,2i}, a_{j, 2i-i}\ |\ j \in [2n]\} \cup \{a^f_i\ |\ i \in [2n]\}$. Out of all these agents, the agents $A(b_{2i}) = \{a_{j,2i}\ |\ j \in [2n]\}\cup \{a^f_{2i}\ |\ i \in [n]\}$ are the agents that have $(1-\alpha)$ disutility towards $b_{2i}$ and $(1+\alpha)$ disutility towards $b_{2i-1}$. Symmetrically define set $A(b_{2i-1})$. The total endowment money of the agents in $A(b_{2i})$ (say $P(b_{2i})$) is,
\begin{align*}
&=\sum_{j \in [2n]} \M_{j,2i} \cdot p(b'_j) + (1- \alpha') \cdot (2n - \sum_{j \in [2n]} \M_{j,2i}) &\text{(fixed earning of $a^f_{2i}$)}\\
&=\sum_{j \in [2n]} \M_{j,2i} \cdot (2\alpha'\cdot x_{j} + (1-\alpha')) +  (1- \alpha') \cdot (2n - \sum_{j \in [2n]} \M_{j,2i}) &\text{(by equation~\eqref{above-eq})}\\ 
&=\sum_{j \in [2n]} 2\alpha'\cdot x_{j} \cdot \M_{j,2i} + (1-\alpha') \cdot\sum_{j \in [2n]}  \M_{j,2i} +  (1- \alpha') \cdot (2n - \sum_{j \in [2n]} \M_{j,2i})\\          
&= 2\alpha'x^T \cdot \M_{*,2i} +  2n \cdot (1-\alpha').               
\end{align*}

Similarly, the total endowment price of the agents in $A(b_{2i-1})$ (say $P(b_{2i-1})$) is $2\alpha'x^T \cdot \M_{*,2i-1} +  2n \cdot (1-\alpha')$ which is less than $P(b_{2i})$. Recall again that, agents $A(b_{2i}) \cup A(b_{2i-1})$ can only consume chores $b_{2i}$ and $b_{2i-1}$, and no other agents can consume these two chores. We now prove that if $A(b_{2i})$ earn all their endowment money of $P(b_{2i})$ by only consuming (doing) chore $b_{2i}$, then the price of $b_{2i}$ ($p(b_{2i})$) has to be significantly high, leading to the violation of the \emph{price-regulation} property: By \emph{pairwise equal endowments} property, we have equal linear total endowments of both $b_{2i}$ and $b_{2i-1}$, say $cn$. Therefore, if the agents in $A(b_{2i})$ earn all their endowment money of $P(b_{2i})$ only from chore $b_{2i}$, we have $p(b_{2i}) \geq P(b_{2i}) / (cn)$, and $p(b_{2i-1}) \leq P(b_{2i-1}) / (cn)$, implying that
{\small 
\[
p(b_{2i}) - p(b_{2i-1}) \geq \frac{P(b_{2i}) - P(b_{2i-1})}{cn} =  \frac{2\alpha' \cdot (x^T \cdot \M_{*,2i} - x^T \cdot \M_{*,2i})}{cn}\geq \frac{2 \alpha' \cdot 1/n}{cn} \gg 2\alpha \quad \text{ (as $\frac{\alpha'}{n^2}  \gg \alpha$)}.  
\]
}

Since $p(b_{2i}) + p(b_{2i-1}) = 2$ (by the \emph{price equality} property), this immediately implies that $p(b_{2i})/ p(b_{2i-1})$ $>  (1+\alpha)/ (1-\alpha)$ (violation of the \emph{price regulation} property). This implies that the agents in $A(b_{2i})$ earn some of their endowment money of $P(b_{2i})$ from the chore $b_{2i-1}$ as well. Since all agents in $A(b_{2i})$ have a disutility of $1-\alpha$ for the chore $b_{2i}$ and $1+\alpha$ for the chore $b_{2i-1}$, agents in $A(b_{2i})$ can do some amount of $b_{2i-1}$ only if  $\tfrac{p(b_{2i-1})}{p(b_{2i})} \geq \tfrac{1+\alpha}{1-\alpha}$. Then, the \emph{price regulation} property dictates that $\tfrac{p(b_{2i-1})}{p(b_{2i})} = \tfrac{1+\alpha}{1-\alpha}$.  Finally, by \emph{reverse ratio amplification} property, we have $\tfrac{p(b'_{2i-1})}{p(b'_{2i})} = \tfrac{1-\alpha'}{1+\alpha'}$. This implies $p(b'_{2i-1})=(1-\alpha')$ and thereby $x_{2i-1} = 0$, when $x^T \cdot \M_{*,{2i}} > x^T \cdot \M_{*,2i-1} + \tfrac{1}{n}$. A very symmetric argument will show that if $x^T \cdot \M_{*,{2i-1}} > x^T \cdot \M_{*,2i} + {1} / {n}$, then $ x_{2i} = 0$. 

While it is still intuitive for our choice of $\alpha' \gg n^2 \cdot \alpha $, that we can enforce $\tfrac{p(b_{2i-1})}{p(b_{2i})} = \tfrac{1+\alpha}{1-\alpha}$ when $x^T \cdot \M_{*,{2i}} > x^T \cdot \M_{*,2i-1} + \tfrac{1}{n}$, perhaps the most subtle and  mysterious  gadget in our reduction is the \emph{reverse ratio amplification} gadget. This is also the most crucial gadget as such a gadget cannot be realized in the linear exchange setting with divisible goods (all the other gadgets can be realized). We now briefly sketch the \emph{reverse ratio amplification} gadget.

\paragraph{Reverse Ratio Amplification Gadget.}
We sketch the reverse ratio amplification gadget for $\alpha' = 3\alpha /2$ and in the end elaborate how to make it work for $\alpha' \gg n^2 \cdot \alpha$. For each $i \in [n]$, we add an agent $\overline{a}_i$ that has an endowment of $\delta = n \alpha / 2$ of both chores $b'_{2i-1}$ and $b'_{2i}$ and has a disutility of $1$ for both chores $b_{2i-1}$ and $b_{2i}$ (and $\infty$ for all other chores), implying that $\overline{a}_i$ will earn her endowment money from the more expensive chore among $b'_{2i-1}$ and $b'_{2i}$. We now show that when $\tfrac{p(b_{2i-1})}{p(b_{2i})} = \tfrac{1 + \alpha}{1 - \alpha}$, then we have $\tfrac{p(b'_{2i-1})}{p(b'_{2i})} = \tfrac{1 - \alpha'}{1 + \alpha'}$. So assume that $\tfrac{p(b_{2i-1})}{p(b_{2i})} = \tfrac{1 + \alpha}{1 - \alpha}$. The \emph{price equality} property dictates that $p(b_{2i-1}) = 1+\alpha$ and $p(b_{2i}) = 1 - \alpha$. 

Now recall agents $a'_{2i-1}$ and $a'_{2i}$. They own $n$ units of chores $b_{2i-1}$ and $b_{2i}$ respectively and therefore their respective total endowment are $n \cdot (1+\alpha)$ and $n \cdot (1-\alpha)$. Like earlier, we now prove that if $a'_{2i-1}$ earns her entire endowment money of $n (1+\alpha)$ from the chore $b'_{2i-1}$, then this causes violation of the \emph{price regulation} property: If $a'_{2i-1}$ earns her entire endowment from $b'_{2i-1}$, we have the  price of $b'_{2i-1}$ to be at least $\tfrac{n (1+\alpha)}{c'n} = \tfrac{1+\alpha}{c'}$, where $c'n$ is the total endowment of chores $b'_{2i-1}$ and $b'_{2i}$ (they are equal and linear in $n$ by \emph{pairwise equal endowments} property). Similarly, the only other agents having finite disutility towards the chore $b'_{2i}$ are the agents $a'_{2i}$ and $\overline{a}_i$. Therefore, using $\delta=n \cdot \alpha / 2$, and $p(b'_{2i-1}) + p(b'_{2i}) =2$ by the \emph{price equality} property, we have,

\[
p(b'_{2i}) \leq \frac{n(1-\alpha) + \delta \cdot (p(b'_{2i-1}) + p(b'_{2i}))}{c'n}=\frac{n(1-\alpha) + 2\delta}{c'n} =\frac{n (1-\alpha) + n \alpha}{c'n} =\frac{1}{c'}.
\]

Therefore, we have that $p(b'_{2i-1}) \geq (1+\alpha) / c' > 1/c' \geq p(b'_{2i})$. This implies that the agent $\overline{a}_i$ earns her entire endowment price of $\delta \cdot (p(b'_{2i-1}) + p(b'_{2i})) = 2 \delta$ from the chore $b'_{2i-1}$ (as the disutility to price ratio of $b'_{2i-1}$ is strictly less than that of $b_{2i}$ for $\overline{a}_i$). Thus, we have that $p(b'_{2i-1}) \geq  \tfrac{n(1+\alpha) + 2\delta}{c'n}$ and $p(b'_{2i}) \leq \tfrac{n(1-\alpha)}{c'n}$, implying that, if we set $\alpha'= 3\alpha/2$, then

\[
\frac{p(b'_{2i-1})}{p(b'_{2i})} \geq \frac{n \cdot (1+\alpha) + 2\delta}{n \cdot (1-\alpha)}= \frac{n \cdot (1+\alpha) + n\alpha}{n \cdot (1-\alpha)}=\frac{1+ 2\alpha}{1-\alpha}>\frac{1+\alpha'}{1-\alpha'} \quad \quad \quad \quad \text{(as $\delta = n \cdot \alpha / 2$)} 
\]

Therefore, we have violation of \emph{price regulation}. This implies that at a  CE, agent $a'_{2i-1}$ earns some of her endowment money from the chore $b'_{2i}$. Since $a'_{2i-1}$ has a disutility of $(1-\alpha)$ for $b'_{2i-1}$ and $(1+\alpha)$ for $b'_{2i}$, she can do the chore $b'_{2i}$ only if $p(b'_{2i-1})/p(b'_{2i}) \geq (1-\alpha)/(1+\alpha)$. Then, by the \emph{price regulation property}, we have $p(b'_{2i-1})/p(b'_{2i}) = (1-\alpha)/(1+\alpha)$. The proof for the case $\tfrac{p(b_{2i-1})}{p(b_{2i})} = \tfrac{1 - \alpha}{1 + \alpha} \implies \tfrac{p(b'_{2i-1})}{p(b'_{2i})} = \tfrac{1 + \alpha'}{1 - \alpha'}$ is symmetric.\footnote{Note that we are crucially using the fact that increasing the price of a chore makes it more attractive to an agent (agent $\overline{a}_i$ is only interested in $b'_{2i-1}$ whenever it has a higher price than $b'_{2i}$). This property is not satisfied in the linear exchange setting with divisible goods (in fact this is the fundamental difference between the two settings) where we cannot realize this gadget.}

When $\alpha' \gg n^2 \cdot \alpha$, i.e., when we need significantly higher amplification, we create a sequence of $K \in \mathcal{O}(\log(n))$ sets of chores $B^1 = \{b^1_1, b^1_2, \dots ,b^1_{2n}\}$, $B^2 = \{b^2_1, b^2_2, \dots ,b^2_{2n}\}, \dots,  B^K = \{b^K_1, b^K_2,$ $ \dots ,b^K_{2n}\}$ with $B^1 = B$ and $B^K = B'$. We enforce reverse ratio amplification between every consecutive sets of chores, i.e., for all $\ell \in [K]$ and for all $i \in [n]$, if $\tfrac{p(b^{\ell}_{2i-1})}{p(b^{\ell}_{2i})} = \tfrac{1 + \alpha_{\ell}}{1 - \alpha_{\ell}}$, then we have $\tfrac{p(b^{\ell+1}_{2i-1})}{p(b^{\ell+1}_{2i})} = \tfrac{1 - \alpha_{\ell+1}}{1 + \alpha_{\ell+1}}$ and similarly when  $\tfrac{p(b^{\ell}_{2i-1})}{p(b^{\ell}_{2i})} = \tfrac{1 - \alpha_{\ell}}{1 + \alpha_{\ell}}$, then we have $\tfrac{p(b^{\ell+1}_{2i-1})}{p(b^{\ell+1}_{2i})} = \tfrac{1 + \alpha_{\ell+1}}{1 - \alpha_{\ell+1}}$, where $\alpha_{\ell+1} = 3 \alpha_{\ell}/2$, thereby giving us the desired amplification from $B$ to $B'$.

We re-emphasize that while the concepts of the gadgets used in our reduction are standard~\cite{chen2017complexity, ChenT09}, the key contribution is to realize these gadgets with just linear disutility functions. As a consequence, there are subtle, yet important technical differences to gadgets in~\cite{chen2017complexity, ChenT09}. For instance, our amplification sequence has to be \emph{alternating}, while the one in~\cite{chen2017complexity} is not.

The formal proof with all the details can be found in Section~\ref{ppadhardness}, where we give the complete construction and prove that the instance exhibits all of the above five properties, as well as meets the sufficient conditions (Theorem~\ref{sufficientcondition}) for existence of CE.

\section{Applications and Further Related Work}\label{sec:rw}
The CEEI model is applicable to allocation problems naturally arising 
in a wide range of real-life settings such as dividing tasks among various team members, deciding teaching assignments between faculty, or splitting liabilities when dissolving a partnership. The exchange model also occurs naturally in many day to day scenarios, e.g., a set of university students teaching each other in a group study, to optimize the time and effort required. At a larger scale, \emph{timebanks}\footnote{\url{https://timebanks.org/}} are such reciprocal service exchange platforms which have around 30,000 to 40,000 users from the United States. In a timebank, individuals from a certain community give services to one another and earn \emph{time credit}. Thereafter, each individual uses their time credit to receive services. Competitive equilibrium (CE) provides a systematic way to do the exchange: it consists of prices (payment)\footnote{Equivalent of time credit in time banks.} for chores and an allocation such that all chores are completely assigned and each agent gets her most preferred bundle (\emph{optimal bundle}) subject to her budget constraint\footnote{Here the budget constraint of an agent is that she has to earn enough to pay for her initial set of chores.}. 

The problem of computing CE has been extensively studied. We only discuss previous work that appears most relevant. For linear CEEI (Fisher) model with goods, the CE set is captured by the Eisenberg-Gale convex program~\cite{EisenbergG59}, which maximizes the Nash welfare defined as the geometric mean of agents' utilities. Later, Shmyrev~\cite{Shmyrev09} obtained another convex program for this problem. ~\cite{ColeDGJMVY17} provides a dual connection between these and other related convex programs. A combinatorial polynomial-time algorithm for computing a CE is given in~\cite{DevanurPSV08}. Later, strongly polynomial-time algorithms are obtained for this problem~\cite{Orlin10,Vegh12}. 

For the linear exchange model with goods, many convex programming formulations were obtained; see~\cite{DevanurGV16} for details. Initial polynomial-time algorithms for this model are based on ellipsoid and interior-point methods~\cite{Jain07,Ye08}. The first combinatorial polynomial-time algorithm is given in~\cite{DuanM15}, which was later improved in~\cite{DuanGM16}. Recently,~\cite{GargV19} gives the first strongly polynomial-time algorithm for this problem.

The equilibrium computation problem in all models with goods is basically PPAD-hard whenever the CE set is non-convex. In the CEEI (Fisher) model, it is polynomial-time for weak gross substitutes (WGS) and homogeneous utility functions~\cite{CodenottiMPV05,Eisenberg61}, and PPAD-hard for separable piecewise linear concave (SPLC) utilities~\cite{ChenT09}. In the exchange model, it is polynomial-time for WGS utilities~\cite{CodenottiMPV05,GargHV21}, and beyond that, it is essentially PPAD-hard~\cite{chen2017complexity,ChenDDT09,CodenottiSVY06,GargMVY17}. More recently,~\cite{ChenCPY22} shows PPAD-hardness for the Hylland-Zeckhauser model~\cite{HyllandZ79}, which is essentially the CEEI model under linear preferences with additional matching constraints. 

For the linear CEEI model with chores,~\cite{BogomolnaiaMSY17} consider the case where all (dis)utility values to be finite. ~\cite{BogomolnaiaMSY17} show that critical points of the geometric mean of agent's disutilities on the (Pareto) efficiency frontier are the CE profiles. By building on this characterization,~\cite{BoodaghiansCM22} give an FPTAS for finding an approximate CE. For the special case of constantly many agents (or chores), polynomial-time algorithms are known for computing a CE~\cite{BranzeiS19,GargM20}. 

For the linear exchange model with chores,~\cite{ChaudhuryGMM21} gives a linear complementarity problem (LCP) formulation and a complementary pivot algorithm for computing a CE. 


\section{Combinatorial Algorithm}\label{algorithm}
In this section, we give an exact $\tilde{\mathcal{O}}(n^2m^2/\alpha^2)$-time algorithm for instances where every disutility value can be expressed as a power of $(1+\alpha)$, i.e., $d_{ij} = (1 + \alpha)^{k_{ij}}$ for some polynomially bounded integer $k_{ij}$. Then, we show how to modify the same algorithmic framework to get a FPTAS for determining a CEEI when we make no assumption on the disutility values.

Our algorithm is iterative and each iteration comprises of two phases: (1) the price update phase that changes the prices of the chores without changing the allocation and then (2) allocation update phase which changes the allocation without altering the prices. Throughout the algorithm, we maintain that all chores are completely allocated and all agents earn their money from their respective MPB chores. Thus, the main goal of our algorithm is to find identical earnings for the agents (as this ensures that we are at a CEEI) while maintaining the invariants. Thereafter, we show that at the end of each iteration, we can improve the Nash welfare of the allocation by some multiplicative factor. We now elaborate the two phases.

\subsection{Price Update}
In this phase, we identify the \emph{chores in demand}: Let $e_i$ denote the earning of agent $i$, i.e., $e_i = \sum_j x_{ij}p_j$ where $x_{ij}$ is the amount of chore $j$ assigned to $i$ and $p_j$ is the payment for doing unit amount of chore $j$. Let $S$ be the set of agents that have earnings less than one, i.e., $S = \{i \mid e_i < 1 \}$ and let $\Gamma(S)$ be the set of MPB chores for the agents in $S$, i.e., $\Gamma(S) = \{j  \mid d_{ij}/ p_j = \MPB_i \text{ for some } i \in S \}$. We call every chore in $\Gamma(S)$ as a chore in demand. Thereafter, we decrease the prices of all the chores in demand (chores in $\Gamma(S)$) until some agent in $S$ gets interested in a chore in $[m] \setminus \Gamma(S)$, i.e., we have a new MPB edge from an agent in $S$ to a chore outside $\Gamma(S)$. Note that although this update further decreases the earnings of the agents in $S$, it does not decrease Nash welfare as the disutility of each agent is independent of the prices (depends only on the allocation). The price update is summarized in Algorithm~\ref{price-update}.

\begin{algorithm}[H]
    \caption{Price-Update($x,p,S$)}
	\begin{algorithmic}
      \State $\Gamma(S) \gets \{j \mid \exists i \in S, \text{ s.t. } d_{ij}/p_j = \MPB_i \}$
      \State $\gamma \gets \mathit{min}_{i \in S, j \notin \Gamma(S)} \frac{MPB_i \cdot p_j}{d_{ij}}$.
      \For{$j \in \gamma(S)$}
         \State $p_j \gets \gamma \cdot p_j$.
       \EndFor
       \For{$i \in S$}
         \State $e_i \gets \gamma \cdot e_i$.
       \EndFor  
	\end{algorithmic}	
    \label{price-update}
\end{algorithm}

\subsection{Allocation Update}
This is the second phase of an iteration in our algorithm. In this phase, we aim to determine an allocation that balances the earnings of the agents with the help of the new MPB edges created during the price update. Let $E$ denote the set of new MPB edges that appear after the price update phase and let $J = \{ j \mid (i,j) \in E\}$. Let $\emax = \mathit{max}_{i \in S} e_i$ and $\emin = \mathit{min}_{i \notin S} e_i$. The earning balancing is done depending on the total prices of the chores in $J$. In particular, if $\sum_{j \in J} p_j > (\emin - \emax)/2$, then 
we compute an allocation (or equivalently a money flow) that maximizes the product of earnings (we call this procedure Balance-allocation$(\cdot)$)\footnote{We will show later in this section that we choose $S$ in such a way that $\emin \geq \emax$.}. If $\sum_{j \in J} p_j \leq (\emin - \emax)/2$, then we increase the earning of the agents in $S$ and decrease that of agents in $[n] \setminus S$ by completely allocating the chores in $J$ to agents in $S$ along the MPB edges. The whole procedure is summarized in Algorithm~\ref{allocation-update}.

\begin{algorithm}[H]
	\caption{Allocation-Update($x,p,S$).}
	\begin{algorithmic}
		\State $E \gets $ all new MPB edges from $S$ to $[m] \setminus \Gamma(S)$ after price-update.
		\State $J \gets \{j \mid (i,j) \in E\}$.
		\State $\emax \gets \mathit{max}_{i \in S} e_i$ and $\emin \gets \mathit{min}_{i \in [n] \setminus S} e_i$. 
		\If{$\sum_{j \in J} p_j >  (\emin - \emax)/2$}
		  \State $(x,p,S) \gets $ \textbf{Balance-allocation$(p)$}
	    \Else
	       \For{all $j \in J$}
	          \State Pick an arbitrary $i \in S$ such that $(i,j) \in E$.
	         \label{alg-line} \State for all $i' \in [n] \setminus S$ such that $x_{i'j} > 0$, set $x_{i'j} \gets 0$ and update $e_{i'} = e_{i'} - x_{i'j}p_j$
	         \State Set $x_{ij} \gets 1$ and update $e_i = e_i + p_j$.
	       \EndFor
	     \EndIf	
	\end{algorithmic}	
	\label{allocation-update}
\end{algorithm}

\begin{algorithm}[H]
	\caption{Balance-allocation$(p)$}
     \begin{algorithmic}
     	\State $x \gets \mathit{argmax}_{x \in \mathbb{R}^{nm}_{\geq 0} \text{and } \sum_{i \in [n]}x_{ij} =1 } \prod_{i \in [n]} e_i$, where $e_i = \sum_{j \in [m]} x_{ij}p_j$.
     	\State $S \gets $ the set of agents with lowest surplus.
     \end{algorithmic}
 \end{algorithm}
Overall, in the allocation-update phase, we determine an allocation that improves the Nash welfare of the allocation subject to the agents earning only along their MPB edges.

\subsection{Main Algorithm}
Our overall algorithm alternates between the price-update and allocation-update phases. We start with an initial set of prices (where each price is of the form $(1+\alpha)^k$ for some integer $k$\footnote{Note that $k$ may be negative.}) and an allocation of the chores to the agents along MPB edges. We set $S$ to be the set of agents with lowest earning. Thereafter, we run the price-update and allocation-update phases alternatively until all agents have equal budgets (see Algorithm~\ref{main-algorithm}). The convergence of our algorithm follows from the fact that after polynomially many iterations, we can observe an improvement of $1+ \Omega(\alpha^2)$ in the Nash welfare. Since the Nash welfare is upper bounded by $(n \cdot D_{max})^n$ where $D_{max}=\max_{(i,j):d_{ij} < \infty} d_{ij}$, our algorithm will converge after $\textup{poly}(n,m, \log (D_{max})/ \log(1+\alpha^2)) = \textup{poly}(n,m,\log(D_{max}),1/\alpha)$ iterations.

\begin{algorithm}[H]
	\caption{Full Algorithm}\label{alg:full}
	\begin{algorithmic}
	\For{ all chores $j \in [m]$}
	 \State $p_j \gets  \mathit{min}_{i \in [n]} d_{ij}$.
	\EndFor
	 \State $(x,p,S) \gets $\textbf{Balance-allocation}$(p)$.   
	\While{$\exists i,j\in [n] \text{ such that } e_i \neq e_j$} 
	  \State \textbf{Price-update}$(x,p,S)$.
	  \State \textbf{Allocation-update}$(x,p,S)$
	\EndWhile 
	\State \textbf{Return} $x,p$.
	\end{algorithmic}
    \label{main-algorithm}
\end{algorithm}


We first argue about the correctness of the algorithm. The next series of observations will show that all chores are allocated and are allocated only along MPB edges. Therefore, if we have  $e_i = e_j$ for all $i,j \in [n]$, then we are at a CEEI. 

\begin{observation}
	\label{powersofalpha}
	Throughout the algorithm, the prices of the chores are powers of $(1+\alpha)$. Furthermore, any price update phase decreases the prices of the chores in $\Gamma(S)$ and the earnings of the agents in $S$ by at least $1 + \alpha$.  
\end{observation}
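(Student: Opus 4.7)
The plan is to prove both parts by induction on the number of iterations, with the inductive hypothesis being that at the start of each iteration, every chore price is an integer power of $(1+\alpha)$.

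For the base case, recall that initial prices are set to $p_j = \min_{i \in [n]} d_{ij}$. Since the hypothesis of the theorem states each $d_{ij}$ is of the form $(1+\alpha)^{k_{ij}}$ with $k_{ij}$ integral, each $p_j$ is a power of $(1+\alpha)$. Note that the initial Balance-allocation step only alters $x$, not $p$, so this property persists into the first price-update.

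For the inductive step, I would analyze the scaling factor $\gamma = \min_{i \in S,\, j \notin \Gamma(S)} \frac{\MPB_i \cdot p_j}{d_{ij}}$. Fix the minimizing pair $(i,j)$, and let $j' \in \Gamma(S)$ be a chore with $\MPB_i = d_{ij'}/p_{j'}$ (such a $j'$ exists by definition of $\Gamma(S)$ and $\MPB_i$). Then
\[
\gamma \;=\; \frac{d_{ij'}}{p_{j'}}\cdot\frac{p_j}{d_{ij}}.
\]
By the inductive hypothesis $p_j, p_{j'}$ are powers of $(1+\alpha)$, and by hypothesis so are $d_{ij}, d_{ij'}$. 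Hence $\gamma$ itself is a power of $(1+\alpha)$. After the update, the new prices are either unchanged or multiplied by $\gamma$, so they remain powers of $(1+\alpha)$; this closes the induction. The earnings of agents in $S$ are scaled by the same $\gamma$ by the Price-Update pseudocode, so they also stay in the same geometric grid.

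The main (and only interesting) step is to show $\gamma \le 1/(1+\alpha)$. The key observation is that immediately before the price-update, there is no MPB edge from $S$ to $[m]\setminus \Gamma(S)$ (by definition of $\Gamma(S)$), so for every $i \in S$ and $j \notin \Gamma(S)$ we have $d_{ij}/p_j > \MPB_i$, equivalently $\MPB_i \cdot p_j/d_{ij} < 1$. Combined with the previous paragraph, each such ratio is a power of $(1+\alpha)$ that is strictly less than $1$, hence at most $(1+\alpha)^{-1}$. Taking the min over $(i,j)$ preserves this bound, giving $\gamma \le 1/(1+\alpha)$. Therefore prices of chores in $\Gamma(S)$ and earnings of agents in $S$ shrink by a factor of at least $(1+\alpha)$, establishing the second claim. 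I do not anticipate a real obstacle here; the only subtlety is to be careful that in the ratio $\gamma$, both numerator and denominator are built entirely from powers of $(1+\alpha)$, which is precisely what the inductive hypothesis guarantees.
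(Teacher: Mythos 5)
Your proof is correct and follows essentially the same route as the paper's: induction on iterations, noting that $\gamma$ is a quotient of powers of $(1+\alpha)$ and hence itself such a power, and then using $\gamma<1$ (since otherwise $j^*\in\Gamma(S)$) to conclude $\gamma\le(1+\alpha)^{-1}$. The only cosmetic difference is that you expand $\MPB_i$ explicitly as $d_{ij'}/p_{j'}$, which the paper leaves implicit.
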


\begin{proof}
	We prove our claim by induction on the iterations. The claim holds by definition during the first iteration as we initialize each price $p_j$ as $\mathit{max}_{i \in [n]} d_{ij}$ which is a power of $1+\alpha$. Now consider any arbitrary iteration of the algorithm. By induction hypothesis, we have the prices to be powers of $1+\alpha$. Since the disutilities are also powers of $1+\alpha$, we have that $\MPB_i$ is also a power of $1+\alpha$ for all $i \in [n]$.  Let $i^* \in S$ and $j^* \notin \Gamma(S)$ be such that $\gamma = \tfrac{p_{j^*} \cdot \MPB_{i^*}}{d_{i^*j^*} }$. Note that $\gamma$ is also a power of $1+\alpha$ and since the prices are scaled by $\gamma$, the new set of prices will also remain powers of $1+\alpha$.  We are left to show that $\gamma \leq (1 + \alpha)^{-1}$. Since $\gamma $ is a power of $(1+\alpha)$, it suffices to show that $\gamma < 1$ or equivalently $\MPB_{i^*} < d_{i^*j^*}/ p_{j^*}$, which is true as otherwise $j^* \in \Gamma(S)$.   
\end{proof}

\begin{observation}
	\label{Sproperty}
	At the beginning of every iteration, for all agents $i \in [n] \setminus S$ and chores $j \in \Gamma(S)$, we have $x_{ij} = 0$.
\end{observation}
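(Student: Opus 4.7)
The plan is to prove this observation by induction on the iteration number, with the inductive invariant being the statement itself. The two main tools are the Nash-welfare optimality of \emph{Balance-allocation} and the explicit structure of the direct-transfer branch of \emph{Allocation-update}.

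For the base case, I will use the initial call to \emph{Balance-allocation}$(p)$ in Algorithm~\ref{main-algorithm}. Suppose for contradiction that some $i' \in [n] \setminus S$ has $x_{i'j} > 0$ for a chore $j \in \Gamma(S)$. Then some $i \in S$ has $j$ on an MPB edge, and since $x$ is supported on MPB edges (being the output of \emph{Balance-allocation} on $N_p$), so does $i'$. Because $i$ is in $S$ (the lowest earners) and $i'$ is not, we have $e_i < e_{i'}$ strictly. Transferring an infinitesimal amount $\varepsilon$ of chore $j$ from $i'$ to $i$ keeps $\sum_k x_{kj} = 1$ and the allocation MPB-feasible; the earnings of $i$ and $i'$ become $e_i + \varepsilon p_j$ and $e_{i'} - \varepsilon p_j$ respectively, while all other $e_k$ are unchanged. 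The inequality $(e_i + \varepsilon p_j)(e_{i'} - \varepsilon p_j) > e_i e_{i'}$ holds for small $\varepsilon$ because $e_{i'} - e_i > 0$, contradicting the maximality of $\prod_k e_k$ at the output of \emph{Balance-allocation}.

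For the inductive step I will decompose the iteration into its two phases. In \emph{Price-update} both $x$ and $S$ are unchanged; only the prices of chores in the current $\Gamma(S)$ are scaled by a common factor $\gamma < 1$. I will verify that the MPB neighborhood of $S$ after this phase equals the old $\Gamma(S)$ together with the set $J$ of chores at which new MPB edges from $S$ land: for $i \in S$, the MPB value is scaled by $1/\gamma$ while the ratios $d_{ij}/p_j$ for $j \notin \Gamma(S)$ are unchanged, and by the definition of $\gamma$ the new MPB value is attained exactly on $(\text{old }\Gamma(S)) \cup J$. Hence, after \emph{Price-update}, the only potential violators of the invariant (relative to the new $\Gamma(S)$) are agents $i' \notin S$ with positive $x_{i'j}$ for $j \in J$.

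The \emph{Allocation-update} phase then restores the invariant for the next iteration. In the direct-transfer branch, $S$ is unchanged, chores in the old $\Gamma(S)$ are not touched (so the inductive hypothesis continues to apply to them), and each $j \in J$ is explicitly removed from every $i' \in [n] \setminus S$ and given to an agent in $S$, thereby re-establishing the invariant. In the \emph{Balance-allocation} branch, both $x$ and $S$ are recomputed and the base-case argument applies verbatim to the new pair $(x,S)$. The main obstacle I anticipate is making the identity $\Gamma(S)_{\text{new}} = \Gamma(S)_{\text{old}} \cup J$ fully rigorous after \emph{Price-update}; a secondary delicate point is guaranteeing strict separation $e_i < e_{i'}$ between $i \in S$ and $i' \notin S$ after each call to \emph{Balance-allocation}, which is needed for the infinitesimal-transfer argument to strictly improve $\prod_k e_k$ and thus deliver a contradiction.
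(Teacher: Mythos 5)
Your proposal is correct and follows essentially the same route as the paper's proof: induction on iterations, with the product-of-earnings optimality of Balance-allocation handling the base case (and any iteration whose $S$ was just recomputed), and the explicit reassignment of the chores in $J$ to agents in $S$ handling the direct-transfer branch. The two delicate points you flag are not real obstacles — the strict separation $e_i < e_{i'}$ holds because $S$ is exactly the set of minimum earners, and the identity $\Gamma(S)_{\text{new}} = \Gamma(S)_{\text{old}} \cup J$ follows directly from the choice of $\gamma$ in Price-update — and the paper treats them at the same level of detail you propose.
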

\begin{proof}
	We prove our claim by induction on the iterations of the algorithm. We first show why this holds for the first iteration: we initialize $S$ as the set of agents with lowest earning after computing an allocation that maximizes the product of budgets. Assume that there is an agent $i' \in [n] \setminus S$ and a chore $j \in \Gamma(S)$ such that $x_{i'j} > 0$. Let $i$ be an agent in $S$ such that $d_{ij}/p_j = \MPB_i$ (note that such an $i$ exists by the definition of $S$ and $\Gamma(S)$). Also note that $e_i < e_{i'}$. We can change the allocation by increasing $x_{ij}$ by $\varepsilon < x_{i'j}$ and decreasing $x_{i'j}$ by $\varepsilon$ such that $e_i$ increases and $e_{i'}$ decreases, but the sum of the earnings remain the same (in particular $e_i + e_{i'}$ remains same). Thus, the product of budgets improves, contradicting the fact that we started with an allocation with maximum product.
	
	Now, consider any iteration of the algorithm. Let $S'$ and $\Gamma(S')$ denote the set $S$ and $\Gamma(S)$ at the beginning of the previous iteration.  If the set $S$ was updated from $S'$ through a call to balance-allocation$()$ in the previous iteration, then we can make a same argument as the induction base case. Otherwise, the set $S = S'$, however $\Gamma(S')$ contains more chores than $\Gamma(S)$ (as a result of new MPB edges appearing after the price-update phase).  Notice that for all chores $j$ in $\Gamma(S) \cap \Gamma(S')$ (which is $\Gamma(S')$), we have $x_{ij} =0$ for all $j \in [n] \setminus S$ by induction hypothesis. For all chores $j \in \Gamma(S) \setminus \Gamma(S')$, we have set $x_{ij} = 0$ for all $i \in [n] \setminus S$ and $j \in \Gamma(S)$ by allocating all the chores in $\Gamma(S) \setminus \Gamma(S')$ completely to agents in $S$ in Algorithm~\ref{allocation-update}.    
\end{proof}

Observation~\ref{Sproperty} helps us prove that throughout the algorithm, all agents earn their money from their MPB chores and all chores remain fully allocated.

\begin{lemma}
	\label{market-clearing} 
	 Throughout the algorithm,
	 \begin{itemize}
	 	\item each agent earns her money only from her MPB chores, and 
	 	\item for all $j\in [m]$ we have $\sum_{i \in [n]} x_{ij} = 1$.
	 \end{itemize}
\end{lemma}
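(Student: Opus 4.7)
The plan is to establish both invariants simultaneously by induction on the iterations of Algorithm~\ref{alg:full}. At initialization, Balance-allocation$(p)$ returns an allocation whose complete-allocation property $\sum_i x_{ij} = 1$ is built into its feasible set; the MPB property of the output follows by interpreting Balance-allocation as optimizing over allocations that respect the MPB edges at the current prices (the feasible set is non-empty since at $p_j = \min_i d_{ij}$ each chore has at least one agent with $d_{ij}/p_j = 1$ achieving her $\MPB_i$). This handles the base case.

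For the inductive step under Price-Update$(x,p,S)$, the allocation $x$ is not modified, so complete allocation is immediate. For MPB, fix $(i,j)$ with $x_{ij} > 0$. If $i \in S$, then by the induction hypothesis $d_{ij}/p_j = \MPB_i$, so $j \in \Gamma(S)$, and its price is scaled by $\gamma$ just like all other chores in $\Gamma(S)$ while prices outside $\Gamma(S)$ are unchanged. By the definition of $\gamma$ in Algorithm~\ref{price-update}, for every $j' \notin \Gamma(S)$ we have $\gamma \le \MPB_i\,p_{j'}/d_{ij'}$, i.e.\ $d_{ij'}/p_{j'} \ge (1/\gamma)\MPB_i$. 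Hence the new $\MPB_i$ equals $(1/\gamma)\MPB_i$, which is precisely the new ratio $d_{ij}/(\gamma p_j)$, so $(i,j)$ remains MPB. If $i \notin S$, Observation~\ref{Sproperty} forces $j \notin \Gamma(S)$, so $p_j$ and $d_{ij}/p_j$ are unchanged; moreover, prices of chores in $\Gamma(S)$ only decrease, so ratios $d_{ij'}/p_{j'}$ only grow on $\Gamma(S)$ while remaining fixed outside it, and therefore the new $\MPB_i$ is attained on an unchanged ratio and equals the old $\MPB_i$, preserving the equality on $(i,j)$.

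For Allocation-Update$(x,p,S)$, in the Balance-allocation branch both invariants follow exactly as in the base case. In the direct-transfer branch, fix $j \in J$. At the start of the preceding price-update, $j \notin \Gamma(S)$, so no agent in $S$ had an MPB edge to $j$ and, by the MPB invariant then in force, $x_{ij} = 0$ for every $i \in S$. The price-update did not touch $x$, so entering the transfer we still have $x_{ij} = 0$ for all $i \in S$; combined with $\sum_i x_{ij} = 1$ this yields $\sum_{i' \in [n]\setminus S} x_{i'j} = 1$. The loop zeroes out exactly these $x_{i'j}$ and assigns $x_{ij}=1$ for the chosen $i \in S$ with $(i,j) \in E$, so the new column sum on $j$ is again $1$. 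Since $(i,j) \in E$ is a new MPB edge, $d_{ij}/p_j = \MPB_i$; all other allocations are untouched, preserving their MPB property.

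The main obstacle is the price-update case for $i \in S$: one needs that scaling the prices of $\Gamma(S)$ by a single factor $\gamma$ both keeps every old MPB edge of $i$ MPB and forbids any chore outside $\Gamma(S)$ from becoming strictly cheaper per unit disutility. This is exactly engineered into the choice of $\gamma$ in Algorithm~\ref{price-update} as the largest scaling that does not break MPB. The rest of the proof is bookkeeping enabled by Observation~\ref{Sproperty}, which quarantines agents outside $S$ from the chores whose prices are being changed and lets us avoid any cross-interaction between the two sides.
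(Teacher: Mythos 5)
Your proof follows essentially the same route as the paper's: induction over iterations, with the base case and the Balance-allocation branch handled by the optimality of the product-maximizing allocation, the price-update case handled by uniform scaling on $\Gamma(S)$ together with Observation~\ref{Sproperty} quarantining agents outside $S$, and the direct-transfer branch handled by noting that chores in $J$ were fully held by agents outside $S$. You are in fact more explicit than the paper on the one delicate point (that no chore outside $\Gamma(S)$ overtakes the MPB set for an agent in $S$), but your algebra there is inverted: from $\gamma \le \MPB_i\,p_{j'}/d_{ij'}$ one gets $d_{ij'}/p_{j'} \le (1/\gamma)\MPB_i$, not $\ge$. What the argument actually needs is $\gamma \ge \MPB_i\,p_{j'}/d_{ij'}$ for all $i\in S$, $j'\notin\Gamma(S)$, i.e.\ $\gamma$ must be the \emph{largest} such ratio (the smallest price decrease at which the first new MPB edge appears); this is the intended semantics of the price update as described in the prose, even though the pseudocode literally writes a $\min$. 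With that direction fixed, your argument is sound and matches the paper's.
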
 

\begin{proof}	
	In the very first iteration of the algorithm, we initialize the prices such that every chore has at least one MPB edge incident to it: Note that for all $i \in [n]$, we have $\MPB_i \geq 1$ as $d_{ij} /p_j \geq 1$ for all $i \in [n]$ and $j \in [m]$ as $p_j$ is set to $\mathit{min}_{i \in [n]} d_{ij}$. Thus, for each chore $j$ there exists an agent $i \in argmin_{i \in [n]} d_{ij}$ such that $(i,j)$ is an MPB edge. Then, a call to Balance-allocation$()$ ensures that all chores are completely allocated along MPB edges to the agents as otherwise we can increase the earning of some agent by allocating an underallocated chore along an MPB edge and thereby increase the product of earnings.  We now show that the invariants are maintained during the price update and allocation update phases.
	
	\noindent
	\emph{Price Update}: During this phase, the prices of the chores in $[m] \setminus \Gamma(S)$ and the earnings of the agents in $S$ are scaled by $\gamma < 1$ until a new MPB edge appears from some agent(s) in $S$ to some chore in $\Gamma(S)$. Let $p'$ and $e'$ denote the new price and earning vector after price update. First note that there are no MPB edges from agents in $S$ to chores in $[m] \setminus \Gamma(S)$ (by definition of $\Gamma(S)$). Now consider an agent $i \in S$. When we decrease the prices of chores in $\Gamma(S)$, all MPB chores of $i$ will remain MPB chores as they are in $\Gamma(S)$ and the prices of all chores in $\Gamma(S)$ decreases by the same factor $\gamma$. So agent $i$ still earns from her MPB chores. Also, $e'_i = \gamma \cdot e_i =  \gamma \cdot \sum_{j \in [m]} x_{ij} \cdot p_j =  \sum_{j \in \Gamma(S)} x_{ij} \cdot (\gamma \cdot p_j) = \sum_{j \in \Gamma(S)} x_{ij} \cdot p'_j$. Thus, all agents in $S$ earn their entire earning from MPB chores after price update. Now, consider an agent $i' \notin S$. By Observation~\ref{Sproperty}, $x_{i'j} = 0$ for all $j \in \Gamma(S)$, i.e., agent $i'$ earns all of $e_{i'}$ from her MPB chores in $[m] \setminus \Gamma(S)$. Since the prices of the chores in $[m] \setminus S$ and the earnings of the agents in $S$ remains unchanged ($e'_{i'} = e_{i'}$), agent $i'$ earns all her money of $e'_{i'}$ from her MPB chores in $\Gamma(S)$. Thus, all agents in $[n]\setminus S$ also earn all their money from MPB chores after price update. Lastly, since the allocation remains unchanged during price update, all the chores remain completely allocated at the end of price update.
	
	\noindent 
	\emph{Allocation update}: In this phase, the prices remain unaltered, implying that the MPB configuration does not change. If the allocation is updated through a call to Balance-allocation$()$, then each agent will earn their money from MPB chores and all chores will be completely allocated as otherwise we can increase the earning of some agent by allocating an underallocated chore along an MPB edge and thereby increase the product of earnings. If the allocation update does not involve a call to Balance-allocation$()$, then all the chores in $J$\footnote{Recall from Algorithm~\ref{allocation-update}, that $E$ is the set of new MPB edges that appear from agents in $S$ to chores in $[m] \setminus \Gamma(S)$ and $J = \{ j \mid (i,j) \in E \}$} are allocated to agents in $S$ along MPB edges. The earnings of the agents are adjusted accordingly. Therefore, all agents earn their money from MPB chores and all chores are completely allocated. 
\end{proof}

We now prove a crucial invariant about the earnings of the agents in the $S$ that our algorithm maintains throughout. 
\begin{observation}
	\label{maxmin}
 At the beginning of every iteration of our algorithm, we have $\mathit{max}_{i \in S} e_i < \mathit{min}_{i \notin S} e_i$ where $e$ is the earning vector.  
\end{observation}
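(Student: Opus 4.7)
The plan is to prove the invariant by induction on the iterations of the main while-loop in Algorithm~\ref{main-algorithm}. For the base case, I would use that the first call to \textbf{Balance-allocation} defines $S$ to be exactly the agents attaining the minimum earning, so by construction every agent outside $S$ has strictly larger earning than every agent inside it (the degenerate case $S=[n]$ makes the minimum over an empty set vacuously $+\infty$). For the inductive step, assuming the invariant at the start of iteration $t$, I would verify that both \textbf{Price-Update} and \textbf{Allocation-Update} preserve it before the start of iteration $t+1$.

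For the \textbf{Price-Update} step, Observation~\ref{powersofalpha} gives that some factor $\gamma<1$ is applied to the prices of chores in $\Gamma(S)$. Using Lemma~\ref{market-clearing} together with Observation~\ref{Sproperty}, I would argue that the earnings of agents in $S$ also scale by exactly $\gamma$ (since they earn only from chores in $\Gamma(S)$), while the earnings of agents outside $S$ are unchanged (they own no share of chores in $\Gamma(S)$, and the prices outside $\Gamma(S)$ are fixed). Hence $\max_{i\in S} e_i$ strictly decreases and $\min_{i\notin S} e_i$ is unchanged, so the strict inequality is preserved and the gap actually widens.

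For the \textbf{Allocation-Update} step, the algorithm branches on whether $\sum_{j\in J} p_j > (\emin-\emax)/2$. If this condition holds, \textbf{Balance-allocation} is invoked, which re-sets $S$ to be the minimum-earning agents under the new allocation, and the invariant follows exactly as in the base case. Otherwise $\sum_{j\in J} p_j \le (\emin-\emax)/2$, and each chore $j \in J$ is transferred entirely from agents in $[n]\setminus S$ to some agent in $S$ along a new MPB edge. I would then observe that the earning of any single agent in $S$ can grow by at most $\sum_{j\in J} p_j$, so its new value is at most $\emax + (\emin-\emax)/2 = (\emax+\emin)/2$; symmetrically any single agent outside $S$ loses at most $\sum_{j\in J} p_j$, so its new earning is at least $(\emax+\emin)/2$.

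The hard part will be promoting this weak separation to the strict inequality stated. The worst case that makes both bounds simultaneously tight would require (i) one agent in $S$ starting at exactly $\emax$ who receives every chore of $J$, (ii) one agent outside $S$ starting at exactly $\emin$ who was the sole previous holder of the entire mass of $J$, and (iii) $\sum_{j\in J} p_j$ equal to exactly $(\emin-\emax)/2$. I expect this to be ruled out either by a generic-position argument (the prices are restricted to powers of $(1+\alpha)$ and the earnings arise from a polynomially bounded history of such updates, so exact saturation of both extremes is avoidable) or by a mild tie-breaking convention that, whenever the update produces an outside-$S$ agent whose new earning matches $\max_{i\in S}e_i$, absorbs that agent into $S$ before the next iteration begins. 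Either fix carries the strict invariant across the iteration and completes the induction.
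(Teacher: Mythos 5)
Your induction scheme, base case, price-update step, and the Balance-allocation branch all match the paper's proof. The one place where your argument does not close is exactly the place you flag: the strictness of the inequality in the direct-transfer branch. Here your two proposed fixes diverge from what the paper actually does, and neither is satisfactory as stated. A generic-position argument is not available: although prices are powers of $(1+\alpha)$, the quantity $(\emin-\emax)/2$ is half a difference of sums of such prices, and there is no structural reason it cannot coincide exactly with $\sum_{j\in J}p_j$. A tie-breaking rule that absorbs a boundary agent into $S$ after the update is a modification of the algorithm, not a proof about it, and it would need to be checked against the other invariants (Observation~\ref{Sproperty} requires that agents outside $S$ hold none of $\Gamma(S)$, and the convergence accounting in Lemma~\ref{convergence} tracks when $S$ changes), so it cannot be waved in as a "mild convention."

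The paper's resolution is simpler and is already implicit in your own computation: the branch in which the direct transfer happens is taken precisely when $\sum_{j\in J}p_j < (\temin-\temax)/2$ \emph{strictly} (this is how the proof of Observation~\ref{maxmin} reads the guard; the boundary case of exact equality should be routed to Balance-allocation). With the strict guard, both of your midpoint bounds become strict simultaneously:
\[
\max_{i\in S} e_i \;\le\; \temax + \sum_{j\in J}p_j \;<\; \frac{\temax+\temin}{2} \;<\; \temin - \sum_{j\in J}p_j \;\le\; \min_{i\notin S} e_i,
\]
so no worst-case configuration needs to be excluded by hand. In short: the slack you are looking for lives in the branch condition itself, not in genericity of the data or in a tie-breaking rule.
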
	
\begin{proof}
	 We prove our claim by induction on the iterations. The base case is trivially true as in the first iteration,  $S$ is set by Balance-allocation$()$ to be the set of agents with lowest earning. Therefore we have $\emax < \emin$. Now consider any arbitrary iteration of the algorithm. If $S$ was set by a call to Balance-allocation$()$ in the previous iteration of the algorithm, then we trivially have $\emax < \emin$. So let us assume otherwise, i.e., Balance-allocation$()$ was not invoked in the previous iteration. Let $S'$ be the set $S$ and $e'$ be the earning vector at the beginning of the previous iteration. By induction hypothesis, we have $\mathit{max}_{i \in S'} e'_i < \mathit{min}_{ i \notin S'} e'_i$. Let $\tilde{e}$ be the earning vector after the price-update phase (and before the allocation update phase). In the price update phase, we decrease the prices of the chores in $\Gamma(S)$ and the earnings of the agents in $S$ and thus we have $\mathit{max}_{i \in S} \tilde{e}_i \leq \mathit{max}_{i \in S} e'_i \leq \mathit{min}_{i \notin S} e'_i  = \mathit{min}_{i \notin S} \tilde{e}_i$.    
	 
	 Let $E$ denote the set of new MPB edges that appear after price update and let $J = \{ j \mid (i,j) \in E \}$. Let ${\temax} = \mathit{max}_{i \in S} \tilde{e}_i$ and ${\temin} = \mathit{min}_{i \notin S} \tilde{e}_i$. Since we are in the case where Balance-allocation$()$ is not invoked, we have $(\sum_{j \in J} p_j < (\temin -\temax)/2)$. Observe that Algorithm~\ref{allocation-update} allocates the chores in $J$ to the agents in $S$ along the MPB edges. As a result, the total  earnings of the agents in $S$ increases and that of the agents in $[n] \setminus S$ decreases (by the same amount).  Let $\delta_i = e_i - \tilde{e}_i$ for all $i \in S$ and $\delta_i = \tilde{e}_i - e_i$ for all $i \notin S$. Note that $\sum_{i \in S} \delta_i = \sum_{i \in [n] \setminus S} \delta_i = \sum_{j \in J} p_j$. Therefore we have,
	 \begin{align*}
	 \mathit{max}_{i \in S}e_i &\leq  \temax + \sum_{i \in S}\delta_i\\
	 &= \temax + \sum_{j \in J}p_j\\
	 &< \temax + (\temin-\temax)/2\\
	 &= \temin -  (\temin-\temax)/2\\
	 &< \temin - \sum_{j \in J}p_j\\
	 &= \temin - \sum_{i \in S}\delta_i\\
	 &\leq \mathit{min}_{i \in [n] \setminus S} e_i\enspace . \qedhere
	 \end{align*}  
\end{proof}

We now show that the allocation-update can only increase the product of earnings of the agents.

\begin{lemma}
	\label{balanced_flow}
	Let $e = \langle e_1, e_2, \dots, e_n \rangle$ and $e' = \langle e'_1, e'_2, \dots , e'_n \rangle$ be the earning vector before and after the call to Balance-allocation($p$) in an allocation-update phase of the algorithm. We have $\prod_{i \in [n]} e'_i \geq (1+\alpha^2/16) \prod_{i \in [n]} e_i$. 
\end{lemma}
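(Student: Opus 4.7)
The plan is to produce an explicit MPB-feasible allocation $y$, built from $x$ by routing money along the newly created MPB edges of $E$ from agents in $[n]\setminus S$ to agents in $S$, and to show $\prod_i e^y_i \geq (1+\alpha^2/16)\prod_i e_i$. Since \textbf{Balance-allocation}$(p)$ returns an MPB-feasible allocation maximizing the product of earnings (and $x$ itself is feasible, so the optimum is at least $\prod_i e_i$), the bound for $e^y$ immediately transfers to $e'$.

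First I would establish the gap bound $\emin > (1+\alpha)\emax$, where $\emax = \max_{i\in S} e_i$ and $\emin = \min_{i\notin S} e_i$ at the moment \textbf{Balance-allocation} is invoked. This follows by combining Observation~\ref{maxmin} at the start of the iteration (strict gap between agents in $S$ and outside) with Observation~\ref{powersofalpha} (the preceding Price-update scales the earnings of agents in $S$ by some $\gamma \leq (1+\alpha)^{-1}$, leaving other earnings fixed): chaining yields $\emax < \gamma\, \emin \leq \emin/(1+\alpha)$.

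The allocation $y$ is constructed as follows. By Observation~\ref{Sproperty} together with Lemma~\ref{market-clearing}, every $j \in J \subseteq [m]\setminus \Gamma(S)$ is currently held entirely by agents in $[n]\setminus S$. For each such $j$ pick any $i(j) \in S$ with $(i(j),j) \in E$, and set $M = \sum_{j\in J} p_j$, $M' = (\emin - \emax)/2$, and $\lambda = M'/M \in (0,1)$ (the bound is strict because $M > M'$ by the hypothesis under which Balance-allocation is called). Define $y$ by transferring a $\lambda$ fraction of each $j \in J$ from its current holders to $i(j)$, and keeping every other assignment equal to $x$. Then $y$ is MPB-feasible: the retained entries of $x$ sit on edges that remain MPB after the price update (for agents in $[n]\setminus S$ the MPB configuration is unchanged since by Observation~\ref{Sproperty} they only held chores outside $\Gamma(S)$, whose prices are fixed; for agents in $S$ the ratios on $\Gamma(S)$ scale uniformly by $1/\gamma$), and the new entries use edges of $E$ which are MPB by construction.

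For the ratio, let $T_i = e^y_i - e_i \geq 0$ for $i \in S$ and $T_{i'} = e_{i'} - e^y_{i'} \geq 0$ for $i' \notin S$; then $\sum_{i \in S} T_i = \sum_{i' \notin S} T_{i'} = \lambda M = M'$ and each $T_{i'}/e_{i'} \leq \lambda < 1$. Using $e_i \leq \emax$, $e_{i'} \geq \emin$, together with the elementary product inequalities $\prod(1+a_i) \geq 1+\sum a_i$ for $a_i \geq 0$ and $\prod(1-a_i) \geq 1-\sum a_i$ for $a_i \in [0,1]$, one obtains
\[
\frac{\prod_i e^y_i}{\prod_i e_i} \;\geq\; \Bigl(1+\tfrac{M'}{\emax}\Bigr)\Bigl(1-\tfrac{M'}{\emin}\Bigr) \;=\; 1 + \frac{(\emin-\emax)^2}{4\emax\emin}
\]
after substituting $M' = (\emin-\emax)/2$ and simplifying. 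Writing $r = \emin/\emax > 1+\alpha$, the ratio $(\emin-\emax)^2/(\emax\emin) = (r-1)^2/r$ is increasing in $r \geq 1$, hence at least $\alpha^2/(1+\alpha) \geq \alpha^2/4$ in the relevant regime $\alpha \leq 3$, giving $\prod_i e^y_i \geq (1+\alpha^2/16)\prod_i e_i$. The main technical subtlety is keeping the factor $1 - M'/\emin$ (and each $1 - T_{i'}/e_{i'}$) safely positive so that $\prod(1-a_i) \geq 1 - \sum a_i$ yields a useful lower bound rather than a vacuous negative one; the choice $M' = (\emin-\emax)/2 \leq \emin/2$ ensures $\sum T_{i'}/\emin \leq 1/2$, and it is exactly this slack that converts the first-order balancing term into the quadratic $\alpha^2$ gain.
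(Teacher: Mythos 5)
Your proposal is correct and follows essentially the same route as the paper's proof: both establish the gap $\emin > (1+\alpha)\emax$ via Observations~\ref{maxmin} and~\ref{powersofalpha}, both exhibit an explicit MPB-feasible reallocation that moves exactly $(\emin-\emax)/2$ worth of earning from $[n]\setminus S$ to $S$ along the new edges, both reduce the product bound to $(1+\Delta/\emax)(1-\Delta/\emin)$, and both finish with the same algebra. Your uniform $\lambda$-fraction transfer and your use of $\prod(1\pm a_i) \gtrless 1 \pm \sum a_i$ are a slightly cleaner repackaging of the paper's sequential transfer and its Claim~\ref{technical}, but the underlying argument is identical.
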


\begin{proof}
	  Let $\tilde{e} = \langle \tilde{e}_1, \tilde{e}_2, \dots , \tilde{e}_n \rangle$ be the earning vector before the price-update phase in the current iteration of the algorithm. By Observation~\ref{Sproperty}, we have $\mathit{max}_{i \in S} \tilde{e}_i \leq \mathit{min}_{i \notin S} \tilde{e}_i$. By Observation~\ref{powersofalpha}, we decrease the prices of the chores in $\Gamma(S)$ and the earnings of the agents in $S$ by a factor of at least $1+\alpha$. Let $e$ be the new earning vector before the allocation update phase. We have $(1+\alpha) \cdot e_i \leq \tilde{e}_i$ for all $i \in S$, and $e_i = \tilde{e}_i$ for all $i \notin S$. Therefore, we have $(1+ \alpha) \cdot \emax  <  \emin$ \footnote{Recall from Algorithm~\ref{allocation-update}, $\emin = \mathit{min}_{i \in [n] \setminus S} e_i$ and $\emax = \mathit{max}_{i \in S} e_i$.}.   
	
	 Since the balanced allocation outputs an allocation that maximizes the product of the earnings, it suffices to show the existence of an allocation $x'$  such that $\prod_{i \in [n]} e'_i \geq (1+\alpha^2/16) \prod_{i \in [n]} e_i$ where $e'_i = \sum_{j \in [m]} x'_{ij}p_j$. Let $x$ be the allocation before the allocation-update phase of the algorithm. Let $E$ denote the set of new MPB edges after the price update and let $J = \{ j \mid (i,j) \in E\}$. Note that prior to allocation update, $x_{ij} = 0$ for all $i \in S$ and $j \in J$. We initially set $x' =x$ and then change the allocation of the chores in $J$ as follows: For each new MPB edge $(i,j) \in E$, and for each agent $i' \notin S$ such that $x'_{i'j} > 0$, we increase the consumption of the chore $j$ for agent $i$ and  decrease the consumption of chore $j$ for $i'$ \emph{by the same amount} until
	 \begin{itemize}
	 	\item $i'$ does not consume $j$ anymore, i.e., $x'_{i'j} =0$, or 
	 	\item $i$ consumes $j$ entirely, i.e., $x'_{ij} =1$, or 
	 	\item the total increase in the earnings of the agents in $S$ is $(\emin - \emax) /2$, i.e.,$\sum_{i \in S} e'_i - \sum_{i \in S}e_i = (\emin -\emax)/2$.
	 \end{itemize}
	 
	 First note that $x'$ is a valid allocation as all the chores remain completely allocated in $x'$. Let $\mathbf{\Delta}$ denote the total increase in the earnings of the agents in $S$ when we update the allocation to $x'$ from $x$. Note that $\mathbf{\Delta}$ also equals the total decrease in the earnings of the agents in $[n] \setminus S$ as every time we increase the earning of an agent in $S$ by some $\delta>0$, we also decrease the earning of some agent in $[n] \setminus S$ by $\delta$. It is also clear that $\mathbf{\Delta} = (\emin - \emax) /2$ as otherwise all the goods in $J$ would be completely allocated to agents in $S$ and this would imply an increase in the earning of the agents in $S$ by $\sum_{j \in [m]}p_j$ which is larger than $ \emin - \emax > (\emin - \emax) /2$ (if Balance allocation$()$ is called, $\sum_{j \in J} p_j > \emin - \emax$). We are now ready to show the improvement in the product of earnings.
	 
	 Let $\delta_i = e'_i -e_i$ for all $i \in S$ and $\delta_i = e_i - e'_i$ for all $i \notin S$. Note that $\delta_i \geq 0$ for all $i \in [n]$ and  $\sum_{i \in S} \delta_i = \sum_{i \in [n] \setminus S} \delta_i = \mathbf{\Delta} = (\emin - \emax)/2$. 
	 We now make a technical claim which helps us analyze the improvement in the product of earnings.
	 
	 \begin{claim}
	 	\label{technical}
	 	Let $0 \leq y_1 \leq y_2 \leq  \dots \leq y_k$. Let $\delta_i \geq 0$ and $\gamma_i \geq 0$ for all $i \in [k]$. Furthermore, let  $y_1 \geq \sum_{j \in [k]} \gamma_j$. Then, we have 
	 	\begin{enumerate}
	 		\item $\prod_{i \in [k]} (y_i + \delta_i) \geq \prod_{i \in [k-1]} y_i \cdot (y_k + \sum_{j \in [k]} \delta_j )$, and similarly,
	 		\item $\prod_{i \in [k]} (y_i - \gamma_i) \geq (y_1 - \sum_{j \in [k]} \gamma_j) \cdot \prod_{2\leq i \leq k}y_i$.
	 	\end{enumerate}
	 \end{claim} 
     \begin{proof}
	The proof follows immediately by the repeated applications of the following two facts. Given four numbers $x,y, \lambda_1, \lambda_2$ such that $\lambda_1 + \lambda_2 \leq x \leq y$ and 
	\begin{itemize}
		\item $(x+ \lambda_1)(y+\lambda_2) \geq x (y + \lambda_1 + \lambda_2)$: We have,
		\begin{align*}
		(x+ \lambda_1)(y+\lambda_2) &= xy + \lambda_1y  + \lambda_2x + \lambda_1\lambda_2\\
		&\geq xy + \lambda_1y  + \lambda_2x \\
		&\geq xy + \lambda_1x + \lambda_2x &(\text{as $x \leq y$})\\
		&= x (y + \lambda_1 + \lambda_2).
		\end{align*} 
		\item  $(x- \lambda_1)(y-\lambda_2) \geq  (x - \lambda_1 - \lambda_2)y$: We have,
		\begin{align*}
		(x - \lambda_1)(y - \lambda_2) &= xy - \lambda_1y  - \lambda_2x + \lambda_1\lambda_2\\
		&\geq xy - \lambda_1y  - \lambda_2x \\
		&\geq xy - \lambda_1y - \lambda_2y &(\text{as $x \leq y$})\\
		&= (x - \lambda_1 - \lambda_2)y. \qedhere 
		\end{align*}
	\end{itemize}
     \end{proof}
	 Observe that,
	 \begin{align}
	  \prod_{i \in [n]} e'_i &= \prod_{i \in S} (e_i + \delta_i) \cdot \prod_{i \in [n] \setminus S} (e_i - \delta_i) \nonumber \\ 
	                         &\geq \prod_{i \in S} e_i \cdot \frac{\emax + \mathbf{\Delta}}{\emax} \cdot \frac{\emin - \mathbf{\Delta}}{\emin} \cdot \prod_{i \in [n] \setminus S} e_i \nonumber  &\text{(by Claim~\ref{technical})}\\
	                         &= \prod_{i \in [n]} e_i \cdot \big(1 + \frac{\mathbf{\Delta}}{\emax} \big) \cdot \big(1 - \frac{\mathbf{\Delta}}{\emin} \big) \nonumber \\
	                         &=  \prod_{i \in [n]} e_i \cdot \big(1 + \mathbf{\Delta}\frac{\emin -\emax}{\emax \emin} - \frac{\mathbf{\Delta}^2}{\emax \emin} \big) \nonumber\\ 
	                         &=  \prod_{i \in [n]} e_i \cdot \big(1 + \frac{2\mathbf{\Delta}^2}{\emax \emin} - \frac{\mathbf{\Delta}^2}{\emax \emin} \big) \nonumber &\text{\big(Substituting $\mathbf{\Delta} = \frac{\emin -\emax}{2}$ \big)} \\
	                         &=  \prod_{i \in [n]} e_i \cdot \big(1 + \frac{\mathbf{\Delta}^2}{\emax \emin} \big) \label{eq1}
	 \end{align}
     Since $(1+ \alpha)^{-1} \emin \geq \emax$, we have 
     \begin{align*}
      \mathbf{\Delta} &\geq \frac{1 - (1+\alpha)^{-1}}{2} \emin\\
             &= \frac{\alpha}{2(1+\alpha)} \emin\\
             &\geq \frac{\alpha}{4} \emin &\text{(as $\alpha < 1$)}.
     \end{align*}
     Substituting this lower bound on $\mathbf{\Delta}$, and using the fact that $\emin \geq \emax$, we have $\prod_{i \in [n]} e'_i \geq (1 + \alpha^2/16)\prod_{i \in [n]} e_i$. 	 
\end{proof}

\begin{lemma}
	\label{ear_prod_increase}
	Let $e = \langle e_1, e_2, \dots, e_n \rangle$ and $e' = \langle e'_1, e'_2, \dots , e'_n \rangle$ be the earning vector before and after an allocation-update phase of the algorithm. We have,  $\prod_{i \in [n]} e'_i \geq  \prod_{i \in [n]} e_i$.
\end{lemma}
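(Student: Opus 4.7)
My plan is to split on the two branches of Algorithm~\ref{allocation-update}. If the \emph{if}-branch executes, i.e.\ $\sum_{j\in J} p_j > (\emin-\emax)/2$, then Balance-allocation is invoked, and Lemma~\ref{balanced_flow} immediately gives $\prod_i e'_i \ge (1+\alpha^2/16)\prod_i e_i \ge \prod_i e_i$. So the substantive case is the \emph{else} branch, where each chore in $J$ is transferred entirely from some agent in $[n]\setminus S$ to some agent in $S$ along a newly-formed MPB edge.

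In that case I would set $\mathbf{\Delta} := \sum_{j\in J} p_j$ and $\delta_i := |e'_i - e_i|$; by inspection of Algorithm~\ref{allocation-update}, $e'_i = e_i + \delta_i$ for $i\in S$, $e'_i = e_i - \delta_i$ for $i\notin S$, and $\sum_{i\in S}\delta_i = \sum_{i\notin S}\delta_i = \mathbf{\Delta}$. The case hypothesis gives $\mathbf{\Delta} \le (\emin-\emax)/2$, and Observation~\ref{maxmin} ensures $\emax < \emin$ (and I may assume $\emax > 0$, else $\prod_i e_i = 0$ and the bound is trivial from $e'_i\ge 0$). In particular, $\mathbf{\Delta} < \emin$, which is exactly the precondition needed to invoke Claim~\ref{technical} part~2 on the earnings of $[n]\setminus S$.

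I would then apply Claim~\ref{technical} twice: part~1 to the agents in $S$ (concentrating all the additions $\delta_i$ onto the largest entry $\emax$) and part~2 to the agents in $[n]\setminus S$ (concentrating all the subtractions onto the smallest entry $\emin$). Together these yield
\begin{equation*}
\prod_{i\in[n]} e'_i \;\ge\; \prod_{i\in[n]} e_i \cdot \frac{(\emax+\mathbf{\Delta})(\emin-\mathbf{\Delta})}{\emax\,\emin}.
\end{equation*}
The proof then reduces to the elementary inequality $(\emax+\mathbf{\Delta})(\emin-\mathbf{\Delta})\ge\emax\,\emin$, which rearranges to $\mathbf{\Delta}(\emin-\emax-\mathbf{\Delta})\ge 0$ and holds because $0\le\mathbf{\Delta}\le(\emin-\emax)/2\le\emin-\emax$. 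There is no real obstacle here: the whole argument is an application of the previous lemma in one branch and a direct two-variable inequality in the other; the one point I would double-check carefully is that the precondition of Claim~\ref{technical} (part~2) is met, which is precisely what the \emph{else} branch's threshold $(\emin-\emax)/2$ was chosen to guarantee.
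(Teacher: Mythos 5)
Your proposal is correct and follows essentially the same route as the paper: split on whether Balance-allocation is invoked, cite Lemma~\ref{balanced_flow} in that case, and in the other branch apply Claim~\ref{technical} to both $S$ and $[n]\setminus S$ to reduce to the two-variable inequality $(\emax+\mathbf{\Delta})(\emin-\mathbf{\Delta}) \geq \emax\,\emin$, which holds since $\mathbf{\Delta} \leq (\emin-\emax)/2$. Your explicit check that the precondition $\emin \geq \mathbf{\Delta}$ of Claim~\ref{technical} part~2 is met, and the handling of the degenerate $\emax=0$ case, are small added details that the paper leaves implicit but do not change the argument.
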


\begin{proof}
	If the allocation update involves a call to Balance-allocation$()$, then the product of earnings improves by Lemma~\ref{balanced_flow}. We now show that the product of the earnings improve. 
	
	Now consider the case where there is no call to Balance-allocation$()$. In this case, we have $(\sum_{j \in J} p_j < (\emin -\emax)/2)$\footnote{Recall that $E$ is the set of new MPB edges that appear from $S$ to $[m] \setminus S$ after the price update phase and $J = \{ j \mid (i,j) \in E \}$.}. Observe that Algorithm~\ref{allocation-update} allocates the chores in $J$ to the agents in $S$ along the MPB edges. As a result, the total  earnings of the agents in $S$ increases and that of the agents in $[n] \setminus S$ decreases (by the same amount).  Let $\delta_i = e'_i - e_i$ for all $i \in S$ and $\delta_i = e_i - e'_i$ for all $i \notin S$. Note that $\sum_{i \in S} \delta_i = \sum_{i \in [n] \setminus S} \delta_i = \sum_{j \in J } p_j$. Therefore we have,
	\begin{align*}
	 \prod_{i \in [n]} e'_i &= \prod_{i \in S} (e_i + \delta_i) \cdot \prod_{i \in [n] \setminus S} (e_i -\delta_i)\\
	                        &\geq \prod_{i \in S} e_i \cdot \frac{\emax + \sum_{i \in S} \delta_i}{\emax} \cdot \frac{\emin - \sum_{i \in [n] \setminus S} \delta_i}{\emin} \cdot \prod_{i \in [n] \setminus S} e_i \nonumber  &\text{(by Claim~\ref{technical})}\\
	                        &= \prod_{i \in [n]} e_i \cdot \frac{\emax + \sum_{j \in J} p_j}{\emax} \cdot \frac{\emin -\sum_{j \in J} p_j}{\emin}.
	\end{align*} 
	It suffices to show that $\frac{\emax + \sum_{j \in J} p_j}{\emax} \cdot \frac{\emin - \sum_{j \in J} p_j}{\emin} > 1$. This is indeed the case as $\emax + \sum_{j \in J} p_j \leq \emin - \sum_{j \in J}p_j$, i.e., the values $\emax$ and $\emin$ move closer to each other while still maintaining the same sum, implying that the product improves.
\end{proof}

We are now ready to show convergence of our algorithm.  We briefly sketch the overall idea: the price-update step does not change our potential as our potential only depends on the allocation which remains unaltered (even though the prices of the chores and the earnings of the agents change). Thereafter, with the help of new MPB edges we are able to improve the product of disutilities. However, note that during an allocation update, we do not change the prices of the chores. Therefore, the disutility of an agent is proportional to her earning as $D_i(x_i) = \MPB_i \cdot e_i$ or equivalently $\NSW(\mathbf{x}) = \prod_{i \in [n]} \MPB_i \cdot \prod_{i \in [n]} e_i$. Since an allocation update increases $\prod_{i \in [n]} e_i$, it also improves the Nash welfare of the allocation. We now present the theorem and its formal proof.

\begin{lemma}
	\label{convergence}
	 Algorithm~\ref{alg:full} computes a CEEI in $\mathcal{O}(nm/ \alpha^2 \cdot \log(nD_{max}))$ iterations which involves at most $\mathcal{O}(n / \alpha^2 \cdot \log(nD_{max}))$ calls to Balance-allocation$()$.  
\end{lemma}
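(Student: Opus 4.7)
I would use the Nash welfare $\NSW(\mathbf{x})=\prod_{i\in[n]} D_i(x_i)$ as the potential and separately bound (i) the number of Balance-allocation invocations and (ii) the number of iterations between two consecutive such invocations. Two invariants are crucial here: $D_i(x_i)=\sum_j d_{ij}x_{ij}$ depends only on the allocation, so $\NSW$ is exactly preserved by every Price-Update; and by Lemma~\ref{market-clearing} we have $D_i(x_i)=\MPB_i\cdot e_i$, so during Allocation-Update (where prices, and hence every $\MPB_i$, are frozen) the multiplicative change in $\NSW$ equals the multiplicative change in $\prod_i e_i$.

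For the bound on Balance-allocation invocations, Lemma~\ref{balanced_flow} shows each such call multiplies $\NSW$ by at least $1+\alpha^2/16$, while the intervening Allocation-Updates never decrease $\NSW$ by Lemma~\ref{ear_prod_increase}, and Price-Updates leave it unchanged. Since $\NSW\le (n D_{max})^n$ at all times and is bounded below by the positive value attained just after the very first Balance-allocation (performed at the rounded initial prices $p_j=\min_i d_{ij}$), taking logarithms and dividing by $\log(1+\alpha^2/16)=\Theta(\alpha^2)$ yields at most $O\!\left(n\log(n D_{max})/\alpha^2\right)$ calls.

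For the bound between two consecutive calls, observe that $S$ is refreshed only inside Balance-allocation, so within any such stretch $S$ is fixed. Price-Update, by construction, scales down only the prices of chores in $\Gamma(S)$ until at least one new MPB edge $(i,j^\star)$ with $i\in S$ and $j^\star\in[m]\setminus\Gamma(S)$ appears, so $j^\star$ joins $\Gamma(S)$ in the following iteration. Conversely, chores already in $\Gamma(S)$ remain in $\Gamma(S)$: by Observation~\ref{powersofalpha}, both $p_j$ (for $j\in\Gamma(S)$) and $\MPB_i$ (for $i\in S$) are scaled by the same factor $\gamma<1$, preserving the MPB equality, and Allocation-Update does not alter prices. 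Hence $|\Gamma(S)|$ strictly grows each iteration, so at most $m$ iterations elapse before Balance-allocation fires (or the while-loop exits because all earnings coincide).

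Multiplying the two bounds delivers the claimed $O(nm\log(n D_{max})/\alpha^2)$ iteration count. Correctness at termination is immediate: the negated while-condition together with Lemma~\ref{market-clearing} certifies the three clauses of Definition~\ref{def:ce}. The main technical obstacle lies in the second step, namely verifying rigorously that the partial reallocation inside the else-branch of Allocation-Update cannot destroy a newly created MPB edge, and that the corner case $\Gamma(S)=[m]$ reached within an $m$-iteration stretch forces the loop to exit rather than get stuck (one checks that in this case no price can further decrease without violating the MPB/earning invariants, and combined with Observation~\ref{maxmin} this forces the termination condition).
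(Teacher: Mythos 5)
Your proposal is correct and follows essentially the same route as the paper: Nash welfare as the potential (invariant under price updates, non-decreasing under allocation updates via Lemma~\ref{ear_prod_increase}), a $(1+\alpha^2/16)$ multiplicative gain per Balance-allocation call from Lemma~\ref{balanced_flow} combined with the $(nD_{max})^n$ upper bound to count calls, and the strict growth of $|\Gamma(S)|$ to bound the iterations between consecutive calls by $m$. The extra care you flag (the else-branch preserving new MPB edges, and the lower bound on the initial potential) is handled by the paper's earlier invariants rather than inside this lemma, but your treatment is consistent with it.
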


\begin{proof}
	To this end, we first argue that the Nash welfare never decreases throughout the algorithm: Note that $\NSW(x) = \prod_{i \in [n]} D_i(x_i)$ is independent of the prices of the chores and therefore does not change during the price update phase. Prior to the allocation update, we can write each $D_i(x_i)$ as $\MPB_i \cdot e_i$. Therefore, $\NSW(x) = \prod_{i \in [n]} \MPB_i \cdot \prod_{i \in [n]} e_i$. Note that we do not alter the prices of the chores during an allocation update and thus $\prod_{i \in [n]} \MPB_i$ remains the same. By Lemma~\ref{ear_prod_increase}, $\prod_{i \in [n]} e_i$ increases during an allocation update phase and thus the Nash welfare increases.

	We now bound the total number of iterations that invokes a Balance-allocation$()$. By Lemma~\ref{balanced_flow}, $\prod_{i \in [n]} e_i$ increases by $(1 + \alpha^2/16)$ every time Balance-allocation$()$ is invoked. Since $\prod_{i \in [n]} \MPB_i$ remains unaltered during allocation update, and $\NSW(x) = \prod_{i \in [n]} \MPB_i \cdot \prod_{i \in [n]} e_i$, we can conclude that each time Balance-allocation$()$ is called, the Nash welfare increases by a factor of $1( + \alpha^2 / 16)$. Since $\NSW(x) \leq (nD_{max})^n$ and the Nash welfare never decreases throughout the algorithm by Lemma~\ref{ear_prod_increase}, we can have at most $\mathcal{O}(\frac{\log((nD_{max})^n)}{ \log (1 + \alpha^2 / 16)}) \in \mathcal{O} (n / \alpha^2 \log(nD_{max}))$ many calls to Balance-allocation$()$.
	
	We now bound the number of iterations where Balance-allocation$()$ is not invoked. In this case, note that $\lvert \Gamma(S) \rvert $ increases during an allocation update phase as the chores in $J$ get added to $\Gamma(S)$. Therefore, we can have at most $\mathcal{O}(m)$ consecutive iterations that does not invoke Balance-allocation$()$. This implies that the total number of iterations of the algorithm is at most $\mathcal{O}(nm/ \alpha^2 \cdot \log(nD_{max}))$.
\end{proof}

To obtain a quadratic running time for our algorithm, it suffices to show how to implement Balance-allocation$()$ in $\tilde{\mathcal{O}}(nm (n+ m))$ time. We now elaborate this.

\subsection{Implementing Balance-allocation$()$ in $\mathcal{O}(nm(n+m) \log(nm))$ Time}
We show that the problem of determining a Balance-allocation$()$ can be reduced to the problem of finding a lexicographically optimal flow~\cite{fujishige1980lexicographically}, which can be determined in $\mathcal{O}(nm \cdot (n+m) \log(nm))$ time. 

\begin{definition}{~\cite{fujishige1980lexicographically}}
	\label{lexicographically_optimal_flow}
	 We are given a directed graph $G = (V,E)$ with edge capacities. $S^+$ is a set of $k$ source vertices, i.e., vertices in $V$ without an incoming edge and $t$ is a  single sink vertex, i.e., a vertex in $V$ without an outgoing edge. Given any flow $f$ in $G$, let $\delta^+(v)$ denote the total outflow from a vertex $v \in V$. A lexicographically optimal flow is a maximum flow $f$ from $S^+$ to $t$ such that the vector $\langle \delta^+(s_1), \delta^+(s_2), \dots, \delta^+(s_k) \rangle$ is lexicographically maximum subject to $s_{\ell} \in S^+$ for all $\ell \in [k]$ and $\delta^+(s_1) \leq \delta^+(s_2) \leq \dots \leq \delta^+(s_k)$.    
\end{definition} 

We now show how  lexicographically optimum flows can be used to implement the Balance-allocation$()$ subroutine. Given a price vector $p$, we define the market-network $M_p = ([n] \cup [m] \cup \{t\}, E)$ as a digraph that contains the agents $[n]$, chores $[m]$ and a sink vertex $t$. The directed edges from the agents to the chores are MPB edges at the prices $p$, i.e., $(i,j) \in E$ is $d_{ij} / p_j = \MPB_i$. Edges from agents to chores have infinite capacity (meaning one can push any amount of flow through them). There is a directed edge from each chore $j$ to the sink $t$ with a capacity of $p_j$. Note that the set of agents have no incoming edges. The main goal of Balance-allocation$()$ is to find a flow in the MPB graph that maximizes the product of the outflow of the sources. Once we have such a flow $f$, it is easy to find the allocation as $x_{ij} = f_{ij} / p_j$.  We make an observation that any flow that maximizes the product of outflows of the sources is a lexicographically optimum flow.
\begin{lemma}
	\label{flowrelation}
		Given a market-network $G$, a maximum flow $f$  maximizes the product of outflows of the sources, $\prod_{s \in S^+} \delta^+(s)$ if and only if it is a lexicographically optimum flow.
\end{lemma}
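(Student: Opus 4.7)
The plan is to exhibit a common local ``balance'' characterization that captures both product-maximization and lex-optimality among max flows. First observe that every maximum flow $f$ from $S^+$ to $t$ carries the same total value $\sum_{s \in S^+} \delta^+_f(s) = \sum_{j \in [m]} p_j$, so we are optimizing over a convex polytope of outflow vectors with a fixed coordinate sum. Both the product $\prod_s \delta^+_f(s)$ (strictly Schur-concave on the positive orthant, via concavity of $\log$) and the leximin criterion on the sorted outflow vector favor the ``most balanced'' feasible point, so equivalence of the two optimality notions is natural.

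Concretely, I would call a max flow $f$ \emph{balanced} if for every pair of sources $s, s' \in S^+$ with $\delta^+_f(s) < \delta^+_f(s')$ there is no cycle in the residual graph $R_f$ that can reroute a positive amount of flow from $s'$ to $s$ (i.e., simultaneously decrease $\delta^+_f(s')$ and increase $\delta^+_f(s)$ by the same amount while respecting residual capacities). The aim is to prove
\[
    f \text{ maximizes } \prod_{s \in S^+} \delta^+_f(s) \;\iff\; f \text{ is balanced} \;\iff\; f \text{ is lex-optimum.}
\]
The ``not balanced $\Rightarrow$ not optimal'' direction is immediate for both criteria: given a valid $\varepsilon$-rerouting from $s'$ to $s$, a two-term calculation analogous to Claim~\ref{technical} shows the product strictly increases, and since the smaller outflow grows while the larger shrinks (with their sum preserved), the sorted outflow vector strictly improves in lex-order as well.

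For the converse direction I would use flow decomposition. Given a second max flow $g$, the signed function $g - f$ has zero net flow at every internal node, and $\sum_{s\in S^+} (\delta^+_g(s) - \delta^+_f(s)) = |g| - |f| = 0$. Pairing source-to-$t$ augmenting paths in $R_f$ (for sources whose outflow grew) with $t$-to-source reverse paths (for sources whose outflow shrank) decomposes $g - f$ into ``swap cycles,'' each transferring positive flow from some $s'$ to some $s$ within $R_f$. If $f$ is balanced, every such swap must satisfy $\delta^+_f(s') \le \delta^+_f(s)$, so $g$'s outflow vector is obtained from $f$'s by a sequence of spread-out transfers and therefore majorizes it; this immediately gives $\prod_s \delta^+_g(s) \le \prod_s \delta^+_f(s)$ and that $g$'s sorted vector does not lex-dominate $f$'s, establishing both optimality properties of $f$. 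The main obstacle I anticipate is formalizing the decomposition so that each swap cycle is actually realizable within $R_f$ (i.e., is a feasible local perturbation) and verifying that balance rules out \emph{every} improving swap cycle; I expect these to follow from a standard path/cycle decomposition of $g - f$ applied inside the residual graph, together with the observation that the swaps produced by the decomposition are exactly the elementary moves forbidden by the balance condition.
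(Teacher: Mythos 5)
Your proposal is correct in outline but takes a genuinely different route from the paper's proof. The paper does not prove the equivalence combinatorially: it invokes Fujishige's characterization (Claim~\ref{technicalflowlemma}) that lexicographic optimality among max flows is equivalent to minimizing the $\ell_2$-norm of the outflow vector, and then compares the KKT conditions of the $\ell_2$-minimization and $\sum\log$-maximization convex programs, deriving a common local pairwise characterization (Observation~\ref{propertyflow}: whenever $f_{ij}>0$, agent $i$ has the smallest outflow among all agents with an MPB edge to $j$). Your route replaces both steps with a single combinatorial invariant --- the ``balance'' condition ruling out residual reroutes from a higher-outflow to a lower-outflow source --- and argues each direction directly: a local $\varepsilon$-perturbation for ``not balanced $\Rightarrow$ not optimal'' (correctly noting one should take $\varepsilon$ small so the sort order is undisturbed), and a residual-graph decomposition of $g-f$ together with a majorization/Schur-concavity argument for ``balanced $\Rightarrow$ optimal.'' This is self-contained (no black-box appeal to Fujishige's $\ell_2$ result) and explains transparently, via Pigou--Dalton transfers, why the two optima coincide; the paper's KKT route is terser at the cost of the external citation. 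Two spots in your plan need tightening before it is airtight: first, describing the decomposition as pairings of source-to-$t$ with $t$-to-source paths is not quite right --- since sources in this network have no incoming edges, the pieces are residual paths directly from excess sources to deficit sources (equivalently, residual cycles through a super-source), which need not touch $t$ and may pass through intermediate sources with zero net effect on their outflow; second, the assertion that a \emph{simultaneous} collection of spread transfers, each one evaluated against the \emph{original} outflow vector of $f$, produces a vector that majorizes $f$'s needs a short justification (e.g., fix the top-$t$ index set of $f$ and observe that no forbidden transfer can move mass out of it). Both are fixable, and the resulting proof is a legitimate alternative.
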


\begin{proof}
	We crucially use the following characterization of a lexicographically optimum flow given in ~\cite{fujishige1980lexicographically}
	\begin{claim}(~\cite{fujishige1980lexicographically})
		\label{technicalflowlemma}
		A maximum flow $f$ is a lexicographically optimal flow if and only if it minimizes the $\ell_2$-norm of the outflow vector of the sources, i.e., the flow that minimizes $\sum_{s \in S^+} (\delta^+(s))^2$.
	\end{claim}
	Therefore, it suffices to show that any flow that minimizes the $\ell_2$-norm of the outflows of the sources also maximizes the product of the outflows from the sources. This follows immediately when we write the KKT conditions for both the convex programs. Consider the convex program for minimizing the $\ell_2$-norm of the outflows 
	
	\begin{equation*}
	\begin{array}{ll@{}ll}
	\text{minimize}  & \displaystyle \sum\limits_{i \in [n]}^{ } (\sum_{(i,j) \in E} f_{ij})^2 \\
	\text{subject to}& \displaystyle  \sum_{(i,j) \in E}  f_{ij} = p_j, & &\forall (i,j) \in E\\ 
	&     f_{ij} \geq  0,  & &\forall (i,j) \in E\\
	\end{array}
	\end{equation*}
	The optimum solution will satisfy the KKT conditions. Let $\lambda_j$ be the \emph{free} dual variable corresponding to the constraint $\sum_{(i,j) \in E} f_{ij} = p_j$ and $\mu_{ij} \geq 0$ be the dual variable corresponding to $f_{ij} \geq 0$. The KKT conditions require, in addition to primal and dual feasibility, 
	\begin{itemize}
		\item \textbf{Stationarity:} For all $(i,j) \in E$,   $2(\sum_{(i,j) \in E} f_{ij})  + \lambda_j - \mu_{ij} = 0 \implies \sum_{(i,j) \in E} f_{ij} = (\mu_{ij} -\lambda_j)/2$
		\item \textbf{Complementary Slackness:} For all $(i,j) \in E$, $\mu_{ij} \cdot f_{ij} = 0$.
	\end{itemize}	
	Consider any $i$ and $j$ such that $f_{ij} > 0$. Then $\mu_{ij} = 0$. Then, we have $\sum_{(i,j) \in E} f_{ij} = - \lambda_j/2$. This implies that all nodes $i$ that have positive surplus towards the node $j$ have the same outflow which is equal to $- \lambda_j/2$. Since $\mu_{ij} \geq 0$, this also implies that $- \lambda_j/2 \leq -(\lambda_j + \mu_{ij})/2$. This gives us the following characterization of the optimum flow.
	
	\begin{observation}
		\label{propertyflow}
		Let $f$ be any flow in the market-network that minimizes the $\ell_2$-norm of the outflows. If there are two source nodes $i$ and $i'$ that are adjacent to node $j$, and $f_{ij} > 0$, then (i) $\sum_{(i,k) \in E} f_{ik} = \sum_{(i',k) \in E} f_{i'k} $ or (ii) $\sum_{(i,k) \in E} f_{ik} < \sum_{(i',k) \in E} f_{i'k} $ and $f_{i'j} = 0$.
	\end{observation}

	We will now show that any flow that satisfies the property in Observation~\ref{propertyflow}, also maximizes the product of outflows, implying that any flow that minimizes the $\ell_2$-norm of the surpluses also maximizes the product of outflows. To this end, consider the program of maximizing the product of outflows or equivalently maximizing the sum of the logarithms of the outflows.
	
	\begin{equation}
	\label{eqprod}
	\begin{array}{ll@{}ll}
	\text{maximize}  & \displaystyle \sum\limits_{i \in [n]}^{ } (\log (\sum_{(i,j) \in E} f_{ij})) \\
	\text{subject to}& \displaystyle  \sum_{(i,j) \in E}  f_{ij} = p_j, & &\forall (i,j) \in E\\ 
	&     f_{ij} \geq  0,  & &\forall (i,j) \in E\\
	\end{array}
	\end{equation}
	Let $\lambda_j$ be the \emph{free} dual variable corresponding to the constraint $\sum_{(i,j) \in E} f_{ij} = p_j$ and $\mu_{ij} \geq 0$ be the dual variable corresponding to $f_{ij} \geq 0$. The KKT conditions require, in addition to primal and dual feasibility, 
	\begin{itemize}
		\item \textbf{Stationarity:} For all $(i,j) \in E$,  $ - 1 / (\sum_{(i,j) \in E} f_{ij})  + \lambda_j - \mu_{ij} = 0$  
		\item \textbf{Complementary Slackness:} For all $(i,j) \in E$, $\mu_{ij} \cdot f_{ij} = 0$.
	\end{itemize}	
	Consider any flow $f$ that satisfies the property in Observation~\ref{propertyflow}. We will find the dual variables $\lambda_j$s and $\mu_{ij}$s that satisfy the KKT conditions of the convex program~\ref{eqprod}.  We set $\lambda_j = 1/ (\sum_{(i,j) \in E} f_{ij})$ where $i$ is a source node such that $f_{ij} > 0$. We set $\mu_{ij} = \lambda_j - 1/(\sum_{(i,j) \in E} f_{ij})$. We prove that $f_{ij}$'s, $\mu_{ij}$'s and $\lambda_j$'s satisfy the KKT conditions for the convex program~\eqref{eqprod}.
	
	\begin{itemize}
		\item \textbf{Stationarity:} We set $\mu_{ij} = \lambda_j - 1/(\sum_{(i,j) \in E} f_{ij})$ for all $(i,j) \in E$. Therefore, for all $i,j$, we have $- 1/(\sum_{(i,j) \in E} f_{ij}) + \lambda_j - \mu_{ij} = 0$.  
		
		\item \textbf{Complementary slackness:} For all $(i,j) \in E$ where $f_{ij} = 0$, we have $f_{ij} \cdot \mu_{ij} = 0$. For all $i,j$ such that $f_{ij} > 0$, we have $\lambda_j = 1/(\sum_{(i,j) \in E} f_{ij})$, implying that $\mu_{ij} =  \lambda_j - 1/(\sum_{(i,j) \in E} f_{ij}) = 0$.
		
		\item  \textbf{Primal feasibility:} This is satisfied by the definition of a flow in a market-network.
		
		\item \textbf{Dual feasibility:} We want to show that $\mu_{ij} \geq 0$ for all $i,j$. For all $i,j$ such that $f_{ij} > 0$, we have $\lambda_j = 1/(\sum_{(i,j) \in E} f_{ij})$, implying that $\mu_{ij} =  \lambda_j - 1/(\sum_{(i,j) \in E} f_{ij}) = 0$. For all $i,j$ such that $f_{ij} = 0$, let $i'$ be a source node such that $f_{i'j} > 0$. Note that by Observation~\ref{eqprod}, we have $\sum_{(i,j) \in E} f_{ij} \geq \sum_{(i,j) \in E}f_{i'j}$. Therefore, we have, 
		\begin{align*}
		\mu_{ij} &= \lambda_j - 1/(\sum_{(i,j) \in E} f_{ij})\\
		&\geq \lambda_j - 1/(\sum_{(i,j) \in E} f_{i'j})\\
		&= 0 &\text{as $\lambda_j = 1/(\sum_{(i,j) \in E} f_{i'j})$}.
		\end{align*}	
	\end{itemize}
\end{proof}

 Thus, Balance-allocation$()$ can be implemented with one call to finding a lexicographically optimum flow in the MPB graph. ~\cite{fujishige1980lexicographically} give a fast algorithm to determine a lexicographically optimum flow.
 
 \begin{lemma}{~\cite{fujishige1980lexicographically}}
 	\label{lexflowcomp}
 	Given a digraph $G =(V,E)$ with edge capacities, source set $S^+$ and a sink $t$, one can find a lexicographically optimum flow in $\mathcal{O}(\lvert V \rvert \cdot \lvert E \rvert \log(\lvert V \rvert \cdot \lvert E \rvert))$.
 \end{lemma}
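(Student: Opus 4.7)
The plan is to leverage Claim~\ref{technicalflowlemma} (already established in the excerpt): a max flow is lexicographically optimal if and only if it minimizes the $\ell_2$-norm of its source-outflow vector. The set of source-outflow vectors achievable by max flows forms the base polytope of a polymatroid derived from the network's cut function, so the $\ell_2$-minimizer is unique and admits a canonical ``principal partition'' structure. I would compute this structure via a max-min-fair peeling procedure and then argue that the whole procedure can be implemented in the time of a single parametric max-flow computation.

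First, I would find the overall max-flow value $F$ from $S^+$ to $t$, and then determine the largest value $\alpha_1$ such that some max flow assigns outflow at least $\alpha_1$ to every source in $S^+$. This can be decided with one max-flow computation on a modified network in which each source-to-supersource edge is capped at $\alpha_1$, using binary (or parametric) search over $\alpha$. The first block $T_1 \subseteq S^+$ consists of the sources that are \emph{tight} at level $\alpha_1$, i.e., those whose outflow in any max-min-fair max flow must equal $\alpha_1$ exactly; $T_1$ is read off from the min-cut of the capped network. After fixing the outflow of every $s \in T_1$ at $\alpha_1$, I would recurse on the residual network with source set $S^+ \setminus T_1$. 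The resulting thresholds $\alpha_1 < \alpha_2 < \cdots$ give the sorted lex-optimal outflow vector, and the lex-optimal flow itself is assembled by superposing the per-block flows. Correctness reduces to showing that a max-min-fair peel coincides with the $\ell_2$-minimizer on the polymatroid base, which is a standard polymatroid fact.

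The main obstacle is the running-time bound: a naive implementation costs at least one max-flow per block, giving $\mathcal{O}(|S^+| \cdot \mathrm{MF})$, which is too weak. The remedy I would deploy is parametric max-flow in the style of Gallo--Grigoriadis--Tarjan: parametrize the minimum-outflow requirement by a scalar $\lambda$ and track the monotone evolution of the min-cut as $\lambda$ varies. The breakpoints of the parametric min-cut correspond precisely to the thresholds $\alpha_k$, and the blocks $T_k$ correspond to symmetric differences of consecutive cuts. The parametric push-relabel algorithm enumerates all breakpoints in the same asymptotic time as a single push-relabel max flow, which together with dynamic-tree data structures yields the claimed $\mathcal{O}(|V| \cdot |E| \log(|V| \cdot |E|))$ bound. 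The delicate technical step is verifying that the induced parametric edge capacities are monotone in $\lambda$ so that the Gallo--Grigoriadis--Tarjan framework applies without any asymptotic blow-up over static max-flow; once this monotonicity is checked, the running time follows directly.
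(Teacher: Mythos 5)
The paper does not prove this lemma at all: it is a black-box citation to Fujishige (1980), and the surrounding text only uses the \emph{statement} (via Claim~\ref{technicalflowlemma} and the $\ell_2$/lexicographic equivalence in Lemma~\ref{flowrelation}). So there is no internal proof to compare against, and you are effectively reproving a cited result. Your route is genuinely different from Fujishige's original: Fujishige (1980) works with the principal partition / monotone algorithm for polymatroidal bases directly, whereas you go through Gallo--Grigoriadis--Tarjan (GGT) parametric max-flow, which postdates that paper by almost a decade. As an anachronism that is harmless; as a proof sketch it is essentially sound, since the supersource-to-$S^+$ edges capped uniformly at $\lambda$ give a parametrization in which source-side capacities are nondecreasing in $\lambda$, so GGT's monotonicity hypothesis holds, the min-cuts are nested, the breakpoints of the piecewise-linear max-flow value are precisely the thresholds $\alpha_1 < \alpha_2 < \cdots$, and the blocks $T_k$ fall out of the cut symmetric differences. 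This is a legitimate and in fact cleaner modern way to establish the running-time bound.

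Two caveats are worth flagging. First, the phrase ``binary (or parametric) search over $\alpha$'' hides an important distinction: binary search would not deliver the claimed strongly-polynomial bound (the thresholds can have denominators of unbounded bit length), so the parametric machinery is not a convenience but a necessity; you should commit to it. Second, the GGT all-breakpoints bound with dynamic trees is $\mathcal{O}(|V|\,|E|\log(|V|^2/|E|))$, which differs formally from the stated $\mathcal{O}(|V|\,|E|\log(|V|\,|E|))$; since $\log(|V|^2/|E|)=O(\log(|V|\,|E|))$ for $|E|\ge|V|-1$, your bound is never worse, but strictly speaking you are proving a (slightly stronger) variant of the citation rather than the exact form, and a one-line remark reconciling the two exponents would make the argument watertight. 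You also gloss over how to reassemble the actual flow (not just the sorted outflow vector) from the per-breakpoint cuts; this is standard for GGT applications but deserves a sentence, since Lemma~\ref{flowrelation} and the downstream Balance-allocation routine need the flow itself, not just its outflow profile.
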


Lemma~\ref{lexflowcomp} immediately gives us an efficient algorithm to implement Balance-allocation$()$.
\begin{corollary}
	\label{balance-allocation}
	We can implement the subroutine Balance-allocation$()$ in $\mathcal{O}(nm^2 \cdot \log(nm))$ time
\end{corollary}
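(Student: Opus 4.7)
The plan is to simply invoke Lemma~\ref{flowrelation} together with Lemma~\ref{lexflowcomp}, after checking that the market network has the right dimensions. Given the current prices $p$, I would first construct the market network $M_p = ([n] \cup [m] \cup \{t\}, E)$. The edge set $E$ consists of the MPB edges $(i,j)$ of infinite capacity (which can be identified in $\mathcal{O}(nm)$ time by computing $\MPB_i = \min_j d_{ij}/p_j$ for each agent $i$) together with the edges $(j,t)$ of capacity $p_j$ for each $j\in [m]$. This gives $|V| = n+m+1$ and $|E| = \mathcal{O}(nm)$.

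Next I would apply the algorithm of Lemma~\ref{lexflowcomp} to $M_p$ with source set $S^+ = [n]$ and sink $t$, producing a lexicographically optimum flow $f$ in time $\mathcal{O}(|V|\cdot|E|\log(|V|\cdot|E|)) = \mathcal{O}((n+m)\cdot nm \cdot \log(nm))$, which is $\mathcal{O}(nm^2 \log(nm))$ under the standing assumption $m \geq n$. By Lemma~\ref{flowrelation}, this flow maximizes $\prod_{i\in[n]} \delta^+(i)$; since $\delta^+(i) = \sum_{j:(i,j)\in E} f_{ij}$ and since the cap-$p_j$ edges $(j,t)$ are saturated by any maximum flow (so every chore is completely allocated), setting $x_{ij} = f_{ij}/p_j$ yields an allocation along MPB edges with $e_i = \sum_j x_{ij} p_j = \delta^+(i)$. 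Thus $x$ maximizes $\prod_i e_i$ as required by the specification of Balance-allocation$()$.

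Finally, extracting the updated set $S$ of lowest-earning agents takes $\mathcal{O}(n\log n)$ time by sorting the $e_i$ values, and the recovery of $x$ from $f$ requires one pass over the $\mathcal{O}(nm)$ edges. All of these are absorbed into the dominant cost of the lex-optimum flow computation, so the total running time is $\mathcal{O}(nm^2 \log(nm))$, as claimed.

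There is no real obstacle here: the entire argument is a black-box reduction. The only thing to verify carefully is that the market network indeed has a single sink and that the sources $[n]$ have no incoming edges (so that Definition~\ref{lexicographically_optimal_flow} applies directly) — both are immediate from the construction of $M_p$, since agents appear only as tails of edges and $t$ only as a head.
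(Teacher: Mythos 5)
Your proposal is correct and follows essentially the same route as the paper: reduce Balance-allocation to a lexicographically optimum flow computation via Lemma~\ref{flowrelation}, then invoke Lemma~\ref{lexflowcomp} with $|V| = n+m+1$ and $|E| = \mathcal{O}(nm)$, using $m \geq n$ to simplify to $\mathcal{O}(nm^2 \log(nm))$. The paper's proof is a one-line citation of Lemma~\ref{lexflowcomp} because the surrounding discussion already establishes the reduction; you have merely made those surrounding steps explicit.
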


\begin{proof}
	Follows immediately from Lemma~\ref{lexflowcomp} after substituting $\lvert V \rvert = n+m $ and $\lvert E \rvert = nm$ and from the fact that $m > n$.
\end{proof}

We are ready to bound the running time of our algorithm.

\begin{theorem}
	\label{mainthmroundeddisutilities}
	 Given a chore division instance with rounded disutilities, we can determine a CEEI in $\mathcal{O}(n^2m^2/ \alpha^2 \cdot \log(nD_{max}) \log(nm)) \in \tilde{\mathcal{O}} (n^2m^2 / \alpha^2)$ time.
\end{theorem}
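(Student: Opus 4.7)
The plan is to combine the iteration bound from Lemma~\ref{convergence} with the per-iteration cost, the dominant contributor being the cost of the Balance-allocation$()$ subroutine from Corollary~\ref{balance-allocation}. Since Lemma~\ref{convergence} already separates iterations into those that invoke Balance-allocation$()$ and those that do not, it is natural to account for the running time by handling these two classes of iterations separately and then adding the bounds.

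First I would bound the total cost incurred by the Balance-allocation$()$ calls. By Lemma~\ref{convergence}, Balance-allocation$()$ is called at most $\mathcal{O}(n/\alpha^{2}\cdot\log(nD_{\max}))$ times, and by Corollary~\ref{balance-allocation} each call costs $\mathcal{O}(nm^{2}\log(nm))$ time; multiplying gives $\mathcal{O}(n^{2}m^{2}/\alpha^{2}\cdot\log(nD_{\max})\log(nm))$. Next I would bound the cost of each remaining iteration, i.e., the iterations performing only a Price-update and a simple Allocation-update (without balancing). In such an iteration, Price-update scans all $(i,j)$ to determine $\Gamma(S)$ and the tight scaling factor $\gamma$, which takes $\mathcal{O}(nm)$ time, and the light Allocation-update merely reassigns each chore of $J$ along a single MPB edge, which is again $\mathcal{O}(nm)$. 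By Lemma~\ref{convergence} there are at most $\mathcal{O}(nm/\alpha^{2}\cdot\log(nD_{\max}))$ total iterations, contributing $\mathcal{O}(n^{2}m^{2}/\alpha^{2}\cdot\log(nD_{\max}))$, which is dominated by the previous bound.

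Adding the two contributions yields the claimed $\mathcal{O}(n^{2}m^{2}/\alpha^{2}\cdot\log(nD_{\max})\log(nm))$ running time; hiding the polylogarithmic factors in $n,m,D_{\max}$ gives the $\tilde{\mathcal{O}}(n^{2}m^{2}/\alpha^{2})$ bound. The correctness of the returned $(x,p)$ as a CEEI follows from the termination condition of Algorithm~\ref{alg:full} together with Lemma~\ref{market-clearing}: the loop exits only when all earnings are equal (hence each agent earns $1$ after normalization), while throughout the run every chore is fully allocated and every agent spends only on MPB chores.

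I do not expect a substantive obstacle here, since the hard work has already been done in Lemmas \ref{market-clearing}--\ref{convergence} and Corollary~\ref{balance-allocation}; the only care needed is to make sure the per-iteration bookkeeping (maintaining the MPB graph, the earnings $e_i$, the set $S$, and detecting the appearance of a new MPB edge during Price-update) can be carried out in $\mathcal{O}(nm)$ amortized time so that the non-Balance-allocation iterations do not asymptotically inflate the running time. This is routine: recomputing $\gamma=\min_{i\in S,\,j\notin\Gamma(S)}\MPB_{i}\cdot p_{j}/d_{ij}$ naively in $\mathcal{O}(nm)$ time per iteration suffices for the stated bound, without needing a more sophisticated data structure.
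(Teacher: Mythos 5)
Your proposal is correct and follows essentially the same route as the paper: it multiplies the iteration counts from Lemma~\ref{convergence} by the per-iteration costs, with the $\mathcal{O}(n/\alpha^2\cdot\log(nD_{max}))$ Balance-allocation$()$ calls at $\mathcal{O}(nm^2\log(nm))$ each dominating, and the $\mathcal{O}(nm)$-time price/light-allocation updates over $\mathcal{O}(nm/\alpha^2\cdot\log(nD_{max}))$ iterations being lower order. The correctness argument via the termination condition and Lemma~\ref{market-clearing} also matches the paper's.
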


\begin{proof}
	By lemma~\ref{convergence}, Algorithm~\ref{main-algorithm} has a total of $\mathcal{O}(nm/ \alpha^2 \log(nD_{max}))$ iterations, out of which at most $\mathcal{O}(n / \alpha^2 \log(nD_{max}))$ involves invocation to Balance-allocation$()$. We now bound the total time taken by our algorithm for all price updates and all allocation updates. Note that each price update can be implemented in $\mathcal{O}(nm)$ as it mainly involves finding the agent $i \in S$ and chore $j \notin \Gamma(S)$ such that $d_{ij} / (\MPB_i \cdot p_j) $ is minimum. Since there are at most $\mathcal{O}(nm/ \alpha^2 \log(nD_{max}))$ iterations, the total time taken on all price update phases is $\mathcal{O}(n^2m^2/ \alpha^2 \log(nD_{max}))$.
	
	Now, we bound the total time taken on all allocation updates. First, look into all the allocation update calls that involves a call to Balance-allocation$()$. Each such iteration can be implemented in $\mathcal{O}(nm^2 \log(nm))$ time by Corollary~\ref{balance-allocation} and there are at most $\mathcal{O}(n/ \alpha^2 \log(nD_{max}))$ such iterations. Thus, the total time taken on all allocations that involve call to Balance-allocation$()$ is $\mathcal{O}(n^2m^2/ \alpha^2 \log(nD_{max}) \log(nm))$. Lastly, look at all allocation update phases that does not involve call to Balance-allocation$()$. Each such phase can be implemented in $\mathcal{O}(nm)$ time as it mainly involves allocating the chores in $J$ to agents in $S$ along MPB edges and there are at most $\mathcal{O}(nm)$ MPB edges. Since there are at most $\mathcal{O}(nm/ \alpha^2 \log(nD_{max}))$ many such iterations, the total time spend on iterations that do not involve call to Balance-allocation$()$ is $\mathcal{O}(n^2m^2/ \alpha^2 \log(nD_{max}))$. Overall, the total running time is $\mathcal{O}(n^2m^2/ \alpha^2 \cdot \log(nD_{max}) \log(nm)) \in \tilde{\mathcal{O}} (n^2m^2 / \alpha^2)$.
\end{proof}


\subsection{Modification of Algorithm~\ref{main-algorithm} to get a FPTAS}
We show how to extend the combinatorial algorithmic framework to get a $(1-\varepsilon)$-CEEI in $\mathcal{O}(n^4m^2\cdot \log(nD_{max}))$ time. With  subtle changes, Algorithm~\ref{main-algorithm} can be adapted to give a $(1-\varepsilon)$-CEEI, when the disutility values are arbitrary. We make no changes to the price update phase. In the allocation update phase, the only change we make is in initializing and updating the set $S$. In particular, after each call to balance allocation,  the set $S$ is determined by the following procedure: Renumber the agents according to increasing order of their earnings. Let $i$ be such that $e_i / e_{i+1}$ is maximum. Then $S \gets [i]$. We run the algorithm as long as $\mathit{max}_{i \in [n]} e_i/ \mathit{min}_{i \in [n]} e_i \geq 1 + \varepsilon$. The analysis of convergence is a little more involved.

Note that Observation~\ref{Sproperty}, Lemma~\ref{market-clearing} and Observation~\ref{maxmin} still hold as they only rely on the fact that every time $S$ is updated, we have $\mathit{max}_{i \in S} e_i < \mathit{min} _{i \notin S} e_i$, which is always true in the new algorithm. We only need to argue about Lemma~\ref{balanced_flow}. This would require more work as we may not always have the multiplicative gap of $1+\alpha$ between $\mathit{max}_{i \in S} e_i$ and $\mathit{min}_{ i \notin S} e_i$. However, we circumvent this problem by showing that between every consecutive calls to balance the product of earnings increases by a multiplicative factor of $1 + \varepsilon^2/256n^2$.

Consider the set of iterations (say $\iter_1$ ro $\iter_r$ ) following the iteration involving a call to Balance-allocation$()$ until the next call to Balance-allocation$()$. Let $e$ be the earning vector, $p$ be the price vector, and $S = [i]$ (after renumbering the agents in increasing order of earnings) at the beginning of $\iter_1$. Since we have $e_1/e_n \geq 1 + \varepsilon$, we have $e_i/e_{i+1} \geq (1 + \varepsilon)^{1/n} \geq 1 + \varepsilon/2n$. Also note that $\emin/ \emax = \mathit{min}_{i \in S} e_i/ \mathit{max}_{i \in S} e_i = e_i/ e_{i+1} \geq 1 + \varepsilon/2n$. 

We now look into $\iter_{r}$. Let $e'$ be the earning vector, $p'$ be the price vector, and $S'$ be the set $S$ at the beginning of $\iter_{r}$. Note that $S' =S$, as the set $S$ has not changed (it only changes through an invocation to Balance-allocation$()$, which has not happened from $\iter_1$ to $\iter_{r-1}$), while some of the chores from the set $[m] \setminus \Gamma(S)$ in the earlier iterations got allocated along MPB edges to agents in $S$. If $\emin' / \emax' = \mathit{min}_{i \in S} e'_i/ \mathit{max}_{i \in S} e'_i \geq 1 + \varepsilon/4n$, then by the same analysis in Lemma~\ref{balanced_flow}, we can argue that the product of disutilities improve by a factor of $1 + \varepsilon^2/256n^2$.

Now, we consider the case that $\emin' / \emax'  \leq 1 + \varepsilon/4n$. In this case, we show that the ratio of the product of disutilities at the beginning of $\iter_{r}$ to that at the beginning of $\iter_1$ is  $1 + \varepsilon^2/16n^2$. Let $J$ denote the set of chores that were added to the set $\Gamma(S)$ from $\iter_1$ to $\iter_{r-1}$. Let $p_j$ denote the price of a chore $j \in J$ in the iteration in which it was added to $\Gamma(S)$ via MPB edges. Note that if a chore $j$ is added to $\Gamma(S)$ via new MPB edges in $\iter_{\ell}$, the price of chore $j$ remained unaltered from $\iter_1$ to $\iter_{\ell}$, i.e., $p_j$ is the price of the chore in $\iter_1$ also. Let $\beta$ denote the total multiplicative decrease across all price-update phases from $\iter_1$ to $\iter_{r-1}$, i.e., $\beta = \prod_{\ell \in [r-1]} \gamma_{\ell}$, where $\gamma_{\ell}$ is the amount by which the prices of the chores in $\Gamma(S)$ has been reduced in the price update phase of $\iter_{\ell}$. Note for each $i \in S$, we have $e'_i \geq  \beta \cdot (e_i + \delta_i)$, where $\delta_i = \sum_{j \in J} x'_{ij}p_j$ where $x'$ is the allocation at the end of $\iter_{r-1}$ (or equivalently, at the beginning of $\iter_1$): This is due to the fact that if $z$ units of chore $j \in J$ is allocated to agent $i$ in $\iter_{\ell}$, then the price of this fractional amount of chore $j$ gets scaled down by $\prod_{ \ell+1 \leq q \leq r-1} \gamma_q \geq \beta$ until $\iter_{r-1}$ in the earning of agent $i$. Similarly, for each $i \notin S$, we have $e'_i = e_i - \delta_i$ where $\delta_i = \sum_{j \in J}x_{ij}p_j$ where $x$ is the allocation at $\iter_1$: This is due to the fact that the only decrease in the earnings of the agents outside $S$ comes from loosing the consumption of the chores in $J$. Note that $\sum_{i \in S} \delta_i = \sum_{i \in S}\sum_{j \in J}x'_{ij}p_j = \sum_{j \in J}p_j$ as by the end of $\iter_{r-1}$ all chores in $J$ are allocated to agents in $S$ and similarly $\sum_{i \notin S} \sum_{j \in J} x_{ij} p_j = \sum_{j \in J} p_j$ as initially (at the beginning of $\iter_1$), all chores in $J$ were allocated to agents in $[n] \setminus S$. Therefore, we have $\sum_{i \in S} \delta_i = \sum_{i \notin S} \delta_i$. 

Let $i' \in S$ be such that $\emax' = e'_{i'}$ and $\tilde{i} \in [n] \setminus S$ be such that $\emin' = e'_{\tilde{i}}$. Similar to the proof of Lemma~\ref{balanced_flow}, we give a lower-bound on $\Delta = \sum_{i \in [n]} \delta_i$ by giving a lower-bound on $ \delta_{i'} + \delta_{\tilde{i}}$. Since we have $\emin'/ \emax' \leq 1 + \varepsilon/4n$, we can conclude that 
\begin{align*}
&\frac{\emin - \delta_{\tilde{i}}}{\beta(\emax + \delta_{i'})} \leq 1 + \frac{\varepsilon}{4n}\\
&\implies \emin - \delta_{\tilde{i}}  \leq \beta (\emax + \delta_{i'}) + \frac{\varepsilon}{4n} (\beta) (\emax + \delta_{i'})\\
&\implies \beta \delta_{i'}  + \delta_{\tilde{i}} \geq \emin - \beta \emax - \frac{\varepsilon}{4n} (\beta (\emax + \delta_{i'}))\\
\end{align*}
Since $\emin \geq \emin' \geq \emax' \geq \beta (\emax + \delta_{i'})$, we have ,
\begin{align*}
&\beta \delta_{i'}  + \delta_{\tilde{i}} \geq \emin - \beta \emax - \frac{\varepsilon}{4n} (\emin)\\
&\implies \beta \delta_{i'} + \delta_{\tilde{i}} \geq (1- \frac{\varepsilon}{4n}) \emin - \beta \emax\\
&\implies \delta_{i'} + \delta_{\tilde{i}} \geq (1- \frac{\varepsilon}{4n}) \emin -  \emax &(\text{as $\beta < 1$})\\
&\implies \delta_{i'} + \delta_{\tilde{i}} \geq (1- \frac{\varepsilon}{4n}) \emin - (1+ \frac{\varepsilon}{2n})^{-1} \emin &\text{(as $\emin \geq (1 + \frac{\varepsilon}{2n}) \emax$)}\\
&\implies \delta_{i'} + \delta_{\tilde{i}} \geq \big ((1- \frac{\varepsilon}{4n}) -  (1+ \frac{\varepsilon}{2n})^{-1} \big ) \emin \\
&\implies \delta_{i'} + \delta_{\tilde{i}} \geq \bigg (\frac{(1- \frac{\varepsilon}{4n})(1+ \frac{\varepsilon}{2n})-1}{1+ \frac{\varepsilon}{2n}} \bigg) \emin \\
&\implies \delta_{i'} + \delta_{\tilde{i}} \geq \frac{\varepsilon}{16n} \cdot \emin.
\end{align*}
Therefore, $\Delta \geq  (\varepsilon/16n) \cdot \emin$. We are now ready to show that ratio of the product of disutilities at the beginning of $\iter_{r}$ to that at the beginning of $\iter_1$ is  $1 + \varepsilon^2/256n^2$. Let $\MPB_i$ denote the MPB value of agent $i$ in $\iter_1$ and $\MPB'_i$ denote the MPB value of agent $i$ in $\iter_{r-1}$. Note that $\MPB'_i = (1/\beta) \cdot \MPB_i$ for all $i \in S$ and $\MPB'_i = \MPB_i$ for all $i \notin S$. Therefore, we have the product of disutilities in $\iter_{r-1}$ to that in $\iter_1$ as 
\begin{align*}
&\frac{\prod_{i \in S} \MPB'_i \cdot \prod_{i \notin S} \MPB'_i \cdot \prod_{i \in [n]} e'_i}{\prod_{i \in S} \MPB_i \cdot \prod_{i \notin S} \MPB_i \cdot \prod_{i \in [n]} e_i}\\
&= \frac{(1/\beta)^{\lvert S \rvert} \cdot \prod_{i \in [n]} e'_i}{\prod_{i \in [n]} e_i}\\
&\geq \frac{(1/\beta)^{\lvert S \rvert} \cdot \prod_{i \in S} \beta (e_i + \delta_i) \cdot \prod_{i \notin S} (e_i -\delta_i)}{\prod_{i \in [n]} e_i}\\
&= \frac{ \prod_{i \in S} (e_i + \delta_i) \cdot \prod_{i \notin S} (e_i -\delta_i)} { \prod_{i \in [n]} e_i}\\
&\geq \frac{(\emax + \Delta) (\emin-\Delta)}{\emax \emin} &\text{(by Claim~\ref{technical})} .  
\end{align*}
From here on, we can continue the same analysis as in the proof of Lemma~\ref{balanced_flow} and get the desired bound.

Therefore, there can be at most $\mathcal{O} ( \log_{1 + \varepsilon^2/256n^2} ( (nD_{max})^n)) \in \mathcal{O}(n^3 \log(nD_{max})/ \varepsilon^2 ) $ calls to Balance-allocation$()$. Between any two consecutive calls to Balance-allocation$()$, we can have at most $\mathcal{O}(m)$ iterations (as the size of $\Gamma(S)$ strictly increases in each of these iterations). Therefore, there are at most $\mathcal{O}(n^3m \log(nD_{max})/ \varepsilon^2)$ many iterations that do not involve a call to Balance-allocation$()$. Each call to Balance-allocation$()$ can be implemented in $\mathcal{O}(nm(n+m))$ time and each iteration that does not involve a call to Balance-allocation$()$ can be implemented in $\mathcal{O}(nm)$ time. Therefore, the desired running time is $\mathcal{O} ( \frac{n^3 \log(nD_{max})}{\varepsilon^2} \cdot nm(n+m)\log(nm) + \frac{n^3m \log(nD_{max})}{\varepsilon^2} \cdot nm ) \in \mathcal{O}(n^4m^2 \log(nD_{max})\log(nm) / \varepsilon^2)$.

\begin{theorem}
	\label{mainthmFPTAS}
	There exists a $\mathcal{O}((n^4m^2 \log(nD_{max}) \log(nm)) / \varepsilon^2)$ time combinatorial algorithm that determines a $(1-\varepsilon)$-CEEI when agents have arbitrary disutility values. 
\end{theorem}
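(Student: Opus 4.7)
\textbf{Proof plan for Theorem~\ref{mainthmFPTAS}.} The plan is to re-use Algorithm~\ref{main-algorithm} essentially verbatim, changing only (i) the selection rule for the low-earner set $S$ after each call to Balance-allocation and (ii) the termination criterion. Specifically, after every Balance-allocation call I would sort agents in increasing order of earnings $e_1\le e_2\le\dots\le e_n$ and set $S=[i^*]$ where $i^*$ maximizes the ratio $e_{i^*}/e_{i^*+1}$; the loop stops as soon as $\max_i e_i/\min_i e_i<1+\varepsilon$, at which point Lemma~\ref{market-clearing} immediately gives a $(1-\varepsilon)$-CEEI. Observations~\ref{Sproperty} and~\ref{maxmin}, as well as Lemma~\ref{market-clearing}, go through unchanged since they only rely on the property that $S$ is chosen as a prefix in the earning order with $\max_{i\in S}e_i<\min_{i\notin S}e_i$.

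The main obstacle is that the proof of Lemma~\ref{balanced_flow} crucially used a multiplicative gap of $(1+\alpha)$ between $\emax$ and $\emin$, which is no longer available. To replace it, I would show the weaker but sufficient statement: between two consecutive Balance-allocation calls, the potential $\prod_i D_i(x_i)=\prod_i\MPB_i\cdot\prod_i e_i$ increases by a factor of at least $1+\varepsilon^2/(256n^2)$. By the pigeonhole choice of $i^*$ and the termination condition, at the start of such an interval we have $\emin/\emax=e_{i^*}/e_{i^*+1}\ge (1+\varepsilon)^{1/n}\ge 1+\varepsilon/(2n)$.

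I would split the analysis into two cases at the moment of the next Balance-allocation call. In the \emph{easy case}, the current $\min_{i\in S}e'_i/\max_{i\in S}e'_i$ is still at least $1+\varepsilon/(4n)$; then the existing argument of Lemma~\ref{balanced_flow}, with $\alpha$ replaced by $\varepsilon/(4n)$, yields the bound $1+\varepsilon^2/(256n^2)$ on that single call. In the \emph{hard case}, the ratio has collapsed below $1+\varepsilon/(4n)$; here I would aggregate over the whole sequence of price/allocation updates between the two Balance calls. Writing $\beta$ for the cumulative price-scaling factor on $\Gamma(S)$, and $\delta_i$ for the total amount of money transferred into/out of agent $i$ via the chores that joined $\Gamma(S)$ during the interval, I would use the telescoping identity $\MPB'_i=\MPB_i/\beta$ for $i\in S$ together with $e'_i\ge\beta(e_i+\delta_i)$ to cancel the $\beta$ factors. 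The collapse of the ratio forces $\delta_{i'}+\delta_{\tilde i}\ge (\varepsilon/16n)\cdot\emin$ for the extremal indices, which via Claim~\ref{technical} (applied exactly as in Lemma~\ref{balanced_flow}) yields the same $1+\varepsilon^2/(256n^2)$ multiplicative improvement in $\prod_i D_i(x_i)$.

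The remainder is routine accounting. Since $\NSW\le (nD_{\max})^n$ and never decreases (Lemma~\ref{ear_prod_increase}), the number of Balance-allocation calls is $\mathcal{O}(n^3\log(nD_{\max})/\varepsilon^2)$. Between consecutive Balance calls, $|\Gamma(S)|$ strictly grows, so there are at most $m$ intermediate iterations. Each Balance call costs $\mathcal{O}(nm(n+m)\log(nm))$ by Corollary~\ref{balance-allocation}, and each price-update or non-Balance allocation-update step costs $\mathcal{O}(nm)$. Multiplying yields the claimed $\mathcal{O}(n^4m^2\log(nD_{\max})\log(nm)/\varepsilon^2)$ bound. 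The only delicate step is the hard-case potential analysis sketched above; everything else is a direct transplant of the rounded-disutility analysis with $\alpha$ replaced by $\Theta(\varepsilon/n)$.
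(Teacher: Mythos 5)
Your proposal matches the paper's own argument essentially step by step: same modified rule for picking $S$ as the prefix $[i^*]$ maximizing $e_{i^*}/e_{i^*+1}$, same $1+\varepsilon$ termination criterion, the same pigeonhole bound $(1+\varepsilon)^{1/n}\ge 1+\varepsilon/(2n)$, the same two-case split at threshold $1+\varepsilon/(4n)$, the same aggregation over the interval between Balance-allocation calls with the cumulative scaling factor $\beta$ cancelling against $\prod_{i\in S}\MPB'_i = \beta^{-|S|}\prod_{i\in S}\MPB_i$, the same key inequality $\delta_{i'}+\delta_{\tilde i}\ge(\varepsilon/16n)\emin$, the same $1+\varepsilon^2/(256n^2)$ improvement factor, and the same iteration-count and running-time accounting. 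This is the paper's proof, correctly reproduced.
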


Observe that if we do not stop our algorithm, until earning of all the agents are the same, then we reach an exact CE. Since each balanced flow computation finds the Nash-welfare maximizing allocation for the given MPB configuration, it follows that the MPB configuration cannot repeat. This together with Nash-welfare as the potential function puts the problem in PLS. This algorithm and argument extends to the Fisher model as well, where agents may have different earning requirements -- if the earning requirement of agent $i$ is $\eta_i$ then in every iteration sort them in the decreasing order of $\frac{e_i}{\eta_i}$.

\begin{corollary}
The problem of finding competitive equilibrium in the Fisher model is in PLS. 
\end{corollary}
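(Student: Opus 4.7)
The plan is to recast the algorithm of this section, extended to the Fisher setting by setting $S$ to the agents minimizing $e_i/\eta_i$ (with $\eta_i$ the earning requirement of agent $i$, so $\eta_i \equiv 1$ recovers CEEI), as an instance of PLS whose local optima coincide with competitive equilibria. I take the \emph{feasible solutions} to be MPB support graphs $H \subseteq [n] \times [m]$, each of which admits a unique weighted Nash-welfare-maximizing allocation $x(H) \in \mathbb{R}^{nm}_{\geq 0}$ supported on $H$ (maximizing $\prod_i e_i^{\eta_i}$); this $x(H)$ is the output of the weighted analogue of the convex program of Lemma~\ref{flowrelation} and is computable in polynomial time by Corollary~\ref{balance-allocation}. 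The \emph{potential} is $\Phi(H) = \prod_i D_i(x(H)_i)^{\eta_i}$, which by the identity $D_i(x_i) = \MPB_i \cdot e_i$ factorizes within a fixed $H$ as an $H$-determined first factor times the quantity $\prod_i e_i^{\eta_i}$ that the weighted balanced-allocation subroutine maximizes.

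The \emph{neighborhood} $N(H)$ consists of the MPB support graphs reachable from $H$ by one Fisher iteration of Algorithm~\ref{main-algorithm}: compute $S$ from $x(H)$, continuously decrease the prices of $\Gamma(S)$ until a new MPB edge appears from $S$ to $[m] \setminus \Gamma(S)$, and let $H'$ be the new MPB graph (with the newly tight edge added and any stale edges dropped). The key monotonicity fact, which follows from the argument of Section~\ref{sec:alg} applied with weights $\eta_i$, is that $\Phi$ is invariant under the price update (the scalings of $\MPB_i$ and $e_i$ cancel inside $D_i(x_i)$ for $i \in S$, while by Observation~\ref{Sproperty} agents outside $S$ see no change) and strictly improves under the weighted balanced-allocation recomputation on $H'$ — the proof of Lemma~\ref{balanced_flow} carries over by substituting the gap between the extremal values of $e_i/\eta_i$ for the CEEI gap $\emin/\emax - 1$. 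Hence $\Phi(H') > \Phi(H)$ whenever $H' \in N(H) \setminus \{H\}$, so MPB graphs cannot repeat; combined with $|F| \leq 2^{nm}$ this yields PLS membership.

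The main obstacle is verifying that every local optimum is an actual CE rather than a spurious dead end of the algorithm. I would argue by contradiction: suppose $e_i(x(H))/\eta_i \neq e_j(x(H))/\eta_j$ for some pair at a purported local optimum $H$. Then $S$ is a proper nonempty subset of $[n]$, the price decrement on $\Gamma(S)$ is nontrivial, and since the disutility graph $G(I)$ is connected under the sufficient conditions of Theorem~\ref{sufficientcondition}, a new MPB edge from $S$ to $[m] \setminus \Gamma(S)$ must appear after finitely many decrements; the weighted version of Lemma~\ref{balanced_flow} then exhibits a strictly better neighbor, contradicting local optimality. Consequently all $e_i/\eta_i$ are equal at a local optimum, and the scaling of prices forced by feasibility of complete allocation on MPB edges pins down $e_i = \eta_i$, which together with Lemma~\ref{market-clearing} is exactly the CE condition. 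A minor bookkeeping step, handled by merging consecutive algorithmic iterations that stay in the same MPB graph into a single PLS neighborhood move, accommodates the branch of Algorithm~\ref{allocation-update} that skips the full balanced-allocation call.
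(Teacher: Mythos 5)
Your proof matches the paper's own argument, which appears as a brief paragraph just before the corollary: the feasible solutions are MPB configurations, the potential is the Nash welfare of disutilities, each Balance-allocation call produces the NSW-maximizing allocation for the current configuration so configurations cannot repeat, and the extension from CEEI to the Fisher model is obtained by ranking agents according to $e_i/\eta_i$. You flesh out details the paper leaves implicit (well-definedness of $\Phi(H)$ from the MPB graph alone, why local optima must be CEs, and how to absorb the non-Balance-allocation branch of the allocation update), but the scaffolding and key invariants are the same.
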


\section{PPAD-Hardness for the Exchange Model}\label{ppadhardness}
In this section, we show that finding (approximate) CE is intractable in the exchange model with chores and linear valuations. 
It is well-known that under exchange model a CE may not exist, and recently \cite{ChaudhuryGMM22} showed existence under certain mild sufficiency conditions, namely conditions $SC_1$ and $SC_2$ of Theorem \ref{sufficientcondition}. 
We show that even for the instances satisfying these conditions, it is PPAD-hard to find a $(1-1/poly(n))$-approximate CE (Definition \ref{def:ce}). In particular, we will show that any polynomial time algorithm that determines a $(1-1/poly(n))$-approximate CE on instances that satisfy conditions $SC_1$ and $SC_2$ of Theorem~\ref{sufficientcondition}, will yield an algorithm to find a $1/n$-approximate Nash equilibrium in a \emph{normalized polymatrix game}. The latter is known to be PPAD-hard~\cite{chen2017complexity}. Next we recall the normalized polymatrix game problem:

\begin{problem*}\textbf{(Normalized Polymatrix Game)~\cite{chen2017complexity}}\\
	\textbf{Given}: A $2n \times 2n$ rational matrix $\M$ with every entry in $[0,1]$ and $\M_{i,2j-1} + \M_{i,2j} = 1$ for all $i \in [2n]$ and $j \in [n]$ .\\
	\textbf{Find}: An approximate Nash equilibrium strategy vector $x \in \mathbb{R}^{2n}_{\geq 0}$ such that $x_{2i-1} + x_{2i} = 1$ and 
	\begin{align*}
	& x^T \cdot \M_{*,{2i-1}} > x^T \cdot \M_{*,2i} + \tfrac{1}{n} \implies x_{2i} = 0.\\
	& x^T \cdot \M_{*,{2i}} > x^T \cdot \M_{*,2i-1} + \tfrac{1}{n} \implies x_{2i-1} = 0.
	\end{align*}
	where $M_{*,k}$ represents the $k^{\mathit{th}}$ column of the matrix $\M$.	
\end{problem*}

From the next section onward, we elaborate our construction and proof of reduction: We first introduce all agents and chores. Thereafter, we define the disutility matrix and endowment matrix and show that our instance satisfies the sufficiency conditions of Theorem~\ref{sufficientcondition}, and therefore admits a CE. Then, we show that our instance and the prices at $(1-1/poly(n))$-approximate CE exhibits the five properties of \emph{pairwise equal endowments}, \emph{(approximate) fixed earning}, \emph{(approximate) price equality}, \emph{price regulation} and \emph{reverse ratio amplification} (as discussed in Section~\ref{mainres3}), and thus in polynomial-time we can construct the equilibrium strategy vector $x$ for $I$ from any $(1-\frac{1}{poly(n)})$-approximate CE in $E(I)$. The reader is highly encouraged to read Section~\ref{mainres3} before reading the elaborate version of the proof to get the idea of the overall proof sketch. 

\subsection{Agent and Chore Sets}
We define the set of $K = 2c \cdot \lceil \log(n) \rceil$ many sets of chores, where $c = 3$ (observe crucially that $K$ is even),
\begin{align*}
B_k &= \left\{\cup_{i \in [2n]} b^k_i \right\} &\text{for all $k \in [K]$}, 
\end{align*} 
and $K$ many sets of agents 
\begin{align*}
A_k &=\begin{cases} 
                \left\{a^1_i\ |\ i \in [2n]\right\} \cup \left\{a'_i\ |\ i \in [2n]\right\}  &\text{when $k=1$}, \\
				\left\{a^k_i\ |\ i \in [2n]\right\}  \cup \left\{\A^k_i\ |\ i \in [n] \right\} &\text{when $ 2 \leq k \leq K-1$}, \\
				\left\{a^K_{i,j}\ |\ i,j \in [2n]\right\} \cup \left\{\A^K_i\ |\ i \in [n] \right\}   &\text{when $k = K$}.
	\end{cases}
\end{align*} 

We remark that the sets $A_1$, $A_K$ of agents and sets $B_1$, $B_K$ of chores are to enforce the fixed earning, price equality and price regulation properties as mentioned in sketch of the reduction in Section~\ref{mainres3}, while the sets $A_k$ of agents and $B_k$ of chores for all $2 \leq k \leq K-1$ are to primarily enforce reverse ratio amplification property as mentioned in Section~\ref{mainres3}.  We now define the disutility matrix and the endowment matrix of the instance.

\paragraph{Disutility Matrix and the Disutility Graph.} The disutility graph for our instance will be a disjoint union of  complete bipartite graphs and the entries in our disutility matrix will be to enforce price-regulation and reverse ratio-amplification properties. We now describe the disutility matrix: We define only the disutility values in the matrix that are finite (the disutility of all agent-chore pair not mentioned should be assumed to be $\infty$).  For all $k \in [K]$, for each pair of chores $b^k_{2i-1}$ and $b^k_{2i}$, there are a set of agents that have finite disutility towards them and have infinite disutility towards all other chores; Additionally, these agents also happen to be either in $A_k$ or $A_{k-1}$ (indices are modulo $K$). We now outline these agents and their disutilities for every $k \in [K]$. To define the finite entries in the disutility matrix, we introduce the scalars $\tfrac{1}{n^{3c}} = \alpha_1, \alpha_2, \dots , \alpha_K$ such that each $\alpha_{i+1} = \tfrac{3}{2} \cdot \alpha_i$ for all $i \in [K-1]$. Before we define the disutility matrix, we make an obvious claim about the scalars $\alpha_i$ for all $i \in [K]$, which will be useful later,
\begin{claim}
	\label{alpha-technical}
	We have $ n^c \cdot \alpha_1 < \alpha_K \leq \tfrac{1}{n^c}$.
\end{claim}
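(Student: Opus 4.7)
The plan is to solve the geometric recursion in closed form and then collapse both halves of the claim into a single sandwich inequality on $(3/2)^{K-1}$. By a trivial induction on the recurrence $\alpha_{i+1} = (3/2)\alpha_i$, I get $\alpha_K = (3/2)^{K-1}\alpha_1 = (3/2)^{K-1}/n^{3c}$. Substituting this back into the statement, the lower bound $n^c\alpha_1 < \alpha_K$ becomes $(3/2)^{K-1} > n^c$, and the upper bound $\alpha_K \leq 1/n^c$ becomes $(3/2)^{K-1} \leq n^{2c}$. Hence the whole claim is equivalent to the single double inequality $n^c < (3/2)^{K-1} \leq n^{2c}$.

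To verify this sandwich, I will take base-$2$ logarithms: the target reads $c\log n < (K-1)\log_2(3/2) \leq 2c\log n$. Since $K = 2c\lceil\log n\rceil$, the middle quantity is (up to additive error from the ceiling and the $-1$) essentially $2c\log_2(3/2)\cdot\log n$. Because $\log_2(3/2) \in (1/2, 1)$, this quantity lies strictly inside the open-closed interval $(c\log n,\, 2c\log n]$, which gives exactly the two inequalities needed. The reason the definition chose the prefactor $2c$ in front of $\lceil\log n\rceil$ is precisely to land the exponent in this interval: a smaller prefactor would fail the lower bound, a larger one would fail the upper bound.

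The only thing demanding a bit of care---and the closest thing to an obstacle---is bookkeeping the additive slack introduced by the ceiling $\lceil\log n\rceil$ and the $-1$ in the exponent, so that the sandwich is preserved for all $n$ in the regime relevant to the reduction (the hardness reduction only needs $n$ above a small absolute constant, since for tiny $n$ the polymatrix instance is trivial). Instantiating $c=3$ and performing a direct numerical check at the threshold completes the argument.
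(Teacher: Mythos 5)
Your proposal is correct and follows essentially the same route as the paper: solve the recurrence to get $\alpha_K = (3/2)^{K-1}\alpha_1$, then sandwich $(3/2)^{K-1}$ between $n^c$ and $n^{2c}$ using the fact that $(3/2)^2$ lies strictly between $2$ and $4$ (equivalently, $\log_2(3/2)\in(1/2,1)$). If anything, you are more careful than the paper about the additive slack from the ceiling and the $-1$ in the exponent (the stated lower bound in fact fails for very small $n$, e.g.\ $n=2$), but this only matters outside the regime relevant to the reduction.
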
 
\begin{proof}
	We first show the lower bound. We have $\alpha_K = (\tfrac{3}{2})^{K-1} \cdot \alpha_1 = (\tfrac{3}{2})^{2c \lceil \log (n) \rceil-1} \cdot \alpha_1 > 2^{c \log(n)} \cdot \alpha_1 = n^c \cdot \alpha_1$. Similarly, for the upper bound, we have, $\alpha_K = (\tfrac{3}{2})^{K-1} \cdot \alpha_1 = (\tfrac{3}{2})^{2c \lceil \log (n) \rceil -1} \cdot \alpha_1 < 2^{2c \log(n)} \cdot \alpha_1 = n^{2c} \cdot \alpha_1 = \tfrac{1}{n^{c}}$ (as $\alpha_1 = \tfrac{1}{n^{3c}}$). 
\end{proof}
We now define the disutility matrix:
\begin{itemize}
	\item $k =1$: For each $i \in [n]$, we first define the disutilities of the agents that have finite disutility for chores $b^1_{2i-1}$ and $b^1_{2i}$. For each $i \in [n]$ we have,
							\begin{align*}
								d(a^K_{i',2i-1},b^{1}_{2i-1}) &= (1- \alpha_1) &\text{and}&    & d(a^K_{i',2i-1},b^{1}_{2i}) &= (1+ \alpha_1) &\text{for all $i' \in [2n]$}\\
							    d(a^K_{i',2i},b^{1}_{2i-1}) &= (1+ \alpha_1)   &\text{and}&    & d(a^K_{i',2i},b^{1}_{2i}) &= (1 - \alpha_1) &\text{for all $i' \in [2n]$}\\
							    d(a'_{2i-1},b^{1}_{2i-1}) &= (1- \alpha_1) &\text{and}&    & d(a'_{2i-1},b^{1}_{2i}) &= (1 +  \alpha_1)\\
							    d(a'_{2i},b^{1}_{2i-1}) &= (1+ \alpha_1)   &\text{and}&    & d(a'_{2i},b^{1}_{2i}) &= (1 - \alpha_1).
							\end{align*}   
				   		Therefore, for each $i \in [n]$, we have a component $\DG_i^1$ in the disutility graph which is a complete bipartite graph comprising of agents $\left\{a^K_{i',2i-1}\ |\ i' \in [2n] \right\} \bigcup \left\{ a^K_{i',2i}\ |\ i' \in [2n] \right\} \bigcup \left\{a'_{2i-1},a'_{2i} \right\}$ and chores $\left\{ b_{2i-1}^1, b_{2i}^1 \right\}$ (see Figure~\ref{disutility-graph} (left subfigure) for an illustration).
				   		
	\item $ 2 \leq k \leq K$: For each $i \in [n]$ we have,
				   \begin{align*}
					  d(a^{k-1}_{2i-1},b^{k}_{2i-1}) &= (1- \alpha_k) &\text{and}&    & d(a^{k-1}_{2i-1},b^{k}_{2i}) &= (1 +  \alpha_k)\\
					  d(a^{k-1}_{2i},b^{k}_{2i-1}) &= (1+ \alpha_k)   &\text{and}&    & d(a^{k-1}_{2i},b^{k}_{2i}) &= (1 - \alpha_k)\\
					  d(\A^k_{i},b^k_{2i-1}) &= (1- \alpha_k)         &\text{and}&    & d(\A^k_{i},b^k_{2i}) &= (1- \alpha_k) \enspace .\\
				  \end{align*}
				 Therefore, for every $k$ such that $2 \leq k \leq K$, for each $i \in [n]$,  we have a connected component $\DG_i^k$ in the disutility graph which is a complete bipartite graph comprising of agents $\left\{ a^{k-1}_{2i-1}, a^{k-1}_{2i}, \A^k_i \right\}$ and chores $\left\{ b_{2i-1}^k , b_{2i}^k \right\}$ (see Figure~\ref{disutility-graph} (right subfigure) for an illustration).   
\end{itemize}
It is clear that the disutility graph is a disjoint union of complete bipartite graphs, namely, the union of $\DG^k_i$ for all $i \in [n]$ and $k \in [K]$. Therefore,
\begin{center}
  $E(I)$ satisfies condition $SC_2$ of Theorem~\ref{sufficientcondition}.
\end{center}	

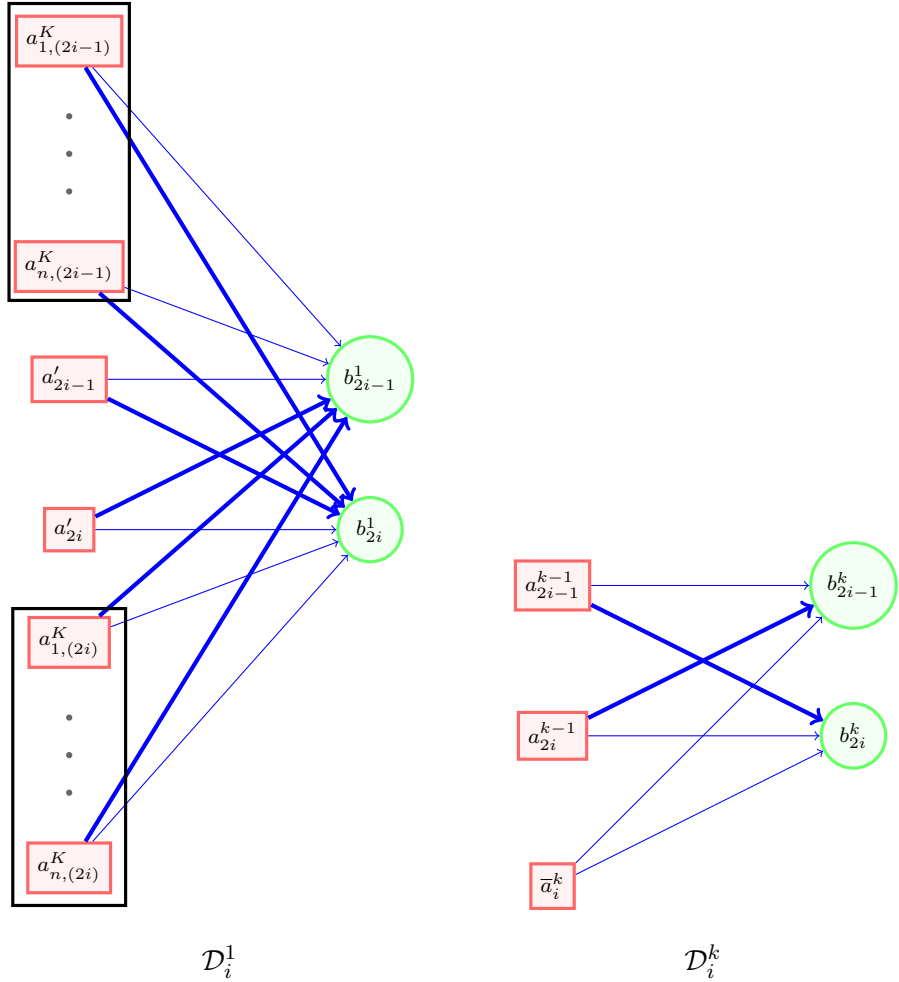
\begin{figure}[htbp]
	\centering
	\begin{subfigure}[b]{0.4 \textwidth}
\begin{tikzpicture}[
roundnode/.style={circle, draw=green!60, fill=green!5, very thick, minimum size=7mm},
squarednode/.style={rectangle, draw=red!60, fill=red!5, very thick, minimum size=5mm},
]
\node[squarednode]      (a11)      at (0,8+2)                       {$\scriptstyle{a^K_{1,(2i-1)}}$};
\node[squarednode]      (an1)      at (0,5+2)                        {$\scriptstyle{a^K_{n,(2i-1)}}$};
\node[squarednode]      (a12)      at (0,4-2)                       {$\scriptstyle{a^K_{1,(2i)}}$};
\node[squarednode]      (an2)      at (0,1-2)                        {$\scriptstyle{a^K_{n,(2i)}}$};
\node[squarednode]      (a'1)      at (0,5.5)                       {$\scriptstyle{a'_{2i-1}}$};
\node[squarednode]      (a'2)      at (0,3.5)                        {$\scriptstyle{a'_{2i}}$};


\node[roundnode]      (b1)      at (4,5.5)                       {$\scriptstyle{b_{2i-1}^1}$};
\node[roundnode]      (b2)      at (4,3.5)                        {$\scriptstyle{b_{2i}^1}$};

\draw[blue,->] (a11)--(b1);
\draw[blue,->] (an1)--(b1);
\draw[blue,->] (a12)--(b2);
\draw[blue,->] (an2)--(b2);

\draw[blue,->,ultra thick]  (a11)--(b2);
\draw[blue,->, ultra thick] (an1)--(b2);
\draw[blue,->, ultra thick] (a12)--(b1);
\draw[blue,->, ultra thick] (an2)--(b1);

\draw[blue,->] (a'1)--(b1);
\draw[blue,->] (a'2)--(b2);
\draw[blue,->,ultra thick]  (a'1)--(b2);
\draw[blue,->, ultra thick] (a'2)--(b1);

\draw[black, very thick] (-0.8,4.55+2) rectangle (0.8,8.5+2);
\draw[black, very thick] (-0.75,0.5-2) rectangle (0.75,4.45-2);

\filldraw[color=black!60, fill=black!5, very thick](0,7+2) circle (0.02);
\filldraw[color=black!60, fill=black!5, very thick](0,6.5+2) circle (0.02);
\filldraw[color=black!60, fill=black!5, very thick](0,6+2) circle (0.02);

\filldraw[color=black!60, fill=black!5, very thick](0,3-2) circle (0.02);
\filldraw[color=black!60, fill=black!5, very thick](0,2.5-2) circle (0.02);
\filldraw[color=black!60, fill=black!5, very thick](0,2-2) circle (0.02);

\node at (2,-2.25) {$\DG^1_i$};
\end{tikzpicture}
	\end{subfigure}		
	\begin{subfigure}[b]{0.4 \textwidth}
\begin{tikzpicture}
\node[rectangle, draw=red!60, fill=red!5, very thick, minimum size=5mm]      (a11)      at (0,5)                       {$\scriptstyle{a^{k-1}_{2i-1}}$};
\node[rectangle, draw=red!60, fill=red!5, very thick, minimum size=5mm]      (a22)      at (0,3)                        {$\scriptstyle{a^{k-1}_{2i}}$};
\node[rectangle, draw=red!60, fill=red!5, very thick, minimum size=5mm]      (a3)      at (0,1)                       {$\scriptstyle{\A^k_{i}}$};
\node[circle, draw=green!60, fill=green!5, very thick, minimum size=7mm]      (b1)      at (4,5)                       {$\scriptstyle{b_{2i-1}^k}$};
\node[circle, draw=green!60, fill=green!5, very thick, minimum size=7mm]      (b2)      at (4,3)                        {$\scriptstyle{b_{2i}^k}$};
\draw[blue,->] (a11)--(b1);
\draw[blue,->] (a22)--(b2);
\draw[blue,->] (a3)--(b1);
\draw[blue,->] (a3)--(b2);
\draw[blue,->,ultra thick]  (a11)--(b2);
\draw[blue,->, ultra thick] (a22)--(b1);
\node at (2,0) {$\DG^k_i$};
\end{tikzpicture}
	\end{subfigure}	
	\caption{Illustration of the disutility graph corresponding to the disutility matrix: On the left, we have the component $\DG_i^1$, and on the right we have $\DG_i^k$ when $2 \leq k \leq K$. The edges are colored in order to also encode the disutility matrix. The thin blue edges from agents to chores depict a disutility of $(1 - \alpha_1)$ for $\DG^1_i$ (left), and $(1 - \alpha_k)$ for $\DG^k_i$ when $2 \leq k \leq K$ (right). Similarly, the thick blue edges from agents to chores depict a disutility of $(1 + \alpha_1)$ for $\DG^1_i$ (left) and $(1 + \alpha_k)$ for $\DG^k_i$ (right).} 
	\label{disutility-graph}
\end{figure}

\paragraph{Endowment Matrix.} All agents in $A_k$ have endowments of chores only in $B_k$ for all $k \in [K]$. We only mention the non-zero agent-chore endowments (all agent-chore endowments, if not mentioned, are zero). 
\begin{itemize}
	\item $k=1$: For each $i \in [2n]$ we have,
	\begin{align*}
	w(a^1_i,b^1_i) &=n.
	\end{align*}
	Also, for each $i \in [n]$ we have 
	\begin{align*}
	w(a'_{2i-1},b^1_{2i-1})=w(a'_{2i-1},b^1_{2i})  &= \frac{1}{2} \cdot (1 - \alpha_K) \cdot (2n - \sum_{j \in [2n]} \M_{j,2i-1})\\
	w(a'_{2i},b^1_{2i-1})  = w(a'_{2i},b^1_{2i}) &= \frac{1}{2} \cdot  (1 - \alpha_K) \cdot (2n - \sum_{j \in [2n]} \M_{j,2i}).
	\end{align*}
	\item $2 \leq k \leq K-1$: For each $i \in [n]$, we have,
	\begin{align*}
	w(a^k_{2i-1},b^k_{2i-1}) &=n      &\text{and}&     & w(a^k_{2i},b^k_{2i})&=n\\
	w(\A^k_i, b^k_{2i-1}) &= \delta_k &\text{and}& & w(\A^k_i, b^k_{2i}) &= \delta_k\enspace ,
	\end{align*}
	where $\delta_k  = \tfrac{n \cdot \alpha_k}{2}$. The reason behind the exact choice of the value of $\delta_k$ will become explicit when we show that our instance satisfies the reverse ratio amplification property in Section~\ref{property-satisfication}. As of now, the reader is encouraged to think of it just as a small scalar.
	\item $k =K$: For each $i \in [n]$ we have,
	\begin{align*}
	w(a^K_{2i-1,j},b^K_{2i-1}) &=\M_{2i-1,j}  &\text{and}&   & w(a^K_{2i,j},b^K_{2i}) &=\M_{2i,j}& &\text{ for all $j \in [2n]$}\\
	w(\A^K_i, b^K_{2i-1}) &= \delta_K        &\text{and}&    & w(\A^K_i, b^K_{2i}) &= \delta_K\enspace ,
	\end{align*}
	where $\delta_K = \tfrac{n \cdot \alpha_K}{2}$ (the reason behind the choice of value will become explicit in Section~\ref{property-satisfication}).   	     
\end{itemize}
 
\paragraph{Strongly Connected Economy Graph.} We now show that the economy graph $G$ of our instance is strongly connected. For ease of explanation, we introduce the notion of \emph{economy graph of components} $W = ([d], E_W)$, where there is an edge from $i \in [d]$ to $j \in [d]$, if and only if, there is an agent $a \in \DG_i$ that has a positive endowment of some chore in $b \in \DG_j$. We now make a claim that strong connectivity of $W$ implies strong connectivity of the economy graph $G$.

\begin{claim}
	\label{economygraphcomponents}
	 If $W$ is strongly connected then $G$ is also strongly connected.
\end{claim}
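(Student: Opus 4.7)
The goal is to lift strong connectivity from $W$ (defined at the component level) to $G$ (defined at the agent level). The key leverage is condition $SC_2$: each disutility-graph component $\DG^k_i$ is a complete bipartite graph, so inside any component every agent has finite disutility for every chore of that component.

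First I would isolate a one-line translation lemma: a directed edge $a \to a'$ exists in $G$ if and only if $a$ has a positive endowment of some chore lying in the disutility component $\DG^k_i$ that contains $a'$. The forward direction is the definition of $G$, while the backward direction is exactly the complete-bipartite property ($a'$ has finite disutility for every chore in its own component).

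With the translation in hand, the main argument stitches a $W$-path into a $G$-path. Fix arbitrary agents $a_1, a_2$ and let $C_2$ denote the disutility component containing $a_2$. Pick any chore $c$ with $w(a_1, c) > 0$ and let $C_1$ be the component of $c$; by the translation lemma, $a_1$ has a $G$-edge to every agent of $C_1$. Strong connectivity of $W$ supplies a directed path $C_1 = D_0 \to D_1 \to \cdots \to D_\ell = C_2$. For each step $D_t \to D_{t+1}$ the defining property of $W$ yields a witness agent $\beta_t \in D_t$ with positive endowment of some chore in $D_{t+1}$, which by the translation lemma gives a $G$-edge from $\beta_t$ to every agent of $D_{t+1}$, in particular to $\beta_{t+1}$. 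Choosing $\beta_0 \in D_0$ so that $a_1 \to \beta_0$ is available and taking $\beta_\ell := a_2$, we obtain the desired directed $G$-path $a_1 \to \beta_0 \to \beta_1 \to \cdots \to \beta_{\ell-1} \to a_2$.

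The only subtlety is that the very first hop needs $a_1$ to own at least one chore. From the construction the agents $a^k_i$ (for $k \leq K-1$) and $\A^k_i$ (for $k \geq 2$) own positive amounts by inspection; the only potentially endowment-free agents are $a^K_{p,q}$ with $\M_{p,q}=0$ and $a'_i$ when $\sum_j \M_{j,i}$ saturates at $2n$. I would dispatch these either by adding an infinitesimal balancing endowment that preserves the five target properties of the reduction, or by simply omitting such idle agents from $E(I)$ (they cannot participate in any CE) so that they do not interfere with the reduction. I expect this boundary bookkeeping to be the only real obstacle; once it is handled, the component-path construction above is essentially the entire proof.
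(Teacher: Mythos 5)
Your proof follows the same route as the paper's: translate the $G$-edge condition via the biclique ($SC_2$) property---an agent $a$ with a positive endowment of some chore in the disutility component containing $a'$ has a $G$-edge to $a'$---and then stitch the strong-connectivity certificate of $W$ into a $G$-path through witness agents, one per hop. The component-path construction and the choice of witnesses are identical.

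The one place where you go beyond the paper is in flagging the implicit assumption that every agent owns at least one chore. The paper's proof tacitly relies on this in its opening step (``consider any chore $b$ that agent $a$ has a positive endowment of'') without justifying it, and you are right that the constructed instance $E(I)$ can in fact violate it: $a^K_{p,q}$ owns exactly $\M_{p,q}$ units of $b^K_p$ and nothing else, so it is endowment-free whenever $\M_{p,q}=0$, which the normalized polymatrix game does allow; a similar degeneracy arises for $a'_i$ when a column of $\M$ is all ones. An agent with zero endowment has no outgoing edges in the economy graph, so $G$ cannot be strongly connected regardless of $W$. Your two proposed remedies---perturbing $\M$ slightly into the open interval $(0,1)$, which preserves approximate Nash equilibria, or dropping inert zero-budget agents, who cannot participate in any CE---are both sound, and one of them is indeed needed to make the reduction airtight. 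The paper glosses over this edge case.
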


\begin{proof}
		Consider any two agents $a$ and $a'$. Let $a \in \DG_i$ and $a' \in \DG_j$.\footnote{Note that $j$ could also be equal to $i$.} Consider any chore $b$ that agent $a$ has a positive endowment of and let $\DG_{i'}$ be the component in the disutility graph that contains $b$.\footnote{Again, $i'$ could also be equal to $i$.} Then since $\DG_{i'}$ is a biclique in our instance, every agent in $\DG_{i'}$ has finite disutility for the chore $b$. Therefore, every agent in $\DG_{i'}$ is reachable from $a$ with an edge in the economy graph $G$. Now, since $W$ is strongly connected, there is a path $\ell_1 \rightarrow \ell_2 \rightarrow \dots \rightarrow \ell_k$ from $\ell_1 = i'$ to $\ell_k = j$. Let $a_{\ell_r}$ be the agent in the component $\DG_{\ell_r}$, that has a positive endowment of some chore in the component $\DG_{\ell_{r+1}}$ for all $r \in [k-1]$. Again, since each $\DG_{\ell_r}$ is a biclique, every agent in $\DG_{\ell_r}$ has a finite disutility for every chore in $\DG_{\ell_r}$. Thus, there is an edge in the economy graph $G$ from $a_{\ell_r}$ to every agent in $\DG_{\ell_r}$, in particular there is an edge between $a_{\ell_{r}}$ and $a_{\ell_{r+1}}$ in $G$. Thus, we have a path $a \rightarrow a_{\ell_1} \rightarrow \dots \rightarrow a_{\ell_{k-1}} \rightarrow a'$ in $G$. Therefore, if $W$ is strongly connected, then there is a path between any two agents in $G$, implying that $G$ is also strongly connected.     
\end{proof}

 From here on, we show that $W$ is strongly connected. Observe that the disutility graph consists of connected components $\DG^k_i$ for  $k \in [K]$ and $i \in [n]$. Also observe that every component $\DG^k_i$ in the disutility graph comprises of exactly two chores $b^k_{2i-1}$ and $b^k_{2i}$. Therefore, to show that there exists an edge from component $\DG^{k'}_{i'}$ to $\DG^k_i$ in $W$, it suffices to show that $\DG^{k'}_{i'}$ contains agents that own parts of chores $b^k_{2i-1}$ and $b^k_{2i}$. We now outline the edges in our exchange graph (see Figure~\ref{exchange-graph}):
 \begin{itemize}
 	\item For all $i \in [n]$, and $2 \leq k \leq K$ there is an edge in $W$ from $\DG^k_i$ to $\DG^{k-1}_i$: $\DG^k_i$ contains the agents $a^{k-1}_{2i-1}$ and $a^{k-1}_{2i}$ that own parts of chores $b^{k-1}_{2i-1}$ and $b^{k-1}_{2i}$ respectively (see Figure~\ref{exchange-graph}).
 	\item For all $i \in [n]$, there is an edge in $W$ from $\DG^1_i$ to $D^K_j$ for all $j \in [n]$: Consider any $j \in [n]$. Observe that the component $\DG^1_i$ contains the agents $a^K_{2j-1,2i}$ and $a^K_{2j,2i}$ and the agents $a^K_{2j-1,2i}$ and $a^K_{2j,2i}$ own parts of chores $b^K_{2j-1}$ and $b^K_{2j}$ respectively (see Figure~\ref{exchange-graph}).
 \end{itemize}

\begin{figure}
	\begin{center}
\begin{tikzpicture}[scale=1.25,
roundnode/.style={circle, draw=green!60, fill=green!5, very thick, minimum size=7mm},
squarednode/.style={rectangle, draw=red!60, fill=red!5, very thick, minimum size=5mm},
]
\node[roundnode]      (D11)      at (0,6)                       {$\scriptstyle{D^1_1}$};
\node[roundnode]      (Dk1)      at (1.5,6)                       {$\scriptstyle{D^{K}_1}$};
\node[roundnode]      (Dk'1)      at (3,6)                       {$\scriptstyle{D^{K-1}_1}$};
\node[roundnode]      (D21)      at (6,6)                       {$\scriptstyle{D^2_1}$};

\node[roundnode]      (D12)      at (0,3)                       {$\scriptstyle{D^1_i}$};
\node[roundnode]      (Dk2)      at (1.5,3)                       {$\scriptstyle{D^{K}_i}$};
\node[roundnode]      (Dk'2)      at (3,3)                       {$\scriptstyle{D^{K-1}_i}$};
\node[roundnode]      (D22)      at (6,3)                       {$\scriptstyle{D^2_i}$};

\node[roundnode]      (D13)      at (0,0)                       {$\scriptstyle{D^1_n}$};
\node[roundnode]      (Dk3)      at (1.5,0)                       {$\scriptstyle{D^{K}_n}$};
\node[roundnode]      (Dk'3)      at (3,0)                       {$\scriptstyle{D^{K-1}_n}$};
\node[roundnode]      (D23)      at (6,0)                       {$\scriptstyle{D^2_n}$};

\filldraw[color=black!60, fill=black!5, very thick](5,6) circle (0.02);
\filldraw[color=black!60, fill=black!5, very thick](4,6) circle (0.02);
\filldraw[color=black!60, fill=black!5, very thick](4.5,6) circle (0.02);

\filldraw[color=black!60, fill=black!5, very thick](5,3) circle (0.02);
\filldraw[color=black!60, fill=black!5, very thick](4,3) circle (0.02);
\filldraw[color=black!60, fill=black!5, very thick](4.5,3) circle (0.02);

\filldraw[color=black!60, fill=black!5, very thick](5,0) circle (0.02);
\filldraw[color=black!60, fill=black!5, very thick](4,0) circle (0.02);
\filldraw[color=black!60, fill=black!5, very thick](4.5,0) circle (0.02);

\filldraw[color=black!60, fill=black!5, very thick](0,4) circle (0.02);
\filldraw[color=black!60, fill=black!5, very thick](0,4.75) circle (0.02);

\filldraw[color=black!60, fill=black!5, very thick](2,4) circle (0.02);
\filldraw[color=black!60, fill=black!5, very thick](2,4.75) circle (0.02);

\filldraw[color=black!60, fill=black!5, very thick](6,4) circle (0.02);
\filldraw[color=black!60, fill=black!5, very thick](6,4.75) circle (0.02);

\filldraw[color=black!60, fill=black!5, very thick](0,4-3) circle (0.02);
\filldraw[color=black!60, fill=black!5, very thick](0,4.75-3) circle (0.02);

\filldraw[color=black!60, fill=black!5, very thick](2,4-3) circle (0.02);
\filldraw[color=black!60, fill=black!5, very thick](2,4.75-3) circle (0.02);

\filldraw[color=black!60, fill=black!5, very thick](6,4-3) circle (0.02);
\filldraw[color=black!60, fill=black!5, very thick](6,4.75-3) circle (0.02);

\draw[blue,->, thick] (D11)--(Dk1);
\draw[blue,->,thick] (Dk1)--(Dk'1);
\draw[blue,->,thick] (Dk'1)--(4,6);
\draw[blue,->, thick] (5,6)--(D21);

\draw[blue,->,thick] (D12)--(Dk2);
\draw[blue,->,thick] (Dk2)--(Dk'2);
\draw[blue,->,thick] (Dk'2)--(4,3);
\draw[blue,->,thick] (5,3)--(D22);

\draw[blue,->,thick] (D13)--(Dk3);
\draw[blue,->,thick] (Dk3)--(Dk'3);
\draw[blue,->,thick] (Dk'3)--(4,0);
\draw[blue,->,thick] (5,0)--(D23);


\draw[blue,->, thick] (D11)--(Dk2);
\draw[blue,->, thick] (D11)--(Dk3);
\draw[blue,->, thick] (D12)--(Dk1);
\draw[blue,->, thick] (D12)--(Dk3);
\draw[blue,->, thick] (D13)--(Dk2);
\draw[blue,->, thick] (D13)--(Dk1);

\path[blue,->,thick, out=120,in=45]    (D21) edge (D11);
\path[blue,->,thick, out=120,in=45]    (D22) edge (D12);
\path[blue,->,thick, out=120,in=45]    (D23) edge (D13);
\end{tikzpicture}
	\end{center}
	\caption{Illustration of the strong connectivity of the economy graph of components of our instance. Observe that all nodes are reachable from any $\DG^1_i$ ($i \in [n]$). Also, from any arbitrary $\DG^{k'}_{i'}$, the node $\DG^{1}_{i'}$ is reachable and since every node is reachable from $\DG^1_{i'}$, every node is also reachable from $\DG^{k'}_{i'}$ as well. Therefore, the economy graph of components is strongly connected.}
	\label{exchange-graph} 
\end{figure}
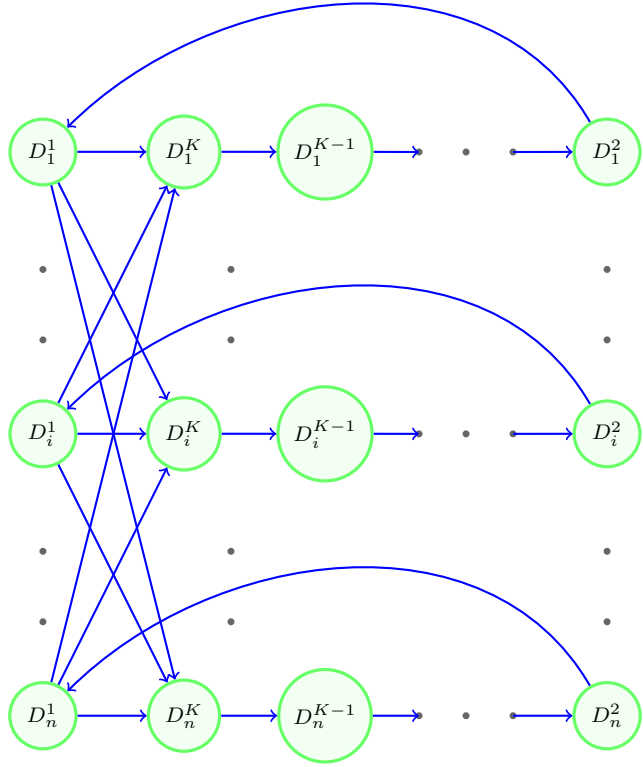

Observe that all nodes are reachable from any $\DG^1_i$ ($i \in [n]$). Also, from any arbitrary $\DG^{k'}_{i'}$, the node $\DG^{1}_{i'}$ is reachable and since every node is reachable from $\DG^1_{i'}$, every node is also reachable from $\DG^{k'}_{i'}$ as well. Therefore, the economy graph of components $W$, is strongly connected. Therefore, by Claim~\ref{economygraphcomponents} we have that,

\begin{center}
	$E(I)$ satisfies the condition $SC_1$ of Theorem~\ref{sufficientcondition}.
\end{center}

Thus, $E(I)$ satisfies conditions of Theorem~\ref{sufficientcondition} and therefore admits a CE and thereby a \eA CE for any $\epsilon\ge 0$ (see Definition \ref{def:ce}). Let $p(b^k_i)$ denote the price of chore $b^k_i$ at an \eA CE for \[\epsilon=\frac{\alpha_1}{200\cdot n} < \frac{1}{n^{3c+1}} \Rightarrow \epsilon=O(1/n^{10}).\] We now prove that our instance satisfies the required properties of \emph{pairwise equal endowments}, \emph{(approximate) price equality}, \emph{(approximate) fixed earning}, \emph{price regulation} and \emph{reverse ratio amplification}.

\subsection{$E(I)$ Satisfies All the Properties (Approximately)}
\label{property-satisfication}

\paragraph{Pairwise Equal Endowments.} Here, we show that for all $i \in [n]$ and for all $k \in [K]$ the total endowment of $b^k_{2i-1}$ equals the total endowment of $b^k_{2i}$ and the total endowments of each chore in $E(I)$ is $\mathcal{O}(n)$.
\begin{lemma}\label{total-endowments}
	For all $i \in [2n]$, the total endowments of chores $b^k_{2i-1}$ and $b^{k}_{2i}$ is 
	\begin{enumerate}
		\item $n + n \cdot (1- \alpha_K)$, if $k=1$. In particular, $a'_{2i-1}$ and $a'_{2i}$ \emph{together}, own $n \cdot (1- \alpha_K)$ units of chores $b^k_{2i-1}$ and $b^{k}_{2i}$ each.  
		\item $n + \delta_k$, if $2 \leq k \leq K$. 
	\end{enumerate}
\end{lemma}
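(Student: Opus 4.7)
The plan is a direct bookkeeping exercise: for each of the three cases $k=1$, $2 \le k \le K-1$, and $k=K$, I identify every agent with a strictly positive endowment of $b^k_{2i-1}$ (resp.\ $b^k_{2i}$), read the numerical value off the endowment matrix, and sum. The only nontrivial ingredient is the polymatrix normalization $\M_{j,2r-1}+\M_{j,2r}=1$ that is promised by Problem \ref{prob:ne}.

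\textbf{Case $2 \le k \le K-1$.} Inspecting the endowment matrix, the only agents with positive endowment of $b^k_{2i-1}$ are $a^k_{2i-1}$ and $\bar{a}^k_i$, contributing $n$ and $\delta_k$ respectively; the analysis for $b^k_{2i}$ is identical via $a^k_{2i}$ and $\bar{a}^k_i$. This immediately gives total endowment $n+\delta_k$ for both chores.

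\textbf{Case $k=K$.} The chore $b^K_{2i-1}$ is owned by the $2n$ agents $\{a^K_{2i-1,j}\}_{j\in[2n]}$ with endowments $\M_{2i-1,j}$, plus $\bar{a}^K_i$ with endowment $\delta_K$. Grouping the columns in consecutive pairs and using $\M_{2i-1,2r-1}+\M_{2i-1,2r}=1$ for every $r\in[n]$, the row sum telescopes to $\sum_{j\in[2n]}\M_{2i-1,j}=n$, yielding total endowment $n+\delta_K$. The symmetric computation on row $2i$ handles $b^K_{2i}$.

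\textbf{Case $k=1$ (the only slightly trickier one).} The chore $b^1_{2i-1}$ is owned by $a^1_{2i-1}$ (contributing $n$) together with both $a'_{2i-1}$ and $a'_{2i}$ (contributing $\tfrac{1}{2}(1-\alpha_K)(2n-\sum_j\M_{j,2i-1})$ and $\tfrac{1}{2}(1-\alpha_K)(2n-\sum_j\M_{j,2i})$ respectively). I would add the two $a'$ contributions and pull out the common factor $\tfrac{1}{2}(1-\alpha_K)$, leaving $\bigl[4n-\sum_{j\in[2n]}(\M_{j,2i-1}+\M_{j,2i})\bigr]$ in brackets. Now I apply the polymatrix normalization \emph{rowwise}: for every $j\in[2n]$, $\M_{j,2i-1}+\M_{j,2i}=1$, so the double sum equals $2n$, and the bracket collapses to $2n$. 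Hence the $a'$-contribution is exactly $n(1-\alpha_K)$, and the total endowment of $b^1_{2i-1}$ is $n+n(1-\alpha_K)$, matching the claim (including the parenthetical assertion that $a'_{2i-1}$ and $a'_{2i}$ together own $n(1-\alpha_K)$ units). The computation for $b^1_{2i}$ is identical, swapping the roles of the two columns.

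There is essentially no obstacle here; the only thing worth flagging in the writeup is that the two $a'$ contributions for a single chore $b^1_{2i-1}$ involve the two \emph{different} column sums $\sum_j\M_{j,2i-1}$ and $\sum_j\M_{j,2i}$, so one must be careful not to prematurely cancel and must invoke the normalization on each row $j$ separately before summing.
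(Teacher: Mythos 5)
Your proof is correct and follows essentially the same bookkeeping route as the paper: identify the owners of each chore from the endowment matrix, sum, and invoke the polymatrix normalization $\M_{j,2i-1}+\M_{j,2i}=1$ (rowwise for $k=1$, columnwise-paired for $k=K$). Nothing to add.
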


\begin{proof}
	When $k=1$, the only agents that have positive endowments of $b^1_{2i}$  are $a^1_{2i}$ (has an endowment of $n$ ) , $a'_{2i}$ (has an endowment of $\tfrac{1}{2} \cdot (1 - \alpha_K) \cdot (2n - \sum_{j \in [2n]} \M_{j,2i})$) and $a'_{2i-1}$(has an endowment of  $\tfrac{1}{2} \cdot (1 - \alpha_K) \cdot (2n - \sum_{j \in [2n]} \M_{j,2i-1})$). Therefore, the total endowment of $b^1_{2i}$ from the agents $a'_{2i}$ and $a'_{2i-1}$ is 
	\begin{align*}
	&= \frac{1}{2} \cdot (1 - \alpha_K) \cdot (2n - \sum_{j \in [2n]} \M_{j,2i}) + \frac{1}{2} \cdot (1 - \alpha_K) \cdot (2n - \sum_{j \in [2n]} \M_{j,2i-1})\\
	&= \frac{1}{2} \cdot (1 - \alpha_K) \cdot (4n - \sum_{j \in [2n]} (\M_{j,2i} + \M_{j,2i-1})).
	\end{align*} 
	Recall that $\M_{j,2i} + \M_{j,2i-1} = 1$. Therefore, the total endowment of $b^1_{2i}$ from the agents $a'_{2i}$ and $a'_{2i-1}$ is 
	\begin{align*}
	&= \frac{1}{2} \cdot (1 - \alpha_K) \cdot (4n - 2n)\\
	&= (1 - \alpha_K) \cdot n.
	\end{align*} 
	Therefore, the total endowment of chore $b^1_{2i}$ is $n + n \cdot (1- \alpha_K)$.
	A similar argument will show that the total endowment of chore $b^1_{2i-1}$ is also $n + n \cdot (1 - \alpha_K)$ and that agents $a'_{2i-1}$ and $a'_{2i}$ \emph{together}, own $n \cdot (1- \alpha_K)$ units of it. 
	
	When $2 \leq k \leq K-1$, the only agents that have positive endowments of $b^k_{2i}$ are $a^k_{2i}$ (has an endowment of $n$) and  $\A_i^k$ (has an endowment of $\delta_k$). Therefore, the total endowment is $n + \delta_k$. A similar argument will show that the total endowment of chore $b^k_{2i-1}$ is also $n + \delta_k$. 
	
	When $k =K$, the only agents that have positive endowments of $b^K_{2i}$ are the agents $a_{2i,j}^K$ (has an endowment of $\M_{2i,j}$) for all $j \in [2n]$ and the agent $\A^K_i$ (has an endowment of $\delta_K$). Therefore, the total endowment of chore $b^K_{2i}$ is
	\begin{align*}
	&=\sum_{j \in [2n]} \M_{2i,j} + \delta_K\\
	&=\sum_{j \in [n]} (\M_{2i,2j-1} + \M_{2i,2j}) + \delta_K\\
	&=\sum_{j \in [n]} 1 + \delta_K \\
	&=n + \delta_K
	\end{align*}
	A similar argument will show that the total endowment of chore $b^K_{2i-1}$ is also $n + \delta_K$.
\end{proof}

\paragraph{Price Equality.} Here we will show that the sum of prices of chores $b^k_{2i-1}$ and $b^k_{2i}$ are almost same for all $i\in [n]$ and $k\in[K]$. Let us define \[\pi^k_i = p(b^k_{2i-1}) + p(b^k_{2i}),\ \ \ \forall i \in [n], k \in [K]\enspace .\] 

Here on we will use the properties of approximate CE (Definition \ref{def:ce}). Recall that, at an \eA CE, we have $(i)$ complete allocation each chore, $(ii)$ every agent only consumes her minimum pain-per-buck chores, and $(iii)$ every agent $i$ earns total price of her endowment up to $(1 \pm \epsilon)$ factor. And that, for two quantities $x$ and $y$, by $x = (1\pm \epsilon) y$ we mean $(1- \epsilon) y\le x \le (1+ \epsilon) y$. In addition, now on \[ \mbox{by $x = (1\pm \epsilon)^d y$ we mean $(1- \epsilon)^d y\le x \le (1+ \epsilon)^d y$.}\]

\begin{lemma}
	\label{price-equality}
	For all $i,i' \in [n]$ and for all $k,k' \in [K]$, we have $\pi^{k'}_{i'}= (1\pm O(n\epsilon)) \pi^k_i$. 
\end{lemma}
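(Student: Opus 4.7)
The plan is to exploit the fact that the disutility graph decomposes into the bicliques $\DG^k_i$, so at any \eA CE the agents of $\DG^k_i$ consume only chores in $\DG^k_i$ (infinite disutilities lie outside the MPB set), while those chores are fully supplied by agents of $\DG^k_i$. Complete allocation then equates the sum of earnings inside $\DG^k_i$ to the total price-weighted supply of its two chores (read off from Lemma~\ref{total-endowments}), while the per-agent approximate earning condition equates each agent's earning to $(1\pm\epsilon)$ times her endowment value. Matching these two identities inside each biclique produces a local recursion relating $\pi^k_i$ to prices in the adjacent bicliques.

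First I would handle the bicliques $\DG^k_i$ for $2\le k\le K$. Their agents are $a^{k-1}_{2i-1}$, $a^{k-1}_{2i}$ and $\A^k_i$, with endowment values $np(b^{k-1}_{2i-1})$, $np(b^{k-1}_{2i})$ and $\delta_k\pi^k_i$ respectively; the chores $b^k_{2i-1}, b^k_{2i}$ each have total supply $n+\delta_k$. Summing the approximate-earning bounds across the three agents gives
\[
(1-\epsilon)\bigl(n\pi^{k-1}_i + \delta_k\pi^k_i\bigr) \le (n+\delta_k)\pi^k_i \le (1+\epsilon)\bigl(n\pi^{k-1}_i + \delta_k\pi^k_i\bigr),
\]
and since $\delta_k/n = \alpha_k/2 \ll 1$ by Claim~\ref{alpha-technical}, a routine rearrangement yields $\pi^k_i = (1\pm O(\epsilon))\pi^{k-1}_i$. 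Iterating this $K-1$ times (from $k=K$ down to $k=2$) delivers $\pi^K_i = (1\pm O(K\epsilon))\pi^1_i$ for every $i$.

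Next I would handle $\DG^1_i$, whose agents are $\{a^K_{i',2i-1},a^K_{i',2i}:i'\in[2n]\}\cup\{a'_{2i-1},a'_{2i}\}$. Summing their endowment values and using the normalization $\M_{j,2i-1}+\M_{j,2i}=1$ collapses the first two groups to $\sum_{j\in[n]}\pi^K_j$, while the $a'_{\cdot}$ contributions telescope to exactly $n(1-\alpha_K)\pi^1_i$. Combined with the total supply $n(2-\alpha_K)$ of each chore, the approximate-earning condition gives
\[
\pi^1_i = (1\pm O(\epsilon)) \cdot \frac{1}{n}\sum_{j\in[n]}\pi^K_j,
\]
with the \emph{same} right-hand side for every $i$. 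Hence $\pi^1_i = (1\pm O(\epsilon))\pi^1_{i'}$ for every pair $i,i'$.

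Composing the two estimates,
\[
\frac{\pi^{k'}_{i'}}{\pi^k_i} = \frac{\pi^{k'}_{i'}}{\pi^1_{i'}} \cdot \frac{\pi^1_{i'}}{\pi^1_i} \cdot \frac{\pi^1_i}{\pi^k_i} = (1\pm O(K\epsilon))(1\pm O(\epsilon))(1\pm O(K\epsilon)) = 1\pm O(K\epsilon),
\]
and since $K = O(\log n)\subseteq O(n)$ this delivers $\pi^{k'}_{i'} = (1\pm O(n\epsilon))\pi^k_i$ as claimed. I expect the main difficulty to lie in Step~1: the cross-term $\delta_k\pi^k_i$ appears on \emph{both} sides of the earnings equation because $\A^k_i$'s endowment lies inside the same biclique as the chores being consumed, and one must check that this self-reference does not amplify when the recursion is iterated $K-1$ times. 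Claim~\ref{alpha-technical}'s bound $\delta_k/n \ll 1$ is precisely what controls this error, keeping each step within a $(1\pm O(\epsilon))$ factor.
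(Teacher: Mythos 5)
Your proposal is correct and follows essentially the same route as the paper: apply demand-equals-supply to each biclique $\DG^k_i$, derive the local recursion $\pi^{k}_i = (1\pm O(\epsilon))\pi^{k-1}_i$ for $2\le k\le K$, close the cycle via $\DG^1_i$ using $\M_{j,2i-1}+\M_{j,2i}=1$ to collapse the $a^K$-endowments to $\sum_j\pi^K_j$ (which is $i$-independent), and compose. Your handling of the cross-term $\delta_k\pi^k_i$ via $\delta_k/n = \alpha_k/2 \ll 1$ matches the paper's rearrangement as well.
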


\begin{proof}
Since $\epsilon < 1/n$ it suffices to show that $\pi^{k'}_{i'} = (1+\pm \epsilon))^{O(n)} \pi^k_i$. We show this in two steps: First we show that for each $i\in [n]$, we have $\pi^(k+1)_i =  (1\pm \epsilon)^2 \pi^k_i$ for all $k < K$, implying that $\pi^{k'}_{i}= (1\pm \epsilon))^{O(n)} \pi^k_i$ for all $k,k'\in [K]$. 

Then we show that $(1-\epsilon)^2 (\sum_{j \in [n]} \pi^K_j) \le n\pi^1_i = (1\pm \epsilon)^2 (\sum_{j \in [n]} \pi^K_j)$ for all $i \in [n]$. This together with the above imply that $\pi^1_{i'} \le (1\pm \epsilon))^{O(n)} \pi^1_i$ for all $i,j \in [n]$. Putting these together proves the lemma. 
 
We first show $\pi^(k+1)_i\le (1\pm \epsilon)^2 \pi^k_i$ for all $i\le n, k < K$, wlog let $i=1$. Observe that the agents $a^{k}_{1}$, $a^{k}_{2}$, $\A^{(k+1)}_1$ and chores $b^{(k+1)}_{1}$, $b^{(k+1)}_{2}$ form the connected component $\DG^{(k+1)}_1$ in the disutility graph. That means these three agents earn all their money by consuming only these two chores, and no one else consumes these chores. The total money supply of these two chores is $(n+\delta_{(k+1)})\pi^{(k+1)}_1$. 

As per Definition \ref{def:ce}, at \eA CE, an agents earn as much as the total price of their endowments up to $(1\pm \epsilon)$ factor. Now since $\A^{(k+1)}_1$ owns $\delta_{(k+1)}$ units of both $b^{(k+1)}_{1}$ and $b^{(k+1)}_{2}$ only, her total cost is $\delta_{(k+1)} \pi^{(k+1)}_1$. Total price of chores owned by agents $a^{k}_{1}$ and $a^{k}_{2}$ is $n\pi^k_1$. At equilibrium demand should be equals supply, implying that,
\[
\begin{array}{ll}
& (1-\epsilon) (\delta_{(k+1)} \pi^{(k+1)}_1 + n\pi^k_1) \le (n+\delta_{(k+1)})\pi^{(k+1)}_1
\le (1+\epsilon) (\delta_{(k+1)} \pi^{(k+1)}_1 + n\pi^k_1) \\
\Rightarrow & (1-\epsilon) n \pi^k_1 - \epsilon \delta_{(k+1)} \pi^{(k+1)}_1 \le n \pi^{(k+1)}_1 \le (1+\epsilon) n \pi^k_1 + \epsilon \delta_{(k+1)} \pi^{(k+1)}_1 \\
\Rightarrow & \frac{(1-\epsilon)}{(1+\epsilon\delta_{(k+1)}/n)}  \pi^k_1 \le  \pi^{(k+1)}_1 \le \frac{(1+\epsilon)}{(1-\epsilon\delta_{(k+1)}/n)}  \pi^k_1 \\
\Rightarrow & (1-\epsilon)^2 \pi^k_1 \le  \pi^{(k+1)}_1 \le (1+\epsilon)^2 \pi^k_1 
\end{array}
\]

where the last implication follows from the fact that for $\epsilon<1/n$ we have $(1-\epsilon) < \frac{1}{(1+\epsilon\delta_{(k+1)}/n)}$ and $(1+\epsilon) < \frac{1}{(1-\epsilon\delta_{(k+1)}/n)}$.

 We now show that $n\pi^1_i = (1\pm)\epsilon)^2\sum_{j \in [n]} \pi^K_j$: This time, we look into the connected component $\DG^1_i$ of the disutility graph. We can claim that the agents $\{a^K_{j,2i-1}\ |\ j \in [2n]\}$, $\{a^K_{j,2i}\ |\ j \in [2n]\} $ and the agents $a'_{2i-1}$ and $a'_{2i}$ earn all of their money at an approximate CE from chores $b^1_{2i-1}$ and $b^1_{2i}$, and these are the only agents who consume these two chores. Total money supply of these two chores is $(n+n(1-\alpha_K))\pi^1_i$ (by Lemma \ref{total-endowments})
 
Observe that the total endowment of agents $a'_{2i-1}$ and $a'_{2i}$ is $n(1-\alpha_K)$ units of chores $b^1_{2i-1}$ and $b^1_{2i}$ only, and hence together they must earn $n(1-\alpha_K)\pi^1_i$ up to $(1\pm \epsilon)$. For agents $\{a^K_{j,2i-1}\ |\ j \in [2n]\}$ and  $\{a^K_{j,2i}\ |\ j \in [2n]\}$, recall that each $a^{K}_{\ell,\ell'}$ owns $\M_{\ell,\ell'}$ units of $b^K_{\ell}$. Therefore, together they need to earn the following amount up to $(1\pm \epsilon)$ factor. 
\[
\begin{array}{l}
\sum_{j \in [2n]} \M_{j,2i} \cdot p(b^K_j) +  \sum_{j \in [2n]} \M_{j,2i-1} \cdot p(b^K_j)\\
 = \sum_{j \in [2n]}(\M_{j,2i} + \M_{j,2i-1}) \cdot p(b^K_j)\\
 = \sum_{j \in [2n]} p(b^K_j) \ \ \text{(using $\M_{j,2i-1} + \M_{j,2i} =1$)}\\
 =\sum_{j \in [n]} \pi^K_j 
\end{array}
\]

Again, equating supply with demand we get,
\[
\begin{array}{ll}
& (1-\epsilon)(n(1-\alpha_K)\pi^1_i + \sum_{j \in [n]} \pi^K_j) \le (n+n(1-\alpha_K))\pi^1_i \le (1+\epsilon)(n(1-\alpha_K)\pi^1_i + \sum_{j \in [n]} \pi^K_j)\\
\Rightarrow & 
(1-\epsilon)\sum_{j \in [n]} \pi^K_j - \epsilon (n(1-\alpha_K)\pi^1_i \le n \pi^1_i \le (1+\epsilon)\sum_{j \in [n]} \pi^K_j + \epsilon (n(1-\alpha_K)\pi^1_i \\
\Rightarrow & 
\frac{(1-\epsilon)}{(1+\epsilon(1-\alpha_K))} \sum_{j \in [n]} \pi^K_j \le n \pi^1_i \le \frac{(1+\epsilon)}{(1-\epsilon(1-\alpha_K))} \sum_{j \in [n]} \pi^K_j\\
\Rightarrow & 
(1-\epsilon)^2 \sum_{j \in [n]} \pi^K_j < n \pi^1_i < (1+\epsilon)^2 \sum_{j \in [n]} \pi^K_j
\end{array}
\]

where the last implication follows from the fact that for $\epsilon,\alpha_K < 1/n$ we have $(1-\epsilon) < \frac{1}{(1+\epsilon(1-\alpha_K))}$ and $(1+\epsilon) > \frac{1}{(1-\epsilon(1-\alpha_K))}$.
\end{proof}

\paragraph{(Approximately) Fixed Earning.} Here, we show that in every  CE, the earning of each agent $a'_{i}$ for $i \in [2n]$ is fixed up to $(1\pm O(n)\epsilon)$ factor. Note that (approximate) CE prices are scale invariant.

\begin{lemma}
	\label{fixed-earning}
	Let $\min_{i\in[n], k\le K} \pi^i_k=2$. Then for all $i \in [2n]$, we have that the earning of agent $a'_i$ is $(1\pm \epsilon O(n)) (1- \alpha_K) \cdot (2n - \sum_{j \in [2n]} \M_{j,i})$. 
\end{lemma}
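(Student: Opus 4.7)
The plan is to argue directly from the endowment structure of agent $a'_i$, combined with the price equality lemma just proved and the approximate earning condition of Definition~\ref{def:ce}. Fix $i \in [2n]$ and let $\ell = \lceil i/2 \rceil$, so that $a'_i$ owns a positive endowment only in the two chores $b^1_{2\ell-1}$ and $b^1_{2\ell}$.

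First, I would read off from the endowment definition that agent $a'_i$ owns exactly $\tfrac{1}{2}(1-\alpha_K)(2n - \sum_{j \in [2n]} \M_{j,i})$ units of each of $b^1_{2\ell-1}$ and $b^1_{2\ell}$ (regardless of whether $i$ is even or odd). Hence the total price of her endowment is
\[
\tfrac{1}{2}(1-\alpha_K)\bigl(2n - \textstyle\sum_{j \in [2n]} \M_{j,i}\bigr) \cdot \bigl(p(b^1_{2\ell-1}) + p(b^1_{2\ell})\bigr) \;=\; \tfrac{1}{2}(1-\alpha_K)\bigl(2n - \textstyle\sum_{j \in [2n]} \M_{j,i}\bigr) \cdot \pi^1_\ell.
\]

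Next, I would invoke Lemma~\ref{price-equality}, which says that $\pi^1_\ell = (1 \pm O(n\epsilon)) \pi^{k'}_{i'}$ for every $k' \in [K]$, $i' \in [n]$. Since by assumption $\min_{i' \in [n], k' \le K} \pi^{k'}_{i'} = 2$, this gives $\pi^1_\ell = (1 \pm O(n\epsilon)) \cdot 2$. Substituting, the total endowment price of $a'_i$ equals $(1 \pm O(n\epsilon))(1-\alpha_K)(2n - \sum_{j \in [2n]} \M_{j,i})$.

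Finally, the third condition of Definition~\ref{def:ce} asserts that at a $(1-\epsilon)$-approximate CE every agent earns her endowment's total price up to a multiplicative factor of $(1 \pm \epsilon)$. Multiplying by this extra $(1 \pm \epsilon)$ factor and absorbing it into $(1 \pm O(n\epsilon))$ (which is valid since $\epsilon = O(1/n^{10})$ so $(1\pm \epsilon)(1\pm O(n\epsilon)) = (1\pm O(n\epsilon))$) yields the claimed earning of $(1 \pm O(n\epsilon))(1-\alpha_K)(2n - \sum_{j \in [2n]} \M_{j,i})$ for agent $a'_i$. There is no serious obstacle here: the lemma is a direct corollary of the endowment structure and the two previously established facts (price equality and the approximate budget condition); the only bookkeeping is to make sure the approximation factors compose into $(1 \pm O(n\epsilon))$.
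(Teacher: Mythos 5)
Your proof is correct and follows essentially the same route as the paper's: read off $a'_i$'s endowment, apply the approximate-earning condition to get a $(1\pm\epsilon)$ factor times the endowment price, and then use Lemma~\ref{price-equality} together with the normalization $\min_{i',k'}\pi^{k'}_{i'}=2$ to replace $\pi^1_\ell$ by $2(1\pm O(n\epsilon))$. The bookkeeping of absorbing the $(1\pm\epsilon)$ factor into $(1\pm O(n\epsilon))$ is exactly what the paper does as well.
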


\begin{proof}
	Let $i = 2i'$. Then agent $a'_{2i'}$ owns $\tfrac{1}{2} \cdot (1- \alpha_K) \cdot (2n - \sum_{j \in [2n]} \M_{j,2i'})$ units of both chores $b^1_{2i'-1}$ and $b^1_{2i'}$. Since the earning of any agent at an \eA CE equals the sum of prices of chores she owns up to $(1\pm \epsilon)$ factor, we have that the earning of agent $2i'$ is  
	\begin{align*}
	&=\frac{(1\pm \epsilon)}{2} \cdot(1- \alpha_K) \cdot (2n - \sum_{j \in [2n]} \M_{j,2i'}) \cdot (p(b^1_{2i'-1}) + p(b^1_{2i'}))\\ 
	&=\frac{(1\pm \epsilon)}{2} \cdot(1- \alpha_K) \cdot (2n - \sum_{j \in [2n]} \M_{j,2i'}) \cdot \pi^1_{i'} \\
	&=\frac{(1\pm \epsilon)}{2} \cdot (1- \alpha_K) \cdot (2n - \sum_{j \in [2n]} \M_{j,2i'}) \cdot 2 (1\pm O(n)\epsilon) &\text{(by Lemma~\ref{price-equality})}.\\
	&= (1\pm O(n)\epsilon) (1- \alpha_K) \cdot (2n - \sum_{j \in [2n]} \M_{j,2i'}).
	\end{align*}
	Similarly, when $i = 2i'-1$ we can show that the total earning of agent $a'_{2i'-1}$ is $(1- \alpha_K) \cdot (2n - \sum_{j \in [2n]} \M_{j,2i'-1})$. Thus the total earning of any agent $a'_i$ in a  CE is $(1\pm O(n)\epsilon) (1- \alpha_K) \cdot (2n - \sum_{j \in [2n]} \M_{j,i})$. 
\end{proof}

\paragraph{Price Regulation.} Here, we show that for all $k \in [K]$ and $i \in [2n]$ the ratio of the prices of chores $b^k_{2i-1}$ and $b^k_{2i}$ is bounded.

\begin{lemma}
	\label{price-regulation}
	For all $k \in [K]$ and for all $i \in [n]$,  we have $\tfrac{1 - \alpha_k}{1 + \alpha_k} \leq \tfrac{p(b^k_{2i-1})}{p(b^k_{2i})} \leq \tfrac{1 + \alpha_k}{1 - \alpha_k}$.
\end{lemma}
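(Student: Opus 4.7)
The plan is to argue by contradiction from the complete-allocation condition of Definition~\ref{def:ce}, using only the fact that the disutility graph $G(I)$ is a disjoint union of bicliques and that every agent in the biclique $\DG^k_i$ has disutilities drawn from the very restricted set $\{(1-\alpha_k),(1+\alpha_k)\}$ for the two chores $b^k_{2i-1}, b^k_{2i}$. Note that the approximation $\epsilon$ only enters via the earning constraint; complete allocation and the MPB condition $x_{ij}>0 \Rightarrow d_{ij}/p_j=\MPB_i$ are still exact, so the proof does not need to worry about $\epsilon$-slack at all.

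By symmetry it suffices to prove the upper bound $p(b^k_{2i-1})/p(b^k_{2i}) \le (1+\alpha_k)/(1-\alpha_k)$; the lower bound follows by swapping the roles of $2i-1$ and $2i$. Suppose for contradiction that $p(b^k_{2i-1})/p(b^k_{2i}) > (1+\alpha_k)/(1-\alpha_k)$. The first step is to identify exactly which agents can possibly consume $b^k_{2i}$: since $G(I)$ is a disjoint union of bicliques and $b^k_{2i}$ lies in $\DG^k_i$, only the agents of $\DG^k_i$ have finite disutility for $b^k_{2i}$, and because $\DG^k_i$ is a complete bipartite graph, every such agent also has finite disutility for $b^k_{2i-1}$.

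The second step is to inspect the disutility definitions and verify that \emph{every} agent $a$ in $\DG^k_i$ satisfies $d(a,b^k_{2i-1}) \le 1+\alpha_k$ and $d(a,b^k_{2i}) \ge 1-\alpha_k$ (the three cases being $(1-\alpha_k,1+\alpha_k)$, $(1+\alpha_k,1-\alpha_k)$, and, when $k\ge 2$, the $\A^k_i$-agent with $(1-\alpha_k,1-\alpha_k)$). Combining these bounds with the contradiction hypothesis yields
\[
\frac{d(a,b^k_{2i-1})}{p(b^k_{2i-1})} \;\le\; \frac{1+\alpha_k}{1-\alpha_k}\cdot \frac{p(b^k_{2i})}{p(b^k_{2i-1})}\cdot \frac{d(a,b^k_{2i})}{p(b^k_{2i})} \;<\; \frac{d(a,b^k_{2i})}{p(b^k_{2i})},
\]
so $b^k_{2i}$ is strictly worse than $b^k_{2i-1}$ in pain-per-buck for $a$, and in particular cannot attain $\MPB_a$.

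The final step is to close the contradiction: by the ``consume best chores'' condition, no agent $a \in \DG^k_i$ can have $x_{a,b^k_{2i}}>0$, and by step one no agent outside $\DG^k_i$ can either, so $\sum_{i'} x_{i',b^k_{2i}}=0$, violating complete allocation for chore $b^k_{2i}$. The only mildly delicate point is to be careful with the handling of $a'_{2i}$ and $a'_{2i-1}$ in the $k=1$ case (they share the same biclique $\DG^1_i$ as the $a^K_{\cdot,\cdot}$ agents), and with $\A^k_i$ in the $k\ge 2$ case; but inspection of the disutility definitions shows all these agents fit the generic bound $d(a,b^k_{2i-1})\le 1+\alpha_k$ and $d(a,b^k_{2i})\ge 1-\alpha_k$ used above, so no case analysis beyond this uniform bound is needed.
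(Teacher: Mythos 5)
Your proposal is correct and takes essentially the same approach as the paper: you argue by contradiction that an extreme price ratio would make one of the two chores in the biclique $\DG^k_i$ strictly dominated in pain-per-buck for every agent who could conceivably consume it, so it would go unallocated, contradicting the complete-allocation condition. The only difference is cosmetic — you prove the upper bound and invoke symmetry, while the paper proves the lower bound and invokes symmetry — and you spell out the uniform disutility bound more explicitly, which the paper treats as evident.
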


\begin{proof}
	We prove the lower bound ( $\tfrac{1 - \alpha_k}{1 + \alpha_k} \leq \tfrac{p(b^k_{2i-1})}{p(b^k_{2i})}$) by contradiction. The proof for the upper bound is symmetric. So assume that $\tfrac{1 - \alpha_k}{1 + \alpha_k} > \tfrac{p(b^k_{2i-1})}{p(b^k_{2i})}$. In that case, none of the agents in the connected component $\DG^k_i$ will do any part of chore $b^k_{2i-1}$ (as the disutility to price ratio of $b^k_{2i-1}$ will be strictly more than that of $b^k_{2i}$). Since all the other agents have a disutility of $\infty$ for $b^k_{2i-1}$, it will remain unallocated. Therefore, the current prices for chores are not the prices corresponding to a  CE, which is a contradiction. 
\end{proof}

\paragraph{Reverse Ratio Amplification.} Lastly, we show the property that when the price of chore $b^k_i$ is at a limit, then the price of chore $b^{k+1}_i$ is at the opposite limit, i.e., when $p(b^k_i) = 1 + \alpha_k$, then we have $p(b^{k+1}_i) = 1 - \alpha_{k+1}$ and similarly when $p(b^k_i) = 1 - \alpha_k$, then we have $p(b^{k+1}_i) = 1 + \alpha_{k+1}$.  
 
\begin{lemma}
	\label{rev-ratio-amplification1}
	 For all $1 \leq k < K$ and $i \in [n]$, we have that,
	 \begin{enumerate}
	 	\item if $\tfrac{p(b^{k}_{2i-1})}{p(b^k_{2i})} = \tfrac{1 - \alpha_k}{1+\alpha_k}$, then $\tfrac{p(b^{k+1}_{2i-1})}{p(b^{k+1}_{2i})} = \tfrac{1 + \alpha_{k+1}}{1-\alpha_{k+1}}$, and  
	 	\item if $\tfrac{p(b^{k}_{2i-1})}{p(b^k_{2i})} = \tfrac{1 + \alpha_k}{1 - \alpha_k}$, then $\tfrac{p(b^{k+1}_{2i-1})}{p(b^{k+1}_{2i})} = \tfrac{1 - \alpha_{k+1}}{1 + \alpha_{k+1}}$.
	 \end{enumerate}
\end{lemma}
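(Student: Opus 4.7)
The plan is to prove each of the two implications by contradiction, producing a violation of Lemma~\ref{price-regulation} at level $k+1$. I will focus on case~(2); case~(1) will follow by the symmetric argument under the swap $2i-1 \leftrightarrow 2i$.

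Assume $p(b^k_{2i-1})/p(b^k_{2i}) = (1+\alpha_k)/(1-\alpha_k)$. Lemma~\ref{price-equality} together with the normalization $\min \pi^k_i = 2$ of Lemma~\ref{fixed-earning} will pin down $p(b^k_{2i-1}) = (1+\alpha_k)(1\pm O(n\epsilon))$ and $p(b^k_{2i}) = (1-\alpha_k)(1\pm O(n\epsilon))$. I will then suppose for contradiction that $p(b^{k+1}_{2i-1})/p(b^{k+1}_{2i}) \neq (1-\alpha_{k+1})/(1+\alpha_{k+1})$; by Lemma~\ref{price-regulation} this strengthens to the strict inequality $p(b^{k+1}_{2i-1})/p(b^{k+1}_{2i}) > (1-\alpha_{k+1})/(1+\alpha_{k+1})$. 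Since condition~2 of Definition~\ref{def:ce} (``consume best chores'') is exact even under approximate CE, this forces $b^{k+1}_{2i-1}$ to be the \emph{unique} MPB chore of agent $a^{k}_{2i-1}$ (who has disutility $1-\alpha_{k+1}$ for $b^{k+1}_{2i-1}$ and $1+\alpha_{k+1}$ for $b^{k+1}_{2i}$), so her entire earning of $\approx n(1+\alpha_k)$ flows only into $b^{k+1}_{2i-1}$.

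The main step will be a two-pass demand--supply computation on the connected component containing $b^{k+1}_{2i-1},b^{k+1}_{2i}$, whose only other paying agents are $a^k_{2i}$ (income $\approx n(1-\alpha_k)$) and $\A^{k+1}_i$ (income $\approx 2\delta_{k+1} = n\alpha_{k+1}$) and whose supply is $n+\delta_{k+1}$ each by Lemma~\ref{total-endowments}. In pass~1, bounding pessimistically (routing all of $\A^{k+1}_i$'s income to $b^{k+1}_{2i}$) will yield $p(b^{k+1}_{2i}) \lesssim (n(1-\alpha_k)+2\delta_{k+1})/(n+\delta_{k+1})$ and $p(b^{k+1}_{2i-1}) \gtrsim n(1+\alpha_k)/(n+\delta_{k+1})$; since $2n\alpha_k > n\alpha_{k+1} = (3/2)n\alpha_k$, these bounds imply $p(b^{k+1}_{2i-1}) > p(b^{k+1}_{2i})$ strictly. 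Because $\A^{k+1}_i$ has equal disutility for both chores, her unique MPB is now $b^{k+1}_{2i-1}$, so all of her income must flow there. Pass~2 will then sharpen to
\[
\frac{p(b^{k+1}_{2i-1})}{p(b^{k+1}_{2i})} \,\gtrsim\, \frac{n(1+\alpha_k)+n\alpha_{k+1}}{n(1-\alpha_k)} \,=\, \frac{1+\tfrac{5}{2}\alpha_k}{1-\alpha_k},
\]
which strictly exceeds $(1+\alpha_{k+1})/(1-\alpha_{k+1}) = (1+\tfrac{3}{2}\alpha_k)/(1-\tfrac{3}{2}\alpha_k)$, contradicting the upper bound of Lemma~\ref{price-regulation}.

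The hard part will be propagating the $(1\pm O(n\epsilon))$ slack from Lemma~\ref{price-equality} and the approximate earning condition of Definition~\ref{def:ce} cleanly through both passes, so that the asymptotic gap $\Theta(\alpha_k)$ between $(1+\tfrac{5}{2}\alpha_k)/(1-\alpha_k)$ and $(1+\tfrac{3}{2}\alpha_k)/(1-\tfrac{3}{2}\alpha_k)$ strictly dominates the accumulated error in every step. This will be guaranteed by the choice $\epsilon < \alpha_1/(200 n)$ combined with $\alpha_k \ge \alpha_1$, which yields $n\epsilon \le \alpha_k/200$. A minor bookkeeping point: when $k=1$ the owner agents are $a^1_{2i-1}, a^1_{2i} \in A_1$ rather than the $a^k_{2i-1}, a^k_{2i}$ of $A_k$ for $k\ge 2$, but by construction they satisfy the same ownership pattern (owning $n$ units of $b^k_*$) and the same disutility pattern $(1\mp\alpha_{k+1})$ over $b^{k+1}_*$, so the argument applies uniformly for all $1 \le k < K$.
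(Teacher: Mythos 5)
Your proposal is correct and follows essentially the same route as the paper's proof: a money-conservation (demand--supply) argument on the component $\DG^{k+1}_i$, using the fact that the owner of the expensive level-$k$ chore has income $\approx n(1+\alpha_k)$ while the two level-$(k+1)$ chores each carry only $n+\delta_{k+1}$ units, together with the auxiliary agent $\A^{k+1}_i$ and the price-regulation bound, to force a contradiction unless the ratio sits at the opposite limit. The only (cosmetic) differences are that you argue part~(2) instead of part~(1) and replace the paper's case split on which of $p(b^{k+1}_{2i-1}), p(b^{k+1}_{2i})$ is larger by a two-pass bootstrap that first pins down which is larger and then routes $\A^{k+1}_i$'s income accordingly; both versions rely on the same quantitative facts $\alpha_{k+1}=\tfrac{3}{2}\alpha_k$, $\delta_{k+1}=\tfrac{n\alpha_{k+1}}{2}$, and $\epsilon\ll\alpha_1/n$.
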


\begin{proof}
	We just show the proof of part 1. The proof for part 2 is symmetric. Let us assume that $\tfrac{p(b^{k}_{2i-1})}{p(b^k_{2i})} = \tfrac{1 - \alpha_k}{1+\alpha_k}$, and that $p(b^{k}_{2i-1})+p(b^k_{2i})=2$ (using scale invariance of CE prices). Then, $p(b^k_{2i-1}) = 1 - \alpha_k$ and $p(b^k_{2i}) = 1+ \alpha_k$. From the proof of Lemma \ref{price-equality}, we have that $p(b^{k+1}_{2i-1})+p(b^{k+1}_{2i})=2(1\pm \epsilon)^2$ and by Lemma \ref{price-regulation} we have $\tfrac{1 - \alpha_{k+1}}{1+\alpha_{k+1}} \le \tfrac{p(b^{k+1}_{2i-1})}{p(b^{k+1}_{2i})} \le \tfrac{1 + \alpha_{k+1}}{1-\alpha_{k+1}}$. Therefore, $(1-\alpha_{k+1})(1-\epsilon)^2 \le p(b^{k+1}_{2i-1}), p(b^{k+1}_{2i}) \le (1+\alpha_{k+1})(1+\epsilon)^2$.

	Observe that agent $a^k_{2i}$ owns $n$ units of chore $b^k_{2i}$ and has finite disutility only for the chores $b^{k+1}_{2i-1}$ and $b^{k+1}_{2i}$ ($a^k_{2i}$ belongs in the connected component $\DG^{k+1}_i$). Since at a \eA CE, the total earning of agent $a^k_{2i}$ equals the sum of prices of chores she owns up to $(1\pm \epsilon)$ factor, we have that $a^k_{2i}$ earns $(1\pm \epsilon) n \cdot p(b^k_{2i}) = (1\pm \epsilon) n(1 + \alpha_k)$ amount of money from chores $b^{k+1}_{2i-1}$ and $b^{k+1}_{2i}$. We claim that it suffices to show that $a^k_{2i}$ earns some of her money from the chore $b^{k+1}_{2i-1}$: This would immediately imply that $\tfrac{d(a^k_{2i}, b^{k+1}_{2i-1})}{p(b^{k+1}_{2i-1})} \leq \tfrac{d(a^k_{2i}, b^{k+1}_{2i})}{p(b^{k+1}_{2i})}$, further implying that $\tfrac{p(b^{k+1}_{2i-1})}{p(b^{k+1}_{2i})} \geq \tfrac{1 + \alpha_{k+1}}{1- \alpha_{k+1}}$. However, by Lemma~\ref{price-regulation}, we have that $\tfrac{p(b^{k+1}_{2i-1})}{p(b^{k+1}_{2i})} \leq \tfrac{1 + \alpha_{k+1}}{1- \alpha_{k+1}}$, and thus we can conclude that $\tfrac{p(b^{k+1}_{2i-1})}{p(b^{k+1}_{2i})} = \tfrac{1 + \alpha_{k+1}}{1- \alpha_{k+1}}$.

    For the rest of the proof, we show that $a^k_{2i}$ earns positive amount of money from chore $b^{k+1}_{2i-1}$. We prove this by contradiction. So let us assume that $a^k_{2i}$ earns all her money of at least $(1- \epsilon) n \cdot (1+\alpha_k)$, only from chore $b^{k+1}_{2i}$. We will now show that the current prices of chores are not the prices corresponding to a  CE by distinguishing between two cases. Recall that $\delta_{k}=\frac{n}{2}\alpha_k$ for all $k\in [K]$.
	\begin{itemize}
		\item \textbf{$p(b^{k+1}_{2i}) > p(b^{k+1}_{2i-1})$:} 
		Observe that in this case, agent $\A^{k+1}_i$ will also earn all of her money from $b^{k+1}_{2i}$ only (as the disutility to price ratio of $b^{k+1}_{2i}$ is strictly smaller than that of $b^{k+1}_{2i-1}$). Therefore, we have that the total money agents $a^k_{2i}$ and $\A^{k+1}_i$ earn from $b^{k+1}_{2i}$ is,
		\begin{align*}
		  &\ge (1-\epsilon) (\pi^{k+1}_i \delta_{k+1} + n \cdot (1 + \alpha_{k}))\\
		  &\ge (1-\epsilon) (2(1-2\epsilon) \frac{n}{2}\alpha_{k+1} + n \cdot (1 + \frac{2}{3} \cdot \alpha_{k+1})\\ & \quad \quad (\text{as $\delta_{k+1}=\frac{n}{2}\alpha_{k+1}$, $\alpha_{k+1} = \frac{3}{2} \cdot \alpha_k$, and $\pi^{k+1}_i \ge 2(1-2\epsilon)$})\\
		  &=  (1-\epsilon) \cdot (n(1 + \alpha_{k+1}) + \tfrac{2n}{3} \alpha_{k+1} - 2\epsilon n \alpha_{k+1})\\
		  &= \frac{(1-\epsilon)}{(1+3\epsilon)} (1+3\epsilon)[(n+\tfrac{n}{2}\alpha_{k+1}) (1+\alpha_{k+1}) - \tfrac{n}{2}\alpha_{k+1} \tfrac{2n}{3} \alpha_{k+1} - 2\epsilon n \alpha_{k+1}]\\
		  &= (1-\tfrac{4\epsilon}{(1+3\epsilon)}) (1+3\epsilon)[(n+\delta_{k+1}) (1+\alpha_{k+1}) - \tfrac{n}{2}\alpha_{k+1} + \tfrac{2n}{3} \alpha_{k+1} - 2\epsilon n \alpha_{k+1}]\\
		  &> (1+3\epsilon)(n+\delta_{k+1})(1+\alpha_{k+1}) + (1+3\epsilon)n\alpha_{k+1} \left(\tfrac{2}{3} - \tfrac{1+\alpha_{k+1}}{2} -2\epsilon\right) - 4\epsilon (4n)  \\
		  &\quad \quad (\text{as $[(n+\delta_{k+1}) (1+\alpha_{k+1}) - \tfrac{n}{2}\alpha_{k+1} + \tfrac{2n}{3} \alpha_{k+1} - 2\epsilon n \alpha_{k+1}]< 4n$}) \\
		  &>(1+\epsilon)^2(n+\delta_{k+1})(1+\alpha_{k+1}) + n (\alpha_{k+1} (2/3 - 1/2 - 1/n) - 16\epsilon) \\
		  &>(1+\epsilon)^2(n+\delta_{k+1})(1+\alpha_{k+1}) \ \ (\text{as $\epsilon < \tfrac{\alpha_{k+1}}{192}$})
		\end{align*} 
		However, total (money supply) price of $b^{k+1}_{2i}$ is at most $(n+\delta_{k+1})p(b^{k+1}_{2i}) \le (n+\delta_{k+1}) (1+\epsilon)^2 (1+\alpha_{k+1})$ since it's total endowment is $n + \delta_{k+1}$ by Lemma~\ref{total-endowments}. This contradicts demand equals supply for chore $b^{k+1}_{2i}$.
		\item \textbf{$p(b^{k+1}_{2i})\le p(b^{k+1}_{2i-1})$:} In this case, $p(b^{k+1}_{2i})\le (1+\epsilon)^2$. Since the total endowment of $b^{k+1}_{2i}$ is $n+ \delta_{k+1}$ by Lemma~\ref{total-endowments}, the total (money supply) price of chore $b^{k+1}_{2i}$ is $(1+\epsilon)^2 (n+ \delta_{k+1})$. Next we show that this is strictly less than the demand from agent $a^k_{2i}$. Agent $a^k_{2i}$ owns $n$ units of $b^k_{2i}$ and hence earns at least $(1-\epsilon) n (1+\alpha_k)$ at \eA CE. 

		\begin{align*}
		(1-\epsilon) n (1+\alpha_k) & =  \frac{(1-\epsilon)}{(1+3\epsilon)} (1+3\epsilon)n (1+\frac{2}{3}\alpha_{k+1}) \\ 
		&= (1-\tfrac{4\epsilon}{(1+3\epsilon)} (1+3\epsilon) (n +\frac{n}{2}\alpha_{k+1} +\frac{n}{6}\alpha_{k+1})\\
		&>(1+3\epsilon)(n+\delta_{k+1}) + \frac{n}{6}\alpha_{k+1}) - 4\epsilon (2n)\\
		&>(1+\epsilon)^2(n+\delta_{k+1}) \ \ (\text{as $\epsilon < \frac{\alpha_{k+1}}{48}$}) \qedhere
		\end{align*}
	\end{itemize}
\end{proof}

Since $K$ is even, a repeated application of Lemma~\ref{rev-ratio-amplification1} will yield the following lemma, 

\begin{lemma}
	\label{rev-ratio-amplification2}
	We have,
	\begin{enumerate}
		\item if $\tfrac{p(b^{1}_{2i-1})}{p(b^1_{2i})} = \tfrac{1 - \alpha_1}{1+\alpha_1}$, then $\tfrac{p(b^{K}_{2i-1})}{p(b^{K}_{2i})} = \tfrac{1 + \alpha_{K}}{1-\alpha_{K}}$, and  
		\item if $\tfrac{p(b^{1}_{2i-1})}{p(b^1_{2i})} = \tfrac{1 + \alpha_1}{1 - \alpha_1}$, then $\tfrac{p(b^{K}_{2i-1})}{p(b^{K}_{2i})} = \tfrac{1 - \alpha_{K}}{1 + \alpha_{K}}$.
	\end{enumerate}	
\end{lemma}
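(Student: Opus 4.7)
The plan is to prove the lemma by a simple induction on $k \in \{1, 2, \ldots, K\}$ using Lemma~\ref{rev-ratio-amplification1} as the inductive step, tracking the parity of $k$ to establish which ratio regime we are in at level $k$. Specifically, I will prove by induction the following strengthened statement (for part 1; part 2 is symmetric): if $\tfrac{p(b^{1}_{2i-1})}{p(b^1_{2i})} = \tfrac{1 - \alpha_1}{1+\alpha_1}$, then for every $k \in [K]$,
\[
\frac{p(b^{k}_{2i-1})}{p(b^{k}_{2i})} = \begin{cases} \tfrac{1 - \alpha_k}{1+\alpha_k} & \text{if } k \text{ is odd}, \\ \tfrac{1 + \alpha_k}{1-\alpha_k} & \text{if } k \text{ is even}. \end{cases}
\]

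The base case $k=1$ is exactly the hypothesis. For the inductive step, suppose the claim holds at level $k < K$. If $k$ is odd, then by the induction hypothesis we are in the hypothesis of part 1 of Lemma~\ref{rev-ratio-amplification1} at level $k$, which immediately gives $\tfrac{p(b^{k+1}_{2i-1})}{p(b^{k+1}_{2i})} = \tfrac{1 + \alpha_{k+1}}{1-\alpha_{k+1}}$, matching the even-$k+1$ case. Symmetrically, if $k$ is even, we are in the hypothesis of part 2 of Lemma~\ref{rev-ratio-amplification1}, yielding the odd-$k+1$ case. This completes the induction.

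Since $K = 2c\lceil \log n \rceil$ is even by construction, specializing the inductive statement to $k = K$ gives $\tfrac{p(b^{K}_{2i-1})}{p(b^{K}_{2i})} = \tfrac{1+\alpha_K}{1-\alpha_K}$, proving part 1 of the lemma. For part 2, the same induction, now seeded with $\tfrac{p(b^{1}_{2i-1})}{p(b^{1}_{2i})} = \tfrac{1+\alpha_1}{1-\alpha_1}$, flips the parity of which ratio appears at each level and hence yields $\tfrac{p(b^{K}_{2i-1})}{p(b^{K}_{2i})} = \tfrac{1-\alpha_K}{1+\alpha_K}$ at the (even) final level $K$.

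There is no substantive obstacle here beyond bookkeeping of parity; the entire content lies in Lemma~\ref{rev-ratio-amplification1}, which has already established that each step of the amplification chain flips the extreme ratio from one endpoint to the other while scaling $\alpha_k$ up by a factor of $3/2$. The choice of $K$ to be even is precisely what ensures that after the $K-1$ (odd many) flips, the extreme ratio ends up on the opposite side of where it started, which is the behavior required by the rest of the PPAD-hardness reduction.
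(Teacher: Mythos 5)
Your proof is correct and takes essentially the same approach as the paper, which simply asserts that the lemma follows by "repeated application of Lemma~\ref{rev-ratio-amplification1}" using the fact that $K$ is even; your induction with parity bookkeeping just makes that one-line argument explicit.
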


Now that we have shown that our instance satisfies the desired properties of {pairwise equal endowments}, {(approximate) fixed earning}, {(approximate) price equality}, {price regulation} and {reverse ratio amplification}, we are ready to outline how to determine the equilibrium strategy vector $x$ for the instance $I$ of the polymatrix game, given the \eA CE prices of the instance $E(I)$ of chore division:
\begin{align*}
\forall i\in[n],\ \ &  x_{2i-1} &= \frac{2p(b^K_{2i-1}) - (1-\alpha_K)\pi^K_i}{2 \cdot \pi^K_i \cdot \alpha_K}
 &  x_{2i} &= \frac{2p(b^K_{2i}) - (1-\alpha_K)\pi^K_i}{2 \cdot \pi^K_i \cdot \alpha_K}
\end{align*}

It is clear that given the prices of chores at a  CE,  the equilibrium strategy  vector can be obtained in linear time. We will now show that $x$ is the desired equilibrium strategy vector for instance $I$ of the polymatrix game. 

\begin{lemma}
	\label{polymatrix-reduction}
	$x = \langle x_1,x_2, \dots, x_{2n} \rangle $ is an equilibrium strategy vector for the polymatrix game instance $I$.
\end{lemma}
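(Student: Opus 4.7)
The plan is to carry out the case analysis sketched in Section~\ref{mainres3} using Lemmas~\ref{total-endowments}--\ref{rev-ratio-amplification2}. By symmetry I would concentrate on the implication $x^T \M_{*,2i} > x^T \M_{*,2i-1}+\tfrac{1}{n} \Rightarrow x_{2i-1}=0$; the other condition is identical up to swapping indices. Validity of $x$ as a strategy profile follows directly from price equality and price regulation applied at level $K$. Normalizing prices so that $\min_{k,j} \pi^k_j=2$ (as in Lemma~\ref{fixed-earning}), Lemma~\ref{price-regulation} gives $(1-\alpha_K)\pi^K_i/2 \le p(b^K_j) \le (1+\alpha_K)\pi^K_i/2$ for $j\in\{2i-1,2i\}$, which places each $x_j \in [0,1]$ up to an additive error of $O(\epsilon/\alpha_K)\ll 1/n$, while a direct computation yields $x_{2i-1}+x_{2i} = (2\pi^K_i-2(1-\alpha_K)\pi^K_i)/(2\pi^K_i\alpha_K)=1$ exactly.

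\textbf{Reading off endowment prices.} Inverting the definition of $x_j$ gives $p(b^K_j)=\tfrac{1}{2}\bigl((1-\alpha_K)+2\alpha_K x_j\bigr)\pi^K_{\lceil j/2\rceil}$. The only agents with finite disutility for the pair $\{b^1_{2i-1},b^1_{2i}\}$ split into two groups: $A(b^1_{2i-1})=\{a^K_{j,2i-1}\colon j\in[2n]\}\cup\{a'_{2i-1}\}$ who prefer $b^1_{2i-1}$, and the symmetric $A(b^1_{2i})$. Using the endowments $w(a^K_{j,2i-1},b^K_j)=\M_{j,2i-1}$, the fixed-earning bound of Lemma~\ref{fixed-earning} for $a'_{2i-1}$, and Lemma~\ref{price-equality} to collapse the various $\pi^K_\cdot$ to a common constant, I would derive
\[
P(b^1_{2i-1}) = (1\pm O(n\epsilon))\bigl(2n(1-\alpha_K)+2\alpha_K\,x^T\M_{*,2i-1}\bigr),
\]
and the symmetric formula for $P(b^1_{2i})$, using crucially that $\M_{j,2i-1}+\M_{j,2i}=1$ so that the fixed-earning terms combine cleanly. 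Under the hypothesis this yields $P(b^1_{2i})-P(b^1_{2i-1}) \ge (1-o(1))\cdot 2\alpha_K/n$.

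\textbf{Pinning the level-$1$ ratio and amplifying.} Price regulation at level $1$ confines $r := p(b^1_{2i-1})/p(b^1_{2i}) \in [\tfrac{1-\alpha_1}{1+\alpha_1},\tfrac{1+\alpha_1}{1-\alpha_1}]$, and I would argue that $r$ must sit at the upper extreme. If $r$ lies strictly in the interior, MPB forces each group to consume only its preferred chore; market clearing then forces $r = P(b^1_{2i-1})/P(b^1_{2i})$, which the previous paragraph bounds above by $1-\alpha_K/n^2+O(\epsilon)$, and this is strictly below $(1-\alpha_1)/(1+\alpha_1)$ because $\alpha_K>n^c\alpha_1$ with $c=3$ by Claim~\ref{alpha-technical}, a contradiction. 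If $r$ equals the lower extreme $\tfrac{1-\alpha_1}{1+\alpha_1}$, then $A(b^1_{2i-1})$ becomes indifferent while $A(b^1_{2i})$ still strictly prefers $b^1_{2i}$; letting $\eta$ denote the fraction of $P(b^1_{2i-1})$ earned from $b^1_{2i-1}$, market clearing on $b^1_{2i-1}$ gives $\eta = (1\pm O(\epsilon))\cdot E\,p(b^1_{2i-1})/P(b^1_{2i-1})$ with $E=n(2-\alpha_K)$ by Lemma~\ref{total-endowments}. The numerator is $\approx 2n(1-\alpha_1)$, while the denominator is bounded above by $2n-n\alpha_K+O(\epsilon\cdot n)$, forcing $\eta>1$ whenever $\alpha_K>2\alpha_1+O(\epsilon)$, impossible. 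Hence $r=\tfrac{1+\alpha_1}{1-\alpha_1}$, and Lemma~\ref{rev-ratio-amplification2}(2) amplifies this to $p(b^K_{2i-1})/p(b^K_{2i})=\tfrac{1-\alpha_K}{1+\alpha_K}$, i.e., $p(b^K_{2i-1})=(1-\alpha_K)\pi^K_i/2$, and substitution into the formula for $x_{2i-1}$ yields $0$ up to the error $O(\epsilon/\alpha_K)\ll 1/n$.

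\textbf{Main obstacle.} The delicate part is the multiplicative $(1\pm O(n\epsilon))$ slack that propagates through every market-clearing identity, price-equality normalization, and amplification step. The construction's nested scales $\epsilon<\alpha_1/(200n)$ and $\alpha_K>n^c\alpha_1$ with $c=3$ are calibrated precisely so that the $2\alpha_K/n$ separation between $P(b^1_{2i})$ and $P(b^1_{2i-1})$ dominates the accumulated slack, and so that the $\eta>1$ violation in the lower-extreme analysis exceeds the $\epsilon$-tolerance built into Definition~\ref{def:ce}. Formalizing these quantitative margins simultaneously in the market-clearing, MPB, and amplification arguments is the main bookkeeping burden; the logical skeleton is exactly the case analysis sketched in Section~\ref{mainres3}.
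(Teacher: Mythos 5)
Your overall route matches the paper's: extract $x$ from the level-$K$ prices, check validity via price equality and price regulation, compute the endowment money $P(b^1_{2i-1})$ and $P(b^1_{2i})$ of the two groups of agents attached to $\{b^1_{2i-1},b^1_{2i}\}$ (using fixed earning and $\M_{j,2i-1}+\M_{j,2i}=1$), pin the level-$1$ price ratio at the extreme $\tfrac{1+\alpha_1}{1-\alpha_1}$, and then invoke Lemma~\ref{rev-ratio-amplification2}. However, your treatment of the sub-case where $r=p(b^1_{2i-1})/p(b^1_{2i})$ sits at the \emph{lower} extreme $\tfrac{1-\alpha_1}{1+\alpha_1}$ does not work. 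You clear the market for $b^1_{2i-1}$ and claim $\eta=E\,p(b^1_{2i-1})/P(b^1_{2i-1})>1$, taking the numerator to be ``$\approx 2n(1-\alpha_1)$.'' But the numerator is exactly $E\,p(b^1_{2i-1})=(2n-n\alpha_K)(1-\alpha_1)$, and the $-n\alpha_K$ term cannot be dropped: it is the same order as the $-n\alpha_K$ appearing in your upper bound $P(b^1_{2i-1})\le 2n-n\alpha_K+O(n^2\epsilon)$, and it dwarfs $2n\alpha_1$ since $\alpha_K>n^3\alpha_1$. Concretely, numerator minus denominator is at most $-2n\alpha_1+n\alpha_K\alpha_1+O(n^2\epsilon)<0$, so $\eta<1$ and no contradiction arises; the under-demanded chore $b^1_{2i-1}$ simply absorbs less than all of its owners' required earnings, which is perfectly consistent with an approximate CE.

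The correct contradiction in this sub-case (and the one the paper uses, uniformly covering your ``interior'' case as well) comes from clearing the \emph{over-demanded} chore $b^1_{2i}$: whenever $r<\tfrac{1+\alpha_1}{1-\alpha_1}$, every agent in $A(b^1_{2i})$ strictly prefers $b^1_{2i}$, so all of $P(b^1_{2i})\ge 2n(1-\alpha_K)+2\alpha_K x^T\M_{*,2i}-O(n^2\epsilon)$ must be earned from $b^1_{2i}$ alone. The hypothesis $x^T\M_{*,2i}>x^T\M_{*,2i-1}+\tfrac1n$ together with $x^T\M_{*,2i}+x^T\M_{*,2i-1}=n$ gives $x^T\M_{*,2i}>\tfrac n2+\tfrac1{2n}$, hence $P(b^1_{2i})\ge E+\tfrac{\alpha_K}{n}-O(n^2\epsilon)$ with $E=2n-n\alpha_K$, forcing $p(b^1_{2i})\ge(1-\epsilon)P(b^1_{2i})/E>1+\alpha_1$ because $\alpha_K/n\gg n\alpha_1\cdot E/n$ and $\epsilon\ll\alpha_1/n$. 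This violates price regulation under the normalization $\pi^1_i=2$, and is exactly the inequality chain culminating in $(n+n(1-\alpha_K))p(b^1_{2i})>(n+n(1-\alpha_K))(1+\alpha_1)$ in the paper's proof. With that substitution your argument goes through; the rest of your bookkeeping (the $(1\pm O(n\epsilon))$ slack, the role of $c=3$ in Claim~\ref{alpha-technical}, and the final substitution $p(b^K_{2i-1})=(1-\alpha_K)\pi^K_i/2\Rightarrow x_{2i-1}=0$) is consistent with the paper.
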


\begin{proof}
First, observe that since $\pi^K_i =p(b^K_{2i-1})+p(b^K_{2i})$ and our instance satisfies price regulation (Lemma~\ref{price-regulation}) we have that for all $i \in [n]$, $(1 -\alpha_K)\frac{\pi^K_i}{2} \leq p(b^K_{2i-1}), p(b^K_{2i}) \leq (1 + \alpha_K)\frac{\pi^K_i}{2}$. Therefore, for all $i \in [2n]$   $x_i \geq 0$ . Furthermore, for all $i \in [n]$ we have $x_{2i-1} +x_{2i} =  \tfrac{2p(b^K_{2i-1})  + p(b^K_{2i}) - 2(1-\alpha_K)\pi^K_i}{2 \cdot \pi^K_i\cdot \alpha_K} = \tfrac{2\pi^K_i \alpha_K}{2\pi^K_i \alpha_K} = 1$.

Now we will show that if $x^T \cdot \M_{*,{2i}} > x^T \cdot \M_{*,2i-1} + \tfrac{1}{n}$, then $ x_{2i-1} = 0$. The proof for the other symmetric condition will be similar. So let us assume that  $x^T \cdot \M_{*,{2i}} > x^T \cdot \M_{*,2i-1} + \tfrac{1}{n}$. Note that if $p(b^K_{2i-1})=(1 -\alpha_K)\frac{\pi^K_i}{2}$, then $x_{2i-1}=0$. Hence, by Lemma \ref{rev-ratio-amplification2} it suffices to show that $\frac{p(b^1_{2i-1})}{p(b^1_{2i})} = \frac{1+\alpha_1}{1-\alpha_1}$. Wlog, assume that $\pi^1_i = p(b^1_{2i-1})+p(b^1_{2i})=2$, implying $(1-\alpha_1)\le p(b^1_{2i-1}), p(b^1_{2i})\le (1+\alpha_1)$. Next we will show that if $\frac{p(b^1_{2i-1})}{p(b^1_{2i})} < \frac{1+\alpha_1}{1-\alpha_1}$, then $p(b^1_{2i})> (1+\alpha_1)$, which is a contradiction. This will complete the proof.

By Lemma \ref{price-equality} we have $2(1-O(n)\epsilon) \le \pi^K_i \le 2(1+O(n)\epsilon)$ for all $i\in [n]$. This implies, $p(b^K_{i})= (1\pm O(n)\epsilon) (2x_i\alpha_K +(1-\alpha_K))$ for all $i\in [2n]$.
Observe that the agents that have a disutility of $1- \alpha_1$ towards chore $b^1_{2i}$ are $\left\{a^K_{j,2i}\ |\ j \in [2n]\right\} \cup a'_{2i}$. Observe that at the given prices the sum of prices of chores owned by these agents is, 
\begin{align*}
&=\sum_{j \in [2n]} \M_{j,2i} \cdot p(b^K_j) + (1- \alpha_K) \cdot (2n - \sum_{j \in [2n]} \M_{j,2i}) \quad \quad (\text{by Lemma~\ref{fixed-earning}})\\
&\ge\sum_{j \in [2n]} \M_{j,2i} \cdot (1-O(n)\epsilon)(2\alpha_K \cdot x_{j} + (1-\alpha_K)) +  (1- \alpha_K) \cdot (2n - \sum_{j \in [2n]} \M_{j,2i}) \quad \quad \text{(substituting $p(b^K_j)$)}\\ 
&=(1-O(n)\epsilon) \sum_{j \in [2n]} 2\alpha_K \cdot x_{j} \cdot \M_{j,2i} + (1-\alpha_K) (1-O(n)\epsilon)\cdot\sum_{j \in [2n]}  \M_{j,2i} +  (1- \alpha_K) \cdot (2n - \sum_{j \in [2n]} \M_{j,2i})\\          
&\ge (1-O(n)\epsilon) 2\alpha_K x^T \cdot \M_{*,2i} +  2n \cdot (1-\alpha_K) - \epsilon O(n^2).               
\end{align*}
Similarly, the total earning of the agents that have a disutility of $1 - \alpha_1$ towards $b^1_{2i-1}$ is {\em at most} $(1+O(n)\epsilon)2 \alpha_K x^T \cdot \M_{*,2i-1} +  2n \cdot (1-\alpha_K)+\epsilon O(n^2)$. Observe that the agents with disutility $1-\alpha_1$ towards $b^1_{2i}$ can earn all of their money only from the chores $b^1_{2i}$ or $b^1_{2i-1}$ (as these are the only chores towards which they have finite disutility). Also note that both chores $b^1_{2i-1}$ and $b^1_{2i}$ have the same total endowment which is $n + n \cdot (1 - \alpha_K)$ by Lemma~\ref{total-endowments}(part 1). Now if, $\frac{p(b^1_{2i-1})}{p(b^1_{2i})} < \frac{1+\alpha_1}{1-\alpha_1}$, then agents with disutility $1-\alpha_1$ towards $b^1_{2i}$ earn all of their money entirely from $b^1_{2i}$ (they earn nothing from $b^1_{2i-1}$). Since at \eA CE they earn at least $(1-\epsilon)$ times price of their endowments, we have 
\begin{equation}\label{eq:le-1}
p(b^1_{2i}) \geq (1-\epsilon) \tfrac{2(1-O(n)\epsilon) \alpha_K x^T \cdot \M_{*,2i} +  2n \cdot (1-\alpha_K) - \epsilon O(n^2)}{n + n\cdot (1- \alpha_K)}
\end{equation}

Agents with disutility $1-\alpha_1$ towards $b^1_{2i-1}$ may earn from both $b^1_{2i}$ and $b^1_{2i-1}$. Since they earn at most $(1+\epsilon)$ times price of their endowments, we have
\begin{equation}\label{eq:le-2}
p(b^1_{2i-1}) \leq (1+\epsilon)\tfrac{2(1+O(n)\epsilon)\alpha_K x^T \cdot \M_{*,2i-1} +  2n \cdot (1-\alpha_K)+\epsilon O(n^2)}{n + n\cdot (1- \alpha_K)}
\end{equation}

Using the above two inequalities together with $x^T \cdot \M_{*,{2i}} > x^T \cdot \M_{*,2i-1} + \tfrac{1}{n}$ we will next show that $p(b^1_{2i}) > (1+\alpha_1)$. 
{\small
\begin{align*}
(n + n\cdot (1- \alpha_K)) p(b^1_{2i})& \ge  (1-\epsilon) \left[2(1-O(n)\epsilon) \alpha_K x^T \cdot \M_{*,2i} +  2n \cdot (1-\alpha_K) - \epsilon O(n^2)\right]\\
& > (1-\epsilon) \left[2(1-O(n)\epsilon) \alpha_K x^T \cdot \M_{*,2i-1} + \tfrac{2(1-O(n)\epsilon)\alpha_K}{n} + 2n (1-\alpha_K) -\epsilon O(n^2)\right]\\
& > (1-\epsilon)(1-O(n)\epsilon)^2\left[(1-\epsilon)(n+n(1-\alpha_K))p(b^1_{2i-1}) -2n(1-\alpha_K) - \epsilon O(n^2) + \frac{2\alpha_K}{n} \right. \\
& \hspace{4cm} \left. +2n(1-\alpha_K) -\frac{\epsilon O(n^2)}{(1-O(n)\epsilon)^2}\right] \quad \quad\\
& \hspace{6.5cm}(\text{Using \eqref{eq:le-2} and $\tfrac{1}{(1+x)} > (1-x)$})\\
& > (1-\epsilon)^2 (1-O(n)\epsilon) \left[(n+n(1-\alpha_K))(1-\alpha_1) + \frac{2\alpha_K}{n} - \epsilon O(n^2) (1+O(n)\epsilon) \right] \\
& \quad \quad (\text{Using $p(b^1_{2i-1})> (1-\alpha_1)$ and $\tfrac{1}{(1-x)} < (1+2x)$})\\
& \ge (1-O(n)\epsilon)\left[(n+n(1-\alpha_K)) (1-\alpha_1 + \frac{\alpha_K}{n^2}\right] +(1-O(n)\epsilon) \left[\frac{\alpha_K}{n} -\epsilon O(n^2)(1+O(n)\epsilon)\right] \\
& > (1-O(n)\epsilon)(n+n(1-\alpha_K)(1-\alpha_1+n\alpha_1) \ \ (\text{as $\epsilon < \frac{\alpha_1}{4}\le \frac{\alpha_K}{4n^3}$})\\
& \ge (n+n(1-\alpha_K)) (1+\alpha_1) + (n+n(1-\alpha_K)) (1+(n-2)\alpha_1)\\
& \hspace{4cm} - O(n) \epsilon (n+n(1-\alpha_K)(1+(n-1)\alpha_1)\\
& > (n+n(1-\alpha_K)) (1+\alpha_1)
\end{align*}
}
The above implies $p(b^1_{2i}) > (1+\alpha_1)$ as aimed, which is a contradiction to $p(b^1_{2i}) \le (1+\alpha_1)$. Therefore, agents with $(1-\alpha_1)$ disutility towards $b^1_{2i}$ must be consuming both $b^1_{2i}$ and $b^1_{2i-1}$. This is possible only if $\frac{p(b^1_{2i-1})}{p(b^1_{2i})} = \frac{1+\alpha_1}{1-\alpha_1}$, implying $\frac{p(b^K_{2i-1})}{p(b^K_{2i})} = \frac{1-\alpha_K}{1+\alpha_K}$ by Lemma \ref{rev-ratio-amplification2} and thereby $p(b^K_{2i-1})=(1 -\alpha_K)\frac{\pi^K_i}{2}$. Replacing this in the expression for $x_{2i-1}$ we get,

\begin{align*}
x_{2i-1} &=\frac{2p(b^K_{2i}) - (1-\alpha_K)\pi^K_i}{2 \cdot \pi^K_i \cdot \alpha_K}\\
& = \frac{\pi^K_i(1- \alpha_K) - (1-\alpha_K)\pi^K_i}{2 \cdot \alpha_K}\\
&=0.
\end{align*}
A very similar argument will show that when $x^T \cdot \M_{*,{2i-1}} > x^T \cdot \M_{*,2i} + \tfrac{1}{n}$, then $ x_{2i} = 0$. Thus, $x = \langle x_1, x_2, \dots , x_n \rangle$ is a desired Nash equilibrium strategy vector for the polymatrix game $I$.  
\end{proof}

Note that, since $\epsilon=\frac{\alpha_1}{200 n}$, we have $\epsilon=\frac{1}{200 n^{3c+1}}=\frac{1}{poly(n)}$. Thus, this immediately implies the main result of this section, where the sufficiency conditions to guarantee existence of CE are of Theorem~\ref{sufficientcondition}.

\begin{theorem}
	\label{PPADhard}
	Finding $(1-\frac{1}{poly(n)})$ approximate CE for chore division in the exchange model is PPAD-hard even for instances satisfying conditions $SC_1$ and $SC_2$ of Theorem~\ref{sufficientcondition}. 
\end{theorem}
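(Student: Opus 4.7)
The plan is to exhibit a polynomial-time reduction from Normalized Polymatrix Game (Problem~\ref{prob:ne}), which is PPAD-hard by~\cite{chen2017complexity}, to the problem of computing a $(1-\tfrac{1}{\text{poly}(n)})$-approximate CE in an exchange chore division instance satisfying $SC_1$ and $SC_2$. Given an instance $I=\langle\M\rangle$ with $n$ players, I would construct $E(I)$ along the lines sketched in Section~\ref{mainres3}: $K=\Theta(\log n)$ amplification layers of chore-pairs $B_k=\{b^k_1,\ldots,b^k_{2n}\}$, together with the agent sets $A_k$ (the \emph{matrix-encoding} agents $a^K_{i,j}$ carrying endowment $\M_{i,j}$; the \emph{fixed-earning} agents $a'_i$; the per-layer \emph{amplifier} agents $\A^k_i$; and the per-layer \emph{carrier} agents $a^k_i$). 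The parameters $\alpha_1=n^{-3c}$, $\alpha_{k+1}=\tfrac{3}{2}\alpha_k$, $\delta_k=\tfrac{n}{2}\alpha_k$ are chosen so that the amplification from layer $1$ to layer $K$ scales the ratio deviation by a factor $\gg n$, i.e., $\alpha_K \gg n\cdot \alpha_1$ (Claim~\ref{alpha-technical}), while leaving $\alpha_K$ still tiny enough for the extraction formula to work. I would set $\epsilon=\alpha_1/(200n)=1/\text{poly}(n)$.

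Next I would verify that $E(I)$ satisfies both $SC_1$ and $SC_2$. The disutility graph is, by design, a disjoint union of the bicliques $\DG^k_i$ so $SC_2$ is immediate. For $SC_1$ I would introduce the auxiliary economy graph of components $W$, show via the explicit endowment pattern that $W$ is strongly connected (each $\DG^{k+1}_i$ points to $\DG^k_i$ via $a^k_{2i-1},a^k_{2i}$, and each $\DG^1_i$ points to every $\DG^K_j$ via the matrix-encoders), and finally lift this to strong connectivity of the full economy graph using the biclique structure of the disutility components. Theorem~\ref{sufficientcondition} then guarantees that a CE (and hence a $(1-\epsilon)$-approximate CE) exists.

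The core of the proof is to show that any $(1-\epsilon)$-approximate CE of $E(I)$ satisfies (approximate forms of) the five key properties. I would tackle them in this order. (i) \emph{Pairwise equal endowments} is a direct computation using $\M_{j,2i-1}+\M_{j,2i}=1$. (ii) \emph{Price equality}: inside each $\DG^{k+1}_i$ for $k\ge 1$, market-clearing for the two chores of total supply $(n+\delta_{k+1})\pi^{k+1}_i$ against the demand $n\pi^k_i+\delta_{k+1}\pi^{k+1}_i$ of its three agents forces $\pi^{k+1}_i=(1\pm\epsilon)^2\pi^k_i$; a parallel balance in $\DG^1_i$ ties $n\pi^1_i$ to $\sum_j\pi^K_j$; iterating gives all $\pi^k_i$ equal up to $(1\pm\epsilon)^{O(n)}=(1\pm O(n\epsilon))$. (iii) \emph{Fixed earning} is then an immediate corollary for $a'_i$. (iv) \emph{Price regulation} is by contradiction: if $p(b^k_{2i-1})/p(b^k_{2i})$ fell outside $[\tfrac{1-\alpha_k}{1+\alpha_k},\tfrac{1+\alpha_k}{1-\alpha_k}]$ then every agent with finite disutility for these chores would refuse one of them, violating complete allocation.

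The \textbf{main obstacle} is property (v), \emph{reverse ratio amplification}: if $p(b^k_{2i-1})/p(b^k_{2i})$ is at one extreme, then $p(b^{k+1}_{2i-1})/p(b^{k+1}_{2i})$ is forced to the \emph{opposite} extreme with the amplified parameter $\alpha_{k+1}=\tfrac{3}{2}\alpha_k$. Assuming $p(b^k_{2i})=1+\alpha_k$ and $p(b^k_{2i-1})=1-\alpha_k$, agent $a^k_{2i}$ owns $n$ units of the expensive chore and must earn $\approx n(1+\alpha_k)$ from $\{b^{k+1}_{2i-1},b^{k+1}_{2i}\}$. I would argue by contradiction that she earns nothing from $b^{k+1}_{2i-1}$: in that case either $p(b^{k+1}_{2i})\ge p(b^{k+1}_{2i-1})$, so $\A^{k+1}_i$ also spends entirely on $b^{k+1}_{2i}$ and the combined demand $n(1+\alpha_k)+\pi^{k+1}_i\delta_{k+1}$ exceeds the supply $(n+\delta_{k+1})p(b^{k+1}_{2i})\le (n+\delta_{k+1})(1+\alpha_{k+1})(1+\epsilon)^2$; or $p(b^{k+1}_{2i})<p(b^{k+1}_{2i-1})$ so $p(b^{k+1}_{2i})<(1+\epsilon)^2$ and the demand $n(1+\alpha_k)$ alone already exceeds supply. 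The critical arithmetic is that $n(1+\alpha_k)>(n+\delta_{k+1})(1+\alpha_{k+1})$ translates to $\alpha_k>\alpha_{k+1}+\tfrac{\delta_{k+1}}{n}(1+\alpha_{k+1})$, which, with $\delta_{k+1}=\tfrac{n\alpha_{k+1}}{2}$, needs $\alpha_k>\tfrac{3}{2}\alpha_{k+1}(1+o(1))$ — precisely why the choice $\alpha_{k+1}=\tfrac{3}{2}\alpha_k$ makes the inequality fail with just enough slack to absorb the $O(n\epsilon)$ approximation errors (thus the choice $\epsilon=O(\alpha_1/n)$). Once $a^k_{2i}$ must earn positive money from $b^{k+1}_{2i-1}$, her MPB condition forces $p(b^{k+1}_{2i-1})/p(b^{k+1}_{2i})\ge\tfrac{1+\alpha_{k+1}}{1-\alpha_{k+1}}$, which combined with price regulation yields equality. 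Iterating through $K$ (even) layers gives the full amplification $b^1\to b^K$.

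Finally, I would define $x_i=(2p(b^K_i)-(1-\alpha_K)\pi^K_i)/(2\alpha_K\pi^K_i)$ and verify that $x$ is a valid Nash strategy profile using price equality and regulation, and that if $x^T\M_{*,2i}>x^T\M_{*,2i-1}+\tfrac{1}{n}$ then $x_{2i-1}=0$. The latter proceeds by yet another contradiction: translating the hypothesis into endowment money of the $A(b^1_{2i})$-agents via the fixed-earning formula, and showing that if $p(b^1_{2i-1})/p(b^1_{2i})<(1+\alpha_1)/(1-\alpha_1)$ were to hold, these agents would have to earn their money exclusively from $b^1_{2i}$, driving $p(b^1_{2i})>1+\alpha_1$, a contradiction to price regulation at layer $1$; hence the ratio is pinned at the extreme, and reverse ratio amplification then pins $p(b^K_{2i-1})=(1-\alpha_K)\pi^K_i/2$, i.e., $x_{2i-1}=0$. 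The symmetric case is identical. Since $\epsilon=\Theta(1/n^{3c+1})=1/\text{poly}(n)$, the reduction is polynomial and establishes PPAD-hardness.
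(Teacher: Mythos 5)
Your proposal reproduces the paper's proof essentially step for step: the same reduction from Normalized Polymatrix Games, the same $K=\Theta(\log n)$-layer amplification gadget with $\alpha_{k+1}=\tfrac{3}{2}\alpha_k$, $\delta_k=\tfrac{n}{2}\alpha_k$ and $\epsilon=\alpha_1/(200n)$, the same verification of $SC_1$, $SC_2$ via the economy graph of components, the same five CE-properties established in the same order, and the same extraction formula for the Nash strategy vector. The only hiccup is the ``critical arithmetic'' sentence in your reverse-ratio-amplification discussion: the comparison $n(1+\alpha_k)>(n+\delta_{k+1})(1+\alpha_{k+1})$ is not the inequality either case actually needs (it is indeed false), but your two cases as stated \emph{do} yield the contradiction --- Case~1 adds the $\approx \pi^{k+1}_i\delta_{k+1}\approx n\alpha_{k+1}$ demand from $\A^{k+1}_i$, and Case~2 caps the supply price at $\le(1+\epsilon)^2$ rather than $1+\alpha_{k+1}$, so in both cases demand exceeds supply with margin $\approx n\alpha_{k+1}/6$, which is the slack that absorbs the $O(n\epsilon)$ approximation errors.
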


\begin{proof}
	We bring all the points together. Normalized polymatrix game is PPAD-hard~\cite{chen2017complexity}. Given an instance $I$ of the normalized polymatrix game, in polynomial time we can determine the instance $E(I)$. $E(I)$ satisfies the sufficiency conditions $SC_1$ and $SC_2$ of Theorem~\ref{sufficientcondition} and therefore admits an exact as well as approximate CE. Given the prices at a $(1-\frac{1}{poly(n)})$-approximate CE of $E(I)$, in polynomial time we can determine the equilibrium strategy vector for the polymatrix game by Lemma~\ref{polymatrix-reduction}. Therefore, chore division is PPAD-hard even on instances that satisfy $SC_1$ and $SC_2$ of Theorem~\ref{sufficientcondition} sufficient to guarantee existence of CE.
\end{proof}


\newcommand{\etalchar}[1]{$^{#1}$}

\end{document}